\theoremstyle{thmstyleone}%
\newtheorem{theorem}{Theorem}%  meant for continuous numbers
\newtheorem{proposition}[theorem]{Proposition}% 
\newtheorem{corollary}[theorem]{Corollary}% 
\theoremstyle{thmstyletwo}%
\newtheorem{example}{Example}%
\theoremstyle{thmstylethree}%
\newtheorem{definition}{Definition}%
\newif\ifcategories
\begin{document}

\title[Deontic Action Logics. A Modular Algebraic Perspective]{\centering
  Deontic Action Logics: \\ A Modular Algebraic Perspective}
	% \begin{tabular}{c}Deontic Action Logics\\ {\large A Modular Algebraic Perspective}\end{tabular}}

%%=============================================================%%
%% GivenName	-> \fnm{Joergen W.}
%% Particle	-> \spfx{van der} -> surname prefix
%% FamilyName	-> \sur{Ploeg}
%% Suffix	-> \sfx{IV}
%% \author*[1,2]{\fnm{Joergen W.} \spfx{van der} \sur{Ploeg} 
%%  \sfx{IV}}\email{iauthor@gmail.com}
%%=============================================================%%

\author*[1,2]{\fnm{Carlos} \sur{Areces}}\email{carlos.areces@unc.edu.ar}
\author*[1,3]{\fnm{Valentin} \sur{Cassano}}\email{valentin@dc.exa.unrc.edu.ar}
\author*[1,3]{\fnm{Pablo} \sur{Castro}}\email{pcastro@dc.exa.unrc.edu.ar}
\author*[1,2]{\fnm{Raul} \sur{Fervari}}\email{rfervari@unc.edu.ar}
% \equalcont{These authors contributed equally to this work.}

\affil[1]{
  % \orgdiv{Department},
  \orgname{Consejo Nacional de Investigaciones Cient\'ificas y T\'ecnicas},
  \orgaddress{
    % \street{Street},
    % \city{C\'ordoba},
    % \postcode{5000},
    % \state{C\'ordoba},
    \country{Argentina}}}

\affil[2]{
  % \orgdiv{Department},
  \orgname{Universidad Nacional de C\'ordoba},
  \orgaddress{
    % \street{Street},
    % \city{C\'ordoba},
    % \postcode{5000},
    % \state{C\'ordoba},
    \country{Argentina}}}
  
\affil[3]{
  % \orgdiv{Department},
  \orgname{Universidad Nacional de R\'io Cuarto},
  \orgaddress{
    % \street{Street},
    % \city{C\'ordoba},
    % \postcode{5000},
    % \state{C\'ordoba},
    \country{Argentina}}}

%%==================================%%
%%            abstract              %%
%%==================================%%

% \begin{abstract}
\abstract{
	In a seminal work, K. Segerberg introduced a deontic logic called \DAL to investigate normative reasoning over actions. \DAL marked the beginning of a new area of research in Deontic Logic by shifting the focus from deontic operators on propositions to deontic operators on actions.
	In this work, we revisit \DAL and provide a complete algebraization for it. In our algebraization we introduce deontic action algebras --algebraic structures consisting of a Boolean algebra for interpreting actions, a Boolean algebra for interpreting formulas, and two mappings from one Boolean algebra to the other interpreting the deontic concepts of permission and prohibition.
	We elaborate on how the framework underpinning deontic action algebras enables the derivation of different deontic action logics by removing or imposing additional conditions over either of the Boolean algebras. We leverage this flexibility to demonstrate how we can capture in this framework several logics in the \DAL family.
	Furthermore, we introduce four variations of \DAL by:
	(a) enriching the algebra of formulas with propositions on states,
	(b) adopting a Heyting algebra for state propositions,
	(c) adopting a Heyting algebra for actions, and
	(d) adopting Heyting algebras for both.
	We illustrate these new deontic action logics with examples and establish their algebraic completeness.
}
% \end{abstract}

%%% Local Variables:
%%% mode: latex
%%% TeX-master: "article"
%%% End:

\keywords{Deontic Action Logic, Algebraic Logic, Normative Reasoning.}

%%\pacs[JEL Classification]{D8, H51}
%%\pacs[MSC Classification]{35A01, 65L10, 65L12, 65L20, 65L70}

\maketitle

%%%%%  THE BODY OF THE PAPER

%\tableofcontents

\section{Introduction}\label{section:introduction}

The logical laws of normative reasoning have attracted the attention of philosophers, lawyers, logicians, and computer scientists since the beginning of their disciplines~\cite{Gabbay:2013}.
The first modern deontic systems go back to the pioneer
works of von Wright~\cite{vonWright:1951},
Kalinowski~\cite{Kalinowski53}, and Becker~\cite{Becker52}.
These first systems, that were conceived on the view that prescriptions apply to actions, were swiftly overtaken by modal approaches where prescriptions applied to propositions~\cite{blac:moda00}.
Indeed, it can be said that the influence of Modal Logic marked a crossroads at the very beginning of Deontic Logic.
One path leads to the so-called \emph{ought-to-be} deontic systems, where prescriptions apply to propositions; the other path leads to the so-called \emph{ought-to-do} deontic systems, where prescriptions apply to actions.

% The most prominent outcome of the propositional modal school is \emph{Standard Deontic Logic} (\SDL).

% \SDL is the most easily recognized deontic logic in the literature~\cite{Aqvist:2002}.
% Yet, its prominent status is not without scrutiny. \SDL is known to admit paradoxes (theorems in the logic that are intuitively invalid~\cite{Aqvist:2002,Meyer:1994}).
% These paradoxes are countered with alternative, or refined, deontic systems roughly labeled as \emph{ought-to-be} or \emph{ought-to-do}.
% The former encompass a view in which prescriptions apply to state of affairs.
% %Formally, they rely mostly on variants of \SDL, and, more generally, on modal logics.
% The later retake von~Wright's maxim of building deontic logics upon a theory
% of actions~\cite{vonWright:1951,vonWright:1999}.

The most easily recognized representative of the ought-to-be systems is \emph{Standard Deontic Logic} (\SDL)~\cite{Aqvist:2002}.
\SDL extends the normal modal system \textsf{K} with an axiom for \emph{seriality}.
In \SDL, the ``box'' modality, written $\mathsf{O}\varphi$, is informally read as ``$\varphi$ is obligatory''.
Resorting to this modality, we can formally discuss the implications of how to understand obligations.
For instance, if $w$ formalizes the proposition ``John is behind the steering wheel'', and $i$ formalizes the proposition ``John is intoxicated'', we can explore which one of $w \to \mathsf{O}(\lnot i)$, $\mathsf{O}(w \to \lnot i)$, or $\mathsf{O}(w \land \lnot i)$, better captures the proposition ``it is obligatory that John is not intoxicated while being behind the steering wheel''.
This kind of propositions are discussed at length in the literature on Deontic Logic and riddled with challenges and paradoxes~\cite{Aqvist:2002}.

The counterpart to \SDL for ought-to-do systems is, arguably, the \emph{Deontic Action Logic} (\DAL) proposed by Segerberg in~\cite{Segerberg1982}.
The language of \DAL distinguishes between actions and formulas. Actions are built up from basic action names using action operators.
Then, deontic connectives $\perm$ of permission and $\forb$ of prohibition are applied to actions to yield formulas that can be combined using logical connectives.
Let us illustrate this by means of a simple example.  
% Let $\mathsf{driving}$ and $\mathsf{drinking}$ be basic action symbols; the formula $\lnot \perm(\mathsf{driving} \sqcap \mathsf{drinking})$ asserts
% that drinking while driving is not permitted.  In this formula,
% $\sqcap$ is an action operator that can be understood as parallel
% execution; $\perm$ stands for the deontic connective of
% permission, and $\lnot$ is standard logical negation.

\medskip

\begin{example}
Let $\mathsf{driving}$ and $\mathsf{drinking}$ be basic action symbols indicating the \emph{acts} of driving, and drinking, respectively.
Moreover, let $\mathsf{driving} \sqcap \mathsf{drinking}$ be understood as the parallel composition of the actions $\mathsf{driving}$ and $\mathsf{drinking}$.
Then, $\forb(\mathsf{driving} \sqcap \mathsf{drinking})$ is a formula intuitively taken to assert
that drinking while driving is forbidden. %It is worth noticing that if $\forb(\mathsf{driving} \sqcap \mathsf{drinking})$ is uphold, neither $\mathsf{driving}$ or $\mathsf{drinking}$
%would be individually forbidden. 
%Thus, a logical framework like \DAL provides an adequate  tool to deal with this kind of situations involving normative states. 
\end{example}

\medskip

The work of Segerberg in \cite{Segerberg1982} shifted the focus from deontic operators on propositions to deontic operators on actions. 
Moreover, it gave rise to what nowadays can be construed as a family of deontic action logics~\cite{Maibaum87,Segerberg1982,Meyer:1994,Castro:2009,BroersenThesis,Trypuz:2010,Trypuz15,Prisacariu:2012}.
In addition to being interesting from a purely logical perspective, the logics in this family are good canditates as formalisms to describe the behavior of real world systems.
For instance, in~\cite{Castro:2009,DemasiCRMA15}, a variant of \DAL is used to reason about fault-tolerance. 
Therein, actions formalize changes of state in a system, permitted actions indicate the normal behavior of the system, while forbidden actions are used to model the faulty behavior of the system.
This classification of actions paves the way to reasoning about how to react in response to faults.
In this setting, we can, for instance, understand a formula such as $\forb(\mathsf{read} \sqcap \mathsf{write})$ as prescribing the behavior of a system by indicating that it is forbidden to simultaneously read and write from a memory location, as this could lead to the system being in an inconsistent state.
The falsehood of this formula in a particular scenario serves as an indication of faulty behavior, indicating the necessity of fault-tolerant mechanisms to safeguard the normal operation of the system.

%\carlos{Tiene sentido agregar aca algun ejemplo de este paper, como ilustracion mas compleja de DAL?}

% \DAL is extremely simple and admits a sound and complete axiomatization.  One of the most interesting features of \DAL is its
% two tier interpretation structure, i.e., actions are interpreted
% via an algebra of events, whereas formulas are interpreted
% using truth values.
% \DAL was revisited by
% other authors, for instance:~\cite{Castro:2009} introduces action
% prescriptions and combines them with modal operators,
% and~\cite{Trypuz:2010} investigates several fragments of \DAL.  We use the terminology in \cite{Trypuz:2010} and
% call all these formalisms \emph{Deontic Action Logics}.

\medskip\noindent
\textbf{Proposal.}
We take \DAL as the starting point to investigate the construction of deontic action logics in an algebraic framework.
% The development of algebraic logic during the 20th century established a strong connection between logic and techniques and results from algebra (see, e.g.,~\cite{DunnH:2001}).
To this end, we build on an earlier work where we develop an abstract view of \DAL resorting to algebraic structures called \emph{deontic action algebras}~\cite{CCFA:2021}.
% Therein, we followed standard ideas presented by:
%     Halmos in~\cite{Halmos:1998} of using Boolean algebras as an abstraction of propositions,
%     Venema in~\cite{Venema:2001} of using Boolean algebras with operators as an algebraic counterpart of modal logics, and
%     Pratt in~\cite{Pratt:1991} of using dynamic algebras to investigate theoretical properties of propositional dynamic logic via many-sorted algebras.
% In our algebraic treatment of \DAL we introduced \emph{deontic action algebras}.
In brief, a deontic action algebra consists of a Boolean algebra for interpreting actions, a Boolean algebra for interpreting formulas, and two mappings from one Boolean algebra to the other interpreting the deontic concepts of permission and prohibition.
We use deontic action algebras to formally interpret the two tier structure of the language of \DAL.
An interesting feature of this algebraic treatment of \DAL is that it is modular, giving rise to natural variations.
We explore this modularity by elaborating of how to capture various logics in the \DAL family.

% In this paper we investigate Deontic Action Logics from the perspective of algebraic logic. The development of algebraic logic during the 20th century
% established a strong connection between logic and techniques and results from algebra (see, e.g.,~\cite{DunnH:2001}).  Taking this as an  inspiration,
%  we develop an abstract view of deontic action
% logics in terms of algebraic structures.  We follow some
% of the standard ideas, presented, e.g., by Halmos in~\cite{Halmos:1998}, of using
% Boolean algebras as an abstraction of propositions; Venema
% in~\cite{Venema:2001}, who introduced Boolean algebras with operators
% as an algebraic counterpart of modal logics; and Pratt
% in~\cite{Pratt:1991}, who introduced dynamic algebras to investigate
%  theoretical properties of dynamic logics via many-sorted
% algebras. In particular, in our algebraic treatment of \DAL, we introduce what we refer to as \emph{Deontic Action Algebras}.
% In brief, a deontic action algebra is a pair of two Boolean algebras, one in which we interpret formulas, the other in which we interpret actions; while deontic operators are
% modeled as functions connecting both algebras.
% An interesting feature of our algebraic treatment of \DAL is that it is modular.  By replacing one or both of the algebras forming a deontic action algebra pair  we obtain new formalisms with different flavors.

\medskip\noindent
\textbf{Contributions.} %We extend the work in \cite{CCFA:2021} on deontic action algebras as an algebraic semantics for \DAL.
%In particular, we provide detailed soundness and completeness proofs.
%Moreover, we elaborate on the connection between the algebraic semantics for \DAL and its model based semantics via a Stone-type representation result.
This paper continues and extends our work in \cite{CCFA:2021}. First, we revisit the algebraic framework of deontic action algebras, provide detailed soundness and completeness proofs, and add motivating examples. 
Second, we present a series of deontic action logics which exploit the modularity of the algebraic framework.
We begin by enriching the algebra for formulas with propositions to describe \emph{states of affairs}.  In this way, we can express both properties of actions, and propositions (e.g., pre- and post-conditions) about the states in which this actions take place. 
Then, we discuss the result of replacing the Boolean algebra interpreting formulas by a Heyting algebra.
The resulting extension of \DAL can deal with scenarios in which laws like the excluded middle or contraposition might be rejected.
This could be the case, for example, in normative systems in which evidence is required in order to accept some assertions as true.
In turn,  we consider using a Heyting algebra to interpret actions. We argue that this would be useful, e.g., when actions are associated to constructions witnessing their \emph{realizability}.
This admits a direct analogy with the standard interpretation of intuitionistic logic in which the concept of truth is associated to that of proof.
Clearly, we can do both at the same time, obtaining a fully intuitionistic deontic action logic were both actions and formulas are interpreted using Heyting algebras. 
In all cases, we obtain axiom systems that are sound and complete for the corresponding classes of deontic action algebras.

Algebraic logic has been shown useful for analyzing theoretical properties of logics and for investigating relations between different logics~\cite{ras1963,Andreka1991-ANDAL-2}. We believe that the case of \DAL presents a particularly simple and elegant instance of this framework. 

%We put forth that this algebraic treatment of \DAL is also particularly well suited to study deontic actions logics from a dynamic perspective, \emph{\'a la} Public Announcement Logic~\cite{Plaza2007}.
%In this respect, we notice that the algebraic semantics of deontic operators defines a restriction of the underlying algebra (corresponding to an ideal). This bears a resemblance to model update operators as in Public Announcement Logic, designed specifically to describe changes in a system and their effects.

\ifcategories
Furthermore,  in \Cref{sec:cat} we extend our algebraic view of \DAL to category theory, defining the corresponding category of \DAL algebras and investigating some of its properties.  We prove that these categories are cocomplete,  as a consequence we can use colimits to construct complex algebras from simpler ones,  facilitating the modular reasoning over \DAL algebras.  This is a common practice, for instance, in  logical theories, graphs,  software specfications, etc, see \cite{Goguen92} for some examples.  We also prove an extended version of Stone duality for the introduced categories, showing that the introduced algebras can be also seen as  topological spaces. These results also hold for the other algebras introduced in this paper, as remarked in the aforementioned section.
\fi

\medskip\noindent
\textbf{Structure of the Paper.} \Cref{section:dal} introduces basic definitions, and Segerberg's deontic action logic \DAL.  \Cref{sec:algebraic-char} presents the basic algebraic framework,
and prove soundness and completeness for \DAL
using standard algebraic tools. Variations of \DAL are investigated in~\Cref{section:new:dals}.
\Cref{section:conclusion} offers some final remarks and discusses
future work.

\section{Deontic Action Logic}\label{section:dal}

In this section, we cover the language, semantics, and axiomatization of Segerberg's deontic action logic \DAL~\cite{Segerberg1982}.  We also state a soundness and completeness result for future reference.

\paragraph{Language.}\label{section:dal:syntax}

The language of {\DAL} consists of \emph{actions} and \emph{formulas}.
Actions, indicated $\alpha$, $\beta$, $\gamma$, \dots, are built on a countable set $\bact = \set{\mathsf{a}_i}{i\in \Nat_0}$ of basic action symbols according to the following grammar:
\begin{align}
		\alpha,\beta & ::=
				\mathsf{a}_i
			\mid
				{\alpha \sqcup \beta}
			\mid
				{\alpha \sqcap \beta}
			\mid
				{\bar{\alpha}}
			\mid
				\mathsf{0}
			\mid
				\mathsf{1}. \label{definition:actions}
\end{align}%
We use $\act$ to indicate the set of all actions.
Formulas of \DAL, indicated $\varphi$, $\psi$, $\chi$, \dots, are built on the set $\act$ according to the following grammar:
\begin{align}
		\varphi,\psi & ::=
				{\alpha = \beta} \mid
				\perm(\alpha)
				\mid
				\forb(\alpha)
			% \mid
			% 	{\varphi \to \psi}
			\mid
				{\varphi \lor \psi}
			\mid
				{\varphi \land \psi}
			\mid
				{\lnot \varphi}
			\mid
				\bot
			\mid
				\top.
			\label{definition:formulas}
\end{align}%
We use $\form$ to indicate the set of all formulas of \DAL.
Intuitively, action symbols $\mathsf{a} \in \bact$ indicate a \emph{basic} action; actions ${\alpha \sqcup \beta}$ indicate the \emph{free-choice} composition of $\alpha$ and $\beta$; actions ${\alpha \sqcap \beta}$ indicate the \emph{parallel} composition of $\alpha$ and $\beta$; and $\bar{\alpha}$ indicate complement of $\alpha$. Finally, $\mathsf{0}$ and $\mathsf{1}$ indicate the \emph{impossible} and the \emph{universal} actions, respectively.
Turning to formulas, $\alpha = \beta$ indicates that $\alpha$ and $\beta$ are the same actions;  $\perm(\alpha)$ is read as $\alpha$ is \emph{permitted};
and $\forb(\alpha)$ is read as $\alpha$ is \emph{forbidden}.
Formulas built using $\land$, $\lor$, and $\lnot$, as well as $\bot$ and $\top$, have their standard interpretation.
We use $\varphi \to \psi$ as an abbreviation for $\lnot \varphi \lor \psi$, and $\varphi \liff \psi$ as an abbreviation for $(\varphi \to \psi) \land (\psi \to \varphi)$.
% Analogous to $\to$, we use $\alpha \sqsubset \beta$ as an abbreviation for $\bar{\alpha} \sqcup \beta$.
% Finally, we use $\obl(\alpha)$, read as $\alpha$ is \emph{obligatory}, as an abbreviation of $\forb(\bar{\alpha})$.

\medskip 

In \Cref{ex:syntax}, we present some actions and formulas of \DAL along with their intuitive interpretations.

\medskip

\begin{example}\label{ex:syntax}
Let $\mathsf{parking}$, $\mathsf{drinking}$, and $\mathsf{driving}$ be basic actions in $\bact$. Then:

\medskip
\begin{itemize}
	\item $\overline{\mathsf{parking}} = \mathsf{driving}$ asserts that `parking is the complement of (actively) driving';
	\item $\forb(\mathsf{drinking} \sqcap \mathsf{driving})$ asserting that `drinking while driving is forbidden';
	% \item $\neg\perm(\mathsf{teach \sqcup drink}$ stands for `it is nor permitted to teach while drinking';
	% \item ${{\overline{\mathsf{parking}} = \mathsf{driving}} \land \forb(\mathsf{drinking} \sqcap \mathsf{driving})} \to \forb(\overline{\mathsf{parking}} \sqcap \mathsf{drinking})$ intuitively says that since parking is the complement of driving, and drinking while driving is forbidden, then it is also forbidden to do anything other than parking while driving.
	\item $\forb(\mathsf{drinking} \sqcap \mathsf{driving}) \land \lnot\perm(\mathsf{drinking} \sqcap \mathsf{parking})$ asserting that `drinking while driving is forbidden' and that `it is not permitted to park while driving' either; indicating that operating a vehicle while drinking breaks the law.
	%\carlos{Me confundio este ultimo ejemplo.  Charlar. }
\end{itemize}
\end{example}

%%% Local Variables:
%%% mode: latex
%%% TeX-master: "article"
%%% End:

\paragraph{Semantics.}\label{section:dal:semantics}

The semantics for \DAL is given over deontic action models.
A deontic action model is a tuple $\DeonticModel = \langle E, P, F \rangle$ where: $E$ (the domain) is a (possibly empty) set of elements; and $P$ and $F$ are disjoint subsets of $E$ (i.e., ${P \cup F} \subseteq E$ and ${P \cap F} = \emptyset$).
Intuitively, $E$ indicates realizations of actions, $P$ and $F$ are sets of permitted and forbidden realizations of actions.
The disjointness condition on $P$ and $F$ indicates that permitted realizations of actions are not forbidden, and vice versa, that forbidden realizations are not permitted.
Given a model $\DeonticModel = \tup{E,P,F}$, a \emph{valuation} on $\DeonticModel$ is a function $v: \bact \rightarrow 2^E$.
Intuitively, a valuation indicates a particular way of realizing actions.

\medskip
\begin{proposition}
   For every deontic action model $\DeonticModel = \tup{E,P,F}$, and any valuation $v: {\bact \to 2^E}$ on $\DeonticModel$, there is a unique $v^{*} : {\act \rightarrow 2^E}$ s.t.:
   \begin{align*}
            v^{*}(\alpha \sqcup \beta)
               &=
               {v^{*}(\alpha)
               \cup
               v^{*}(\beta)}
         &
         v^{*}(\bar{\alpha})
               &=
               {E \setminus v^{*}(\alpha)}
         &
         v^{*}(0)
               &=  \emptyset
         \\
            v^{*}(\alpha \sqcap \beta)
               &=
               {v^{*}(\alpha)
               \cap
               v^{*}(\beta)}
         &&&
            v^{*}(1)
               &=  E.
   \end{align*}%
\end{proposition}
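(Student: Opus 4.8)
The plan is the standard one for extending a map on generators to a homomorphism on a freely generated algebra, done here by structural induction on actions. First I would note that, by the grammar in~\eqref{definition:actions}, the set $\act$ is the absolutely free algebra generated by $\bact$ over the signature $\{\sqcup,\sqcap,\bar{\cdot},\mathsf{0},\mathsf{1}\}$: every action has a unique parse according to the grammar, so recursion and induction on the structure of actions are legitimate. (I would also make explicit that $v^{*}$ is required to extend $v$, i.e.\ $v^{*}(\mathsf{a}_i) = v(\mathsf{a}_i)$ for every $\mathsf{a}_i \in \bact$; without this clause uniqueness fails, since the displayed equations say nothing about basic actions.)

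For \emph{existence}, I would define $v^{*}\colon \act \to 2^{E}$ by recursion on the structure of $\alpha$: set $v^{*}(\mathsf{a}_i) = v(\mathsf{a}_i)$, $v^{*}(\mathsf{0}) = \emptyset$, $v^{*}(\mathsf{1}) = E$, and in the inductive steps $v^{*}(\alpha \sqcup \beta) = v^{*}(\alpha) \cup v^{*}(\beta)$, $v^{*}(\alpha \sqcap \beta) = v^{*}(\alpha) \cap v^{*}(\beta)$, and $v^{*}(\bar{\alpha}) = E \setminus v^{*}(\alpha)$. Unique readability guarantees this is a well-defined total function, and the five displayed equations hold by construction.

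For \emph{uniqueness}, I would assume $v_1^{*}$ and $v_2^{*}$ both extend $v$ and satisfy the displayed equations, and prove $v_1^{*}(\alpha) = v_2^{*}(\alpha)$ for all $\alpha \in \act$ by induction on the structure of $\alpha$. The base cases $\alpha = \mathsf{a}_i$, $\alpha = \mathsf{0}$, $\alpha = \mathsf{1}$ are immediate (the first from both extending $v$, the other two from the equations for $v^{*}(0)$ and $v^{*}(1)$); the inductive cases for $\sqcup$, $\sqcap$, and $\bar{\cdot}$ follow by applying the corresponding displayed equation to each of $v_1^{*}$ and $v_2^{*}$ and invoking the induction hypothesis on the immediate subactions.

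There is no real obstacle here: the whole argument is a routine structural induction. The only point deserving care is the appeal to unique readability of the action grammar, which is what makes the recursive definition of $v^{*}$ sound; everything else is mechanical verification against the five equations.
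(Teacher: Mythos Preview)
Your proposal is correct and is exactly the standard argument. The paper states this proposition without proof, treating it as a routine fact about term algebras, so there is no approach in the paper to compare against; your structural-induction argument is precisely what one would supply. Your remark that the clause $v^{*}(\mathsf{a}_i) = v(\mathsf{a}_i)$ is implicitly required for uniqueness is also correct and worth noting.
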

\medskip

The \emph{satisfiability} of a formula $\varphi$ on a deontic action model $\DeonticModel = \tup{E, P, F}$ under a
valuation $v$, written ${\DeonticModel, v} \Vdash \varphi$, is
defined inductively as:
\[
\begin{array}{rlcl}
   \DeonticModel, v & \Vdash \alpha=\beta
       & \mathrel{\mbox{ iff }} & 	v^{*}(\alpha) = v^{*}(\beta) \\
   \DeonticModel, v & \Vdash \perm(\alpha)
       & \mathrel{\mbox{ iff }} & v^{*}(\alpha) \subseteq P\\
   \DeonticModel, v & \Vdash \forb(\alpha)
       & \mathrel{\mbox{ iff }} & v^{*}(\alpha) \subseteq F\\
   %  \DeonticModel, v & \Vdash \varphi \to \psi
	%    & \mathrel{\mbox{ iff }} & \DeonticModel, v \nVdash \varphi
   %       \mbox{ or }      \DeonticModel, v  \Vdash \psi\\
   \DeonticModel, v & \Vdash \varphi \lor \psi
       & \mathrel{\mbox{ iff }} & \DeonticModel, v \Vdash \varphi
         \mbox{ or }      \DeonticModel, v \Vdash \psi\\
   \DeonticModel, v & \Vdash \varphi \land \psi
       & \mathrel{\mbox{ iff }} & \DeonticModel, v \Vdash \varphi
       \mbox{ and } \DeonticModel, v \Vdash \psi\\
   \DeonticModel, v & \Vdash \lnot \varphi
       & \mathrel{\mbox{ iff }} & \DeonticModel, v \nVdash \varphi\\
   \DeonticModel, v & \Vdash \bot
      &  & \mbox{never}\\
   \DeonticModel, v & \Vdash \top
      &  & \mbox{always.}\\
\end{array}
\]
We say that a formula $\varphi$ is a \emph{tautology} iff for any deontic action model $\DeonticModel$ and for any valuation $v$ on $\DeonticModel$, it follows that ${\DeonticModel, v} \Vdash \varphi$.

In \Cref{ex:semantics}, we present examples of deontic action models for the formulas in \Cref{ex:syntax}.

\medskip

\begin{example}\label{ex:semantics}
   \begin{figure}
      \centering
      \begin{minipage}{0.5\textwidth}
            \centering
            %\hspace*{-.5cm}
            \includegraphics[trim=50pt 0pt 50pt 0pt, clip, width=1\textwidth]{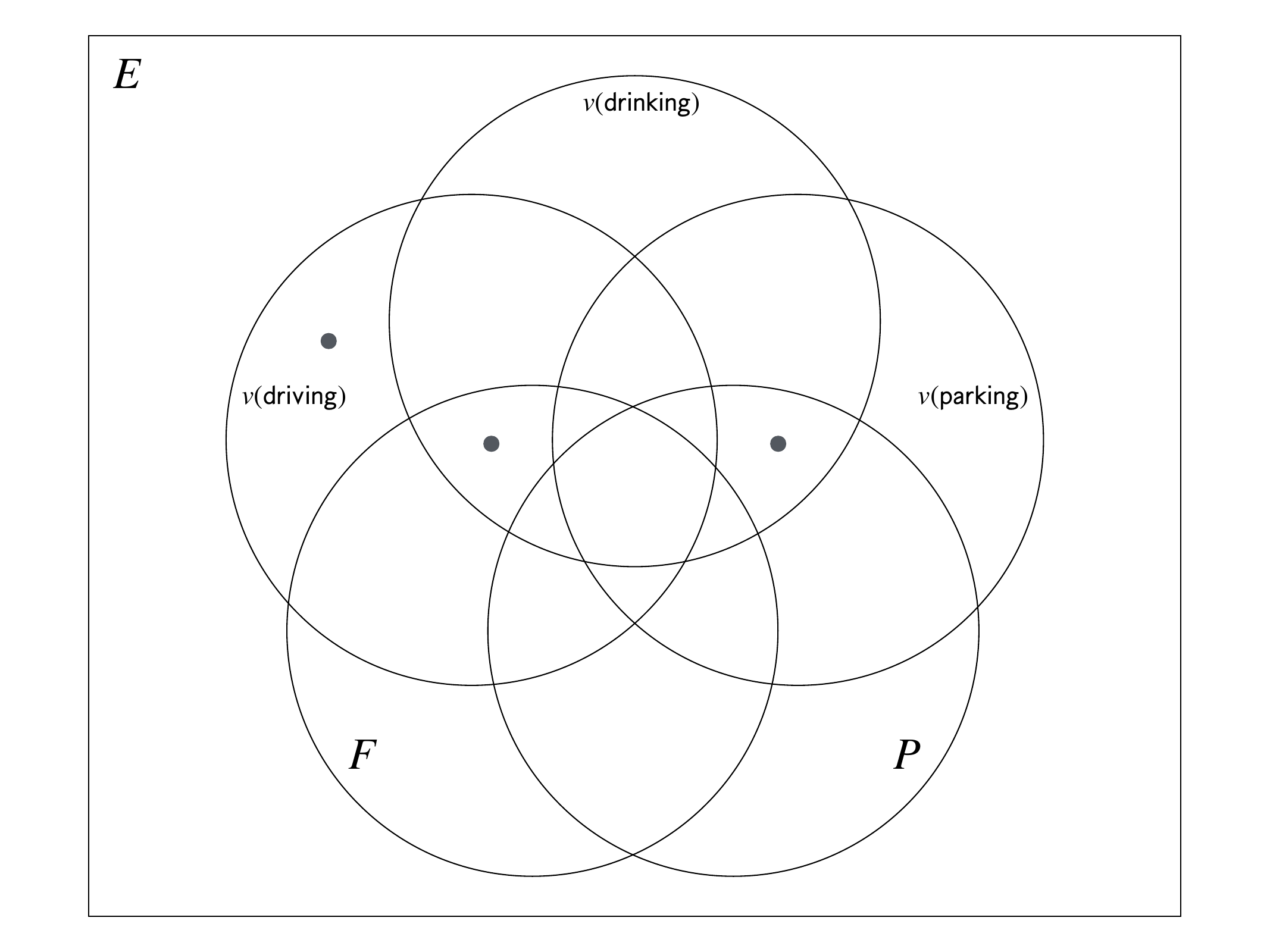}\\[1em] % first figure itself
            \caption{A Deontic Action Model.}\label{ex:deontic:model:a}
      \end{minipage}\hfill
      \begin{minipage}{0.5\textwidth}
            \centering
            %\hspace*{-.5cm}
            \includegraphics[trim=50pt 0pt 50pt 0pt, clip, width=1\textwidth]{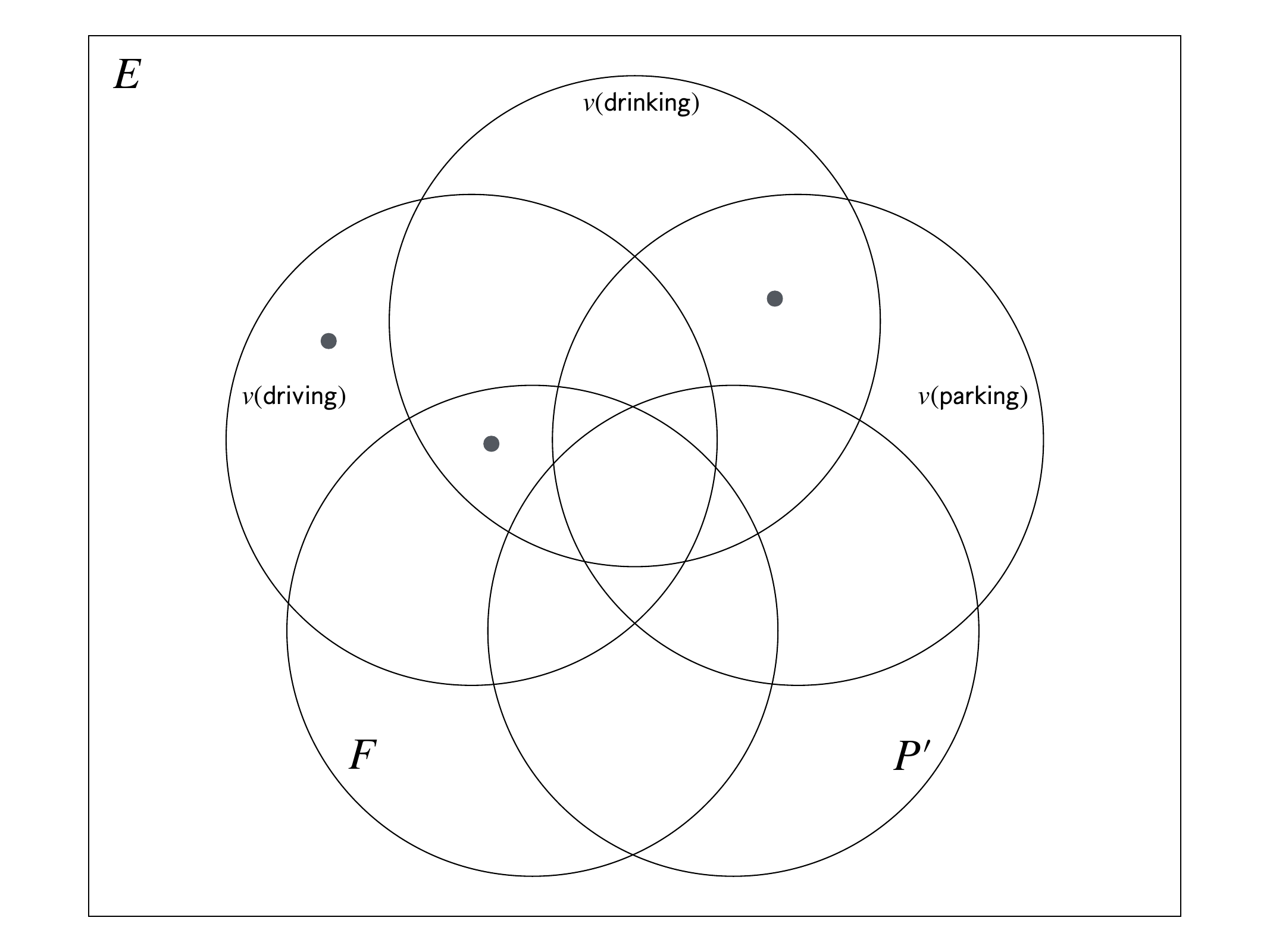}\\[1em] % second figure itself
            \caption{Anoter Deontic Action Model.}\label{ex:deontic:model:b}
      \end{minipage}
   \end{figure}
   Let $\DeonticModel = \tup{E,P,F}$ be the deontic action model in which $E$, $P$, and $F$ are as in \Cref{ex:deontic:model:a}.
   In addition, $\DeonticModel' = \tup{E,P',F}$ be the deontic action model in which $E$, $P'$, and $F$ are as in \Cref{ex:deontic:model:b}.
   Lastly, let $\{\mathsf{drinking}, \mathsf{driving}, \mathsf{parking}\} \subset \bact$, and $v: \bact \to 2^E$ be a valuation where $v(\mathsf{drinking})$, $v(\mathsf{driving})$, and $v(\mathsf{parking})$ are as in \Cref{ex:deontic:model:a}.
   Then:

   \begin{multicols}{2}
   \begin{enumerate}
      \item $\DeonticModel, v \Vdash \overline{\mathsf{parking}} = \mathsf{driving}$
      \item $\DeonticModel, v \Vdash \forb(\mathsf{drinking} \sqcap \mathsf{driving})$
      \item $\DeonticModel, v \Vdash \perm(\mathsf{drinking} \sqcap \mathsf{parking})$
      \item $\DeonticModel, v \nVdash \forb(\mathsf{drinking} \sqcap \mathsf{driving}) \land$
      \item[] \hspace{1.9cm}$\lnot\perm(\mathsf{drinking} \sqcap \mathsf{parking})$
      \item $\DeonticModel', v \Vdash \overline{\mathsf{parking}} = \mathsf{driving}$
      \item $\DeonticModel', v \Vdash \forb(\mathsf{drinking} \sqcap \mathsf{driving})$
      \item $\DeonticModel', v \nVdash \perm(\mathsf{drinking} \sqcap \mathsf{parking})$
      \item $\DeonticModel', v \Vdash \forb(\mathsf{drinking} \sqcap \mathsf{driving}) \land$
      \item[] \hspace{1.9cm}$\lnot\perm(\mathsf{drinking} \sqcap \mathsf{parking})$.
   \end{enumerate}
   \end{multicols}

   \noindent The models $\DeonticModel$ and $\DeonticModel'$ make clear, for example, that in \DAL, $\forb(\alpha)$ and $\lnot\perm(\alpha)$ are not equivalent.
   % Recall the formulas from~\Cref{ex:syntax}. Here we present a deontic action model $\DeonticModel$, where $F=\emptyset$ and $P=\{e_1,e_2\}$. Notice that $v(\mathsf{drink})=e_1$ and $v(\mathsf{teach}) = v(\mathsf{educate}) = e_2$. 

   % There, we can check that for instance, $\DeonticModel\Vdash\mathsf{teach} = \mathsf{educate}$ and that $\DeonticModel\Vdash \neg\perm(\mathsf{educate \sqcap drink})$. However, $\DeonticModel\nVdash\forb(\mathsf{teach \sqcap drink})$ (since $F=\emptyset$).
   % \medskip 

   % \begin{center}
   % \begin{tikzpicture}
   %    \draw[thick, rounded corners=8pt] (-1,0) rectangle (4,2);

   % % Points inside the square
   % \node at (0,1.8) {$P$}; 
   % \filldraw[black] (0.6,1) circle (2pt) node[below] {$\mathsf{drink}$} node[above] {$e_1$};
   % \filldraw[black] (2.4,1) circle (2pt) node[below] {$\begin{array}{l}\mathsf{teach} \\ \mathsf{educate}\end{array}$} node[above] {$e_2$};

   % % Add optional labels or grid (if needed for clarity)
   % % \draw[help lines] (0,0) grid (4,4); % Uncomment this for a grid

   % \end{tikzpicture}
   % \end{center}
\end{example}

%%% Local Variables:
%%% mode: latex
%%% TeX-master: "article"
%%% End:

\paragraph{Axiomatization.}

We present an axiomatic system for \DAL that differs  slightly from the one originally used in \cite{Segerberg1982}. This change is solely motivated by the fact that our presentation simplifies the axiomatization of the variations to \DAL that we introduce in the following sections.

We organize the presentation of the axiom system of \DAL in four groups of axioms as shown in 
\Cref{dal:axioms}. %(i.e., we assume uniform substitution by expressions of the proper type).
The axioms in the first group (A1--A13 and LEM) characterize operations on actions, and is inspired by the presentation of Boolean algebras via complemented distributive lattices in~\cite{Esakia:2019,Halmos:2009}.
The axioms in the second group do the same for propositional connectives on formulas (A1'--A13' and LEM').
%Since they are well-known, we leave the axioms in this group implicit.
%Its explicit formulation is obtained from the axioms for actions by replacing $\alpha$, $\beta$, and $\gamma$, for $\varphi$, $\psi$, and $\chi$; $\sqcup$, $\sqcap$, $\bar{~}$, $\mathsf{0}$, and $\mathsf{1}$, for $\lor$, $\land$, $\lnot$, $\bot$, and $\top$; and $=$ for $\liff$.
%To see that this group of axioms indeed captures propositional connectives on formulas we refer to~\cite{Mendelson:2015,Troelstra:1988}.
%
The axioms in the third group (E1 and E2) characterize equality.
Finally, the axioms the fourth group (D1--D3) characterize the deontic operators of permission $\perm$ and prohibition $\forb$.
%
%These four groups of axioms are summarized in \Cref{dal:axioms}.
%Therein, axioms A1--A13, and LEM correspond to axioms for operations on actions --and, adapted accordingly, to axioms for propositional connectives on formulas.
%In turn, axioms E1 and E2 are the axioms for equality.
%In axiom E2, we use $\varphi_{\alpha}^{\beta}$ to indicate the formula obtained by replacing some occurrences of $\alpha$ in $\varphi$ with $\beta$.
%Finally, axioms D1--D3 are the axioms for the deontic operators of permission and prohibition.
%The axioms D1--D3 appear first in~\cite{Segerberg1982}.

\begin{figure}
	\centering
	\fbox{
	\begin{minipage}{0.95\textwidth}
		\begin{multicols}{2}
			\begin{enumerate}[label=A\arabic*.]
				\item $\alpha \sqcap (\beta \sqcap \gamma) = (\alpha \sqcap \beta) \sqcap \gamma$
				\item $\alpha \sqcap \beta = \beta \sqcap \alpha$
				\item $\alpha \sqcap \alpha = \alpha$
				\item $\alpha \sqcap (\alpha \sqcup \beta) = \alpha$
				\item ${\alpha \sqcap (\beta \sqcup \gamma)} = {(\alpha \sqcap \beta) \sqcup (\alpha \sqcap \gamma)}$
				\item $\alpha \sqcap \mathsf{0} = \mathsf{0}$
				%\item $((\alpha \sqcap \beta) \sqsubseteq \alpha) \sqcap \gamma = \gamma$
				%\item $\alpha \sqsubseteq \mathsf{0} = \bar{\alpha}$
				\item $\alpha \sqcap \bar{\alpha} = \mathsf{0}$
				\item $\alpha \sqcup (\beta \sqcup \gamma) = (\alpha \sqcup \beta) \sqcup \gamma$
				\item $\alpha \sqcup \beta = \beta \sqcup \alpha$
				\item $\alpha \sqcup \alpha = \alpha$
				\item $\alpha \sqcup (\alpha \sqcap \beta) = \alpha$
				\item ${\alpha \sqcup (\beta \sqcap \gamma)} = {(\alpha \sqcup \beta) \sqcap (\alpha \sqcup \gamma)}$
				\item $\alpha \sqcup \mathsf{1} = \mathsf{1}$
				%\item $\alpha \sqcap (\alpha \sqsubseteq \beta) = \beta$
				%\item $\alpha \sqcap (\beta \sqsubseteq \gamma) = \alpha \sqcap ((\alpha \sqcap \beta) \sqsubseteq (\alpha \sqcap \gamma))$
				\item[LEM.] $\alpha \sqcup \bar{\alpha} = \mathsf{1}$
			\end{enumerate}
		\end{multicols}
		\ \\[-1.5cm]
		\begin{multicols}{2}
	\begin{enumerate}[label=A\arabic*'.]
		\item $\varphi \wedge (\psi \wedge \chi) \liff (\varphi \wedge \psi) \wedge \chi$
		\item $\varphi \wedge \psi \liff \psi \wedge \varphi$
		\item $\varphi \wedge \varphi \liff \varphi$
		\item $\varphi \wedge (\varphi \vee \psi) \liff \varphi$
		\item ${\varphi \wedge (\psi \vee \chi)} \liff {(\varphi \wedge \psi) \vee (\varphi \wedge \chi)}$
		\item $\varphi \wedge \bot \liff \bot$
		\item $\varphi \wedge \neg \varphi \liff \bot$
		\item $\varphi \vee (\psi \vee \chi) \liff (\varphi \vee \psi) \vee \chi$
		\item $\varphi \vee \psi \liff \psi \vee \varphi$
		\item $\varphi \vee \varphi \liff \varphi$
		\item $\varphi \vee (\varphi \wedge \psi) \liff \varphi$
		\item ${\varphi \vee (\psi \wedge \chi)} \liff {(\varphi \vee \psi) \wedge (\varphi \vee \chi)}$
		\item $\varphi \vee \top \liff \top$
		\item[LEM'.] $\varphi \vee \neg \varphi \liff \top$
	\end{enumerate}
\end{multicols}
		\ \\[-1.5cm]
		\begin{multicols}{2}
			\begin{enumerate}[label=E\arabic*.]
				\item $\alpha = \alpha$
				% \item ${\alpha = \beta} \to {\beta = \alpha}$
				% \item ${(\alpha = \beta \land \beta = \gamma)} \to {\alpha = \gamma}$
				\item $(\alpha=\beta \land \varphi) \to {\varphi_{\alpha}^{\beta}}$%\footnote{where $\varphi_{\alpha}^{\beta}$ is obtained by replacing some occurrences of $\alpha$ in $\varphi$ with $\beta$.}
			\end{enumerate}
		\end{multicols}
		\ \\[-1.5cm]
		\begin{multicols}{2}
			\begin{enumerate}[label=D\arabic*.]
				\item $\perm(\alpha\sqcup\beta) \liff (\perm(\alpha) \land \perm(\beta))$
				\item $\forb(\alpha\sqcup\beta) \liff (\forb(\alpha) \land \forb(\beta))$
				\item $(\perm(\alpha) \land \forb(\alpha)) \liff (\alpha = \mathsf{0})$
			\end{enumerate}
		\end{multicols}
	\end{minipage}}\\[1em]
	% \medskip
	\caption{Axiom System for \DAL.}\label{dal:axioms}
\end{figure}

% \medskip

% {\setlength\tabcolsep{1pt}
% 	\begin{center}
% 		\footnotesize
% 		\setlength{\abovedisplayskip}{4pt}
% 		\setlength{\belowdisplayskip}{-6pt}
% 		\setlength{\abovedisplayshortskip}{4pt}
% 		\setlength{\belowdisplayshortskip}{-3pt}
% 		\renewcommand{\arraystretch}{1.7}

% 		\begin{tabular}{|@{\ \ }p{.97\textwidth}|}	\hline
% 			\begin{enumerate}[label=D\arabic*]
% 				\begin{minipage}[t]{0.4\linewidth}
% 			\item $\perm(\alpha\sqcup\beta) \liff (\perm(\alpha) \land \perm(\beta))$
% \item $\forb(\alpha\sqcup\beta) \liff (\forb(\alpha) \land \forb(\beta))$
% 				\end{minipage}
% 				\hfill
% 				\begin{minipage}[t]{0.52\linewidth}
% \item $(\perm(\alpha) \land \forb(\alpha)) \liff (\alpha = \mathsf{0})$
% 				\end{minipage}
% 			\end{enumerate}\\\hline
% 		\end{tabular}
% \end{center}}

% \medskip

In \DAL, a Hilbert-style proof of a formula $\varphi$ is defined as a finite sequence $\psi_1, \dots, \psi_n$ of formulas s.t.: $\psi_n = \varphi$, and for each $1 \leq k \leq n$, $\psi_k$ is an axiom, or is obtained from two earlier formulas $\psi_i$ and $\psi_j$ using the rule of \emph{modus ponens} (i.e., there are $1 \leq i < j < k$ s.t.\ $\psi_j = {\psi_i \to \psi_k}$).
We say that $\varphi$ is a theorem, and write $\vdash \varphi$, iff there is a proof of $\varphi$.
We make a slight abuse of notation and use \DAL to indicate both the logic and its set of theorems.
We state \Cref{th:segerber:completeness} for future reference.

\medskip
\begin{theorem}[\cite{Segerberg1982}]\label{th:segerber:completeness}
	In \DAL, a formula is a theorem if and only if it is a tautology.
\end{theorem}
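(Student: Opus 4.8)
The plan is to prove soundness and completeness separately, using standard algebraic techniques adapted to the two-tier structure of \DAL. Soundness is the routine direction: I would check that each axiom in \Cref{dal:axioms} is a tautology and that modus ponens preserves tautologyhood. For the action axioms A1--A13 and LEM this amounts to observing that the set-theoretic operations $\cup$, $\cap$, complement-in-$E$, $\emptyset$, $E$ on $2^E$ form a Boolean algebra, so $v^*$ always lands in a Boolean algebra and the identities hold; similarly A1'--A13' and LEM' hold because the truth-value algebra $\{0,1\}$ is Boolean. Axioms E1 and E2 hold by the usual properties of equality (reflexivity and substitution, the latter by induction on $\varphi$, using that $v^*$ is uniquely determined). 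For D1 and D2, $v^*(\alpha \sqcup \beta) = v^*(\alpha) \cup v^*(\beta)$ is a subset of $P$ (resp.\ $F$) iff both $v^*(\alpha)$ and $v^*(\beta)$ are; for D3, $v^*(\alpha) \subseteq P$ and $v^*(\alpha) \subseteq F$ together with $P \cap F = \emptyset$ force $v^*(\alpha) = \emptyset = v^*(0)$, and conversely $\emptyset$ is trivially a subset of both.

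For completeness I would build a canonical model by a Lindenbaum--Tarski construction. Consider the set of all formulas modulo provable equivalence; this carries a Boolean algebra structure (from A1'--LEM'), and by a Lindenbaum argument every consistent formula extends to an ultrafilter, equivalently a homomorphism $h$ from this Boolean algebra to $\{0,1\}$. The subtlety is the action layer: I would first quotient the set $\act$ of actions by the congruence $\alpha \sim \beta$ iff $\vdash \alpha = \beta$, which by A1--LEM and E1--E2 gives a Boolean algebra $\mathcal{B}_{\act}$ of actions. Given a maximal consistent set $\Gamma$ (determining $h$), the sets $\{[\alpha] : \perm(\alpha) \in \Gamma\}$ and $\{[\alpha] : \forb(\alpha) \in \Gamma\}$ are, by D1--D3, downward-closed under the Boolean order and closed under finite joins in $\mathcal{B}_{\act}$ (i.e.\ ideals), and disjoint except for the bottom element $[\mathsf 0]$. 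The key representational step is then to realize $\mathcal{B}_{\act}$ concretely: by Stone's representation theorem embed $\mathcal{B}_{\act}$ into a powerset $2^E$ via $\alpha \mapsto v^*(\alpha)$ where $E$ is the set of ultrafilters of $\mathcal{B}_{\act}$ and $v(\mathsf a_i)$ is the image of $[\mathsf a_i]$; then set $P$ and $F$ to be the unions of the images of the permission-ideal and prohibition-ideal respectively. One checks $P \cap F = \emptyset$ using D3 plus disjointness of the two ideals, and $\langle E,P,F\rangle$ with this $v$ satisfies exactly $\Gamma$, by induction on formula structure — the deontic clauses working because $v^*(\alpha) \subseteq P$ iff $[\alpha]$ lies in the permission-ideal, which holds iff $\perm(\alpha) \in \Gamma$ (here D1 gives the join-closure needed for the "if" direction and the ideal property gives the "only if").

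The main obstacle I anticipate is the coordination between the two Boolean algebras in the canonical construction: the valuation $v^*$ must simultaneously (i) represent $\mathcal{B}_{\act}$ faithfully so that $\alpha = \beta$ is captured correctly, and (ii) make $P$ and $F$ large enough to validate every $\perm(\alpha), \forb(\alpha) \in \Gamma$ yet small enough (disjoint) to refute those not in $\Gamma$. The delicate point is the "only if" direction of the deontic clauses: if $v^*(\alpha) \subseteq P$ but $P$ is defined as a union over the permission-ideal, one must show $[\alpha]$ is itself in that ideal, which requires that the ideal be not merely downward-closed but genuinely an ideal and that the Stone embedding reflects containment — this is where compactness of the Stone space (or equivalently the finiteness of proofs together with D1) is used. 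I would also need to handle the case where a formula forces $\alpha = \mathsf 0$: D3 ensures $\perm(\mathsf 0)$ and $\forb(\mathsf 0)$ are both theorems, consistent with $v^*(\mathsf 0) = \emptyset \subseteq P \cap F$. Once these points are settled, contraposition gives completeness: a non-theorem is a consistent negation, hence refuted in some deontic action model, hence not a tautology.
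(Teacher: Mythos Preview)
The paper does not itself prove \Cref{th:segerber:completeness}; it is cited from Segerberg's original work and merely ``state[d] for future reference.'' The paper does re-establish the same conclusion independently at the end of \Cref{sec:algebraic-char}, but by an indirect route: it first proves algebraic soundness and completeness with respect to the class of deontic action algebras (\Cref{theorem:soundness,theorem:completeness}), and then transfers this to Segerberg's model-theoretic semantics through a Stone-type representation (\Cref{theorem:reducibility,theorem:inverses,theorem:models-and-algebras}). Your direct canonical-model construction is closer in spirit to Segerberg's own argument than to anything in the present paper.

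There is, however, a genuine gap in your action-layer construction --- precisely at the coordination point you flag as obstacle~(i). You quotient $\act$ by \emph{provable} equality, $\alpha \sim \beta$ iff $\vdash \alpha = \beta$. But a maximal consistent set $\Gamma$ may contain non-provable identities such as $\mathsf{a}_0 = \mathsf{a}_1$; with your quotient these have distinct classes, the Stone embedding separates them, and the resulting model falsifies $\mathsf{a}_0 = \mathsf{a}_1$, so the truth lemma fails already at the equality clause. The same mismatch breaks your ideal-disjointness step: from $\perm(\alpha),\forb(\alpha) \in \Gamma$ axiom~D3 yields only $(\alpha = \mathsf{0}) \in \Gamma$, not $\vdash \alpha = \mathsf{0}$, so $[\alpha]$ need not be the bottom element of your action algebra and the two ideals can overlap above it, whence $P \cap F = \emptyset$ may fail. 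The fix is to make the action quotient $\Gamma$-dependent: take $\alpha \sim_\Gamma \beta$ iff $(\alpha = \beta) \in \Gamma$. This is still a congruence (by~E2) and still yields a Boolean algebra (the instances of A1--LEM all lie in $\Gamma$). With that single adjustment both your compactness argument for the deontic clauses and your disjointness argument for $P\cap F$ go through exactly as sketched.
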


%%% Local Variables:
%%% mode: latex
%%% TeX-master: "article"
%%% End:

\section{Deontic Action Logic via Algebra}\label{sec:algebraic-char}

% The Deontic Action Logic \DAL enjoys some interesting characteristics.
% In particular, it is a simple system that provides a well-executed characterization of deontic operators.
% It enjoys an elegant semantics via sets and collections of sets (or dually via ideals and Boolean algebras).
% Interestingly also, \DAL further allows for additional deontic operators to be added systematically.
%More importantly, the formalism is sound and complete (Theorem 3.1 in \cite{Segerberg1982}).

We now turn our attention to revisiting and expanding the algebraic characterization of \DAL we presented in \cite{CCFA:2021}. To be noted, this algebraic framework is mathematically more abstract compared to the one in \cite{Segerberg1982}. This level of abstraction is a characteristic of algebraic logics, which can be leveraged to address broader issues in deontic logic. Furthermore, a distinguishing feature of our approach is its modularity. The class of algebras described below can be easily extended to support additional deontic operators, and in all cases, standard algebraic tools can be employed to prove soundness and completeness results.
We take advantage of this feature to build new deontic actions logics in the spirit of \DAL in \Cref{section:new:dals}.

\subsection{Basic Definitions (and a Roadmap for our Results)}\label{section:basics}

We provide a brief overview of some fundamental concepts in the algebraization of logic: signatures, algebras, characterizations of classes of algebras, congruences, and quotient algebras. In the case of \DAL, these definitions are interwoven with sorts to distinguish between actions and propositions.
This led us to work with many-sorted algebras --algebraic structures with carrier sets and operations categorized into \emph{sorts}~\cite{Halmos:2009,Tarlecki:2012}. In what follows, we establish the basic terminology for many-sorted algebras in the context of the algebraization of a logic. We have two main purposes behind this: first, to introduce the notation and terminology we use and ensure our results are self-contained; second, to outline the key steps in our algebraization of \DAL.

The algebraization of a logic begins with the appropriate definition of a signature and an algebraic structure. In our case, as mentioned, these two concepts are categorized into sorts.

\medskip
\begin{definition}
	A (many-sorted) signature is a pair $\Sigma = \langle S, \Omega \rangle$ where:
		$S$ is a non-empty set of sort symbols;
		and
		$\Omega$ is an $S^{+}$-indexed family of pairwise-disjoint sets of operation symbols. %indexed by finite non-empty sequences $s_1 \dots s_n s$ of elements in $S$.
	In turn, an algebra of type $\Sigma$, or a $\Sigma$-algebra, is a structure $\Algebra[A] = \tup{|\Algebra[A]|, \Funcs}$ where:
		$|\Algebra[A]|$ is an $S$-indexed family of non-empty universe sets $|\Algebra[A]|_s$; and
		$\Funcs$ is a collection of functions ${f_{\Algebra[A]}: (\prod_{i = 1}^{n} |\Algebra[A]|_{s_i}) \to |\Algebra[A]|_{s}}$, one for each $f \in \Omega_{s_1 \dots s_n s}$.
\end{definition}
\medskip

For the rest of this section, by an algebra, we mean an algebra of type $\Sigma = \tup{S, \Omega}$.

Signatures give rise to specific algebras whose universe sets consist of strings of symbols from the signature, and whose functions operate as concatenation of these strings. These algebras, known as \emph{term algebras}, serve as the algebraic counterpart to the language of a logic.
% Moreover, they act as the bridge between the syntax and the algebraic semantics of the logic.
We introduce the precise definition of term algebra in \Cref{def:talg}

\medskip
\begin{definition}\label{def:talg}
	Let $V$ be an $S$-indexed family of pair-wise disjoint countable sets of symbols for variables.
	The term algebra $\TAlgebra$ (on $V$) is defined s.t.:
	
	\medskip
	\begin{enumerate}%[(a)]
		\item for all $s \in S$, $|\TAlgebra|_s$ is the smallest set containing:
			$V_s$, and
			all strings $\textrm{`}f(\tau_1 \dots \tau_n)\textrm{'}$ where $f \in \Omega_{s_1 \dots s_n s}$, and $\tau_i \in |\TAlgebra|_{s_i}$;
		\item for all $f \in \Omega_{s_1 \dots s_n s}$,
			$f_{\mathbf{T}}(\tau_1 \dots \tau_n) = \textrm{`}f(\tau_1 \dots \tau_n)\textrm{'}$.
	\end{enumerate}
\end{definition}
\medskip

The algebraic counterpart of the semantics of a logical language is given via homomorphisms of term algebras.

\medskip
\begin{definition}
	Let $\Algebra[A]$ and $\Algebra[B]$ be algebras, and $h = {\set{h_s : {|\Algebra[A]|_s \rightarrow |\Algebra[B]|_s}}{s \in S}}$ be an $S$-indexed family of functions.
	We say that $h$ is a homomor\-phism from $\Algebra[A]$ to $\Algebra[B]$, and write ${h: {\Algebra[A] \to \mathbf{B}}}$, iff for all $f \in \Omega_{s_1 \dots s_n s}$, $h_s(f_{\Algebra[A]}(a_1 \dots a_n)) = f_{\mathbf{B}}(h_{s_1}(a_1) \dots h_{s_n}(a_n))$.
	An interpretation of a term algebra $\TAlgebra$ with variables in $V$ on an algebra $\Algebra[A]$ is a homomorphism $h: \TAlgebra \to \Algebra[A]$.
\end{definition}
\medskip

To establish soundness and completeness results in an algebraic way we will need to connect term algebras to particular classes of algebras of interest.
Standard classes of algebraic structures are characterized by equations. However, our algebraization of \DAL uses the weaker notion of a quasi-equation --a conditional equation-- to capture equality on actions as an algebraic operation. We introduce these concepts next.

\medskip 
\begin{definition}
	Let $\TAlgebra$ be a term algebra with variables in $V$.
	An equation is a string ${\tau_1 \doteq_s \tau_2}$ where $\tau_i \in |\TAlgebra|_s$. 
	In turn, a quasi-equation is a string ${{\tau_1 \doteq_s
	\tau_2} \To {\tau'_1 \doteq_{s'} \tau'_2}}$ where $\tau_i \in |\TAlgebra|_s$ and $\tau'_i \in |\TAlgebra|_{s'}$.
	By ${{\tau_1 \doteq_s
	\tau_2} \Iff {\tau'_1 \doteq_{s'} \tau'_2}}$, we mean the pair of quasi-equations
		${\tau_1 \doteq_s \tau_2} \To {\tau'_1 \doteq_{s'} \tau'_2}$ and
		${\tau'_1 \doteq_{s'} \tau'_2} \To {\tau_1 \doteq_{s} \tau_2}$.
\end{definition}
\medskip 

Notice that equations and quasi-equations are not elements of a term algebra.
Moreover, note that we have used $\doteq$ instead of $=$ in the definition of equations and quasi-equations since $=$, as a symbol, is part of the language of \DAL.
We define below when equations and quasi-equation are satisfied in an algebra.

\medskip
\begin{definition}
	An equation $\tau_1 \doteq_s \tau_2$ is satisfied in an algebra $\Algebra[A]$ under an interpretation $h$ iff ${h_s(\tau_1) = h_s(\tau_2)}$.
	% In turn, $\tau_1 \doteq_s \tau_2$ is valid in $\Algebra[A]$, written $\Algebra[A] \vDashcurly {\tau_1 \doteq \tau_2}$, iff for all interpretations $h$ on $\Algebra[A]$, ${\Algebra[A], h} \vDashcurly {\tau_1 \doteq_s \tau_2}$.
	% Lastly, $\tau_1 \doteq \tau_2$ is universally valid in a class $\mathbb{A}$ of $\tup{S,\Omega}$-algebras, written $\vDashcurly_{\mathbb{A}} {\tau_1 \doteq_s \tau_2}$, iff
	% ${\Algebra[A]} \vDashcurly {\tau_1 \doteq_s \tau_2}$ for all $\Algebra[A] \in \mathbb{A}$.
	In turn,
		a quasi-equation
			${{\tau_1 \doteq_s	\tau_2} \To {\tau'_1 \doteq_{s'} \tau'_2}}$
		is satisfied in
			$\Algebra[A]$ under $h$
		iff
			${h_s(\tau_1) = h_s(\tau_2)}$ implies ${h_s(\tau'_1) = h_s(\tau'_2)}$.
	Moreover,
		a quasi-equation
			${{\tau_1 \doteq_s	\tau_2} \To {\tau'_1 \doteq_{s'} \tau'_2}}$
		is valid in an algebra
			$\Algebra[A]$
		iff
			for any interpretation $h$ on $\Algebra[A]$,
				${{\tau_1 \doteq_s	\tau_2} \To {\tau'_1 \doteq_{s'} \tau'_2}}$ is satisfied in $\Algebra[A]$ under $h$.
	% A class of $\Sigma$-algebras satisfies a set of $\Sigma$-quasi-equations iff it satisfies every $\Sigma$-quasi-equation in the set.
\end{definition}

\medskip

Quasi-equations give rise to classes of algebras called \emph{quasi-varieties}.

\medskip
\begin{definition}
	A quasi-variety is the class of all algebras validating a set of quasi-equations; i.e., the class of all algebras where all quasi-equations in the set is valid.
\end{definition}
\medskip

The final fundamental tool in the algebraic characterization of a logic is that of a congruence relation. When appropriately defined on term algebras, a congruence relation provides a method for constructing --out of syntax-- well-behaved algebras as quotient algebras. More formally, canonical algebraic models are obtained by taking the quotient of the term algebra using specific congruences.

\medskip
\begin{definition}
	Let $\Algebra[A]$ be an algebra, and ${\cong} = \set{{{\cong_s} \subseteq {|\Algebra[A]|_s \times |\Algebra[A]|_s}}}{s \in S}$ be an $S$-indexed family of binary relations.
	We say that $\cong$ is a congruence on $\Algebra[A]$ iff every ${\cong_{s}} \in {\cong}$ is an equivalence relations on $|\Algebra[A]|_{s}$, and for every $f \in \Omega_{s_1 \dots s_n s}$, it follows that
		$a_i \cong_{s_i} a'_i$
		implies
		$f_{\Algebra[A]}(a_1 \dots a_n) \cong_s f_{\Algebra[A]}(a_1' \dots a_n')$.
	The quotient of $\Algebra[A]$ under $\cong$ is an algebra $\Algebra[A]/{\cong}$ where:
		$|\Algebra[A]/{\cong}|_s = |\Algebra[A]|_s/{\cong_s}$;
		and
		% for all $f \in \Omega_{s_1 \dots s_n s}$,
		% ${f_{\Algebra/{\cong}} : {(\prod_{i = 1}^{n} (A_{s_i}/{\cong_{s_i}})) \to {A/{\cong_s}}}}$
		$f_{(\Algebra[A]/{\cong})}([a_1]_{\cong_{s_1}} \dots [a_n]_{\cong_{s_n}}) = [f_{\Algebra[A]}(a_1 \dots a_n)]_{\cong_{s}}$.
\end{definition}
\medskip

We conclude this section with a presentation of two classes of algebras we use in our algebraization of \DAL: Boolean and Heyting algebras.
We follow \cite{Esakia:2019} and reach these particular algebras via \emph{bounded distributive lattices} (BDLs).
BDLs are characterized by a set of equations common to Boolean and Heyting algebras.
This gives us a way to introduce concepts pertaining Boolean and Heyting algebras simultaneously.
We introduce Heyting algebras as an extension of BDLs, and  Boolean algebras as an extension of Heyting algebras.
To keep the notation uncluttered, unless it is strictly necessary, we will omit the subscript $\Algebra[A]$ in the function $f_{\Algebra[A]}$ and simply write $f$. The context will always disambiguate whether we refer to the function or the symbol in the signature of $\Algebra[A]$.

\medskip
\begin{definition}[\cite{Esakia:2019}]\label[definition]{def:bdl:algebra}
	Let $\Lambda = \tup{ \{s\}, \{\{{+},{*}\}_{sss},\{0,1\}_{s}\}}$ be a many-sorted signature.
	A BDL-algebra is a $\Lambda$-algebra $\Algebra[L]$ satisfying the following equations:
	\begin{multicols}{2}
		\begin{enumerate}[label=L\arabic*.]
			\item $\tau_1 + (\tau_2 + \tau_3) \doteq (\tau_1 + \tau_2) + \tau_3$
			\item $\tau_1 + \tau_2 \doteq \tau_2 + \tau_1$
			\item $\tau_1 + \tau_1 \doteq \tau_1$
			\item $\tau_1 + (\tau_1 * \tau_2) \doteq \tau_1$
			\item ${\tau_1 + (\tau_2 * \tau_3)} \doteq {(\tau_1 + \tau_2) * (\tau_1 + \tau_3)}$
			\item $\tau_1 + 1 \doteq 1$
			\item $\tau_1 * (\tau_2 * \tau_3) \doteq (\tau_1 * \tau_2) * \tau_3$
			\item $\tau_1 * \tau_2 \doteq \tau_2 * \tau_1$
			\item $\tau_1 * \tau_1 \doteq \tau_1$
			\item $\tau_1 * (\tau_1 + \tau_2) \doteq \tau_1$
			\item ${\tau_1 * (\tau_2 + \tau_3)} \doteq {(\tau_1 * \tau_2) + (\tau_1 * \tau_3)}$
			\item $\tau_1 * 0 \doteq 0$.
		\end{enumerate}
	\end{multicols}
\end{definition}
% \medskip

In \Cref{def:bdl:algebra}, $\tau_i$ is an element of the term algebra of type $\Lambda$.
We use $\Algebra[L] = \tup{L, {+}, {*}, {0}, {1}}$ to indicate an arbitrary BDL-algebra.
We write $\preccurlyeq$ for the partial order implicit in a BDL algebra, i.e., $a \preccurlyeq b$ iff $a + b = b$.

\medskip
Another important concept in our algebraization of \DAL is that of an ideal (or, dually, a filter).
Intuitively, an ideal is an initial set closed by unions (while a filter is a final set closed by intersections).  
Ideals were used by Segerberg in the original definition of \DAL as inherent properties of the formalization of permission and prohibition on actions.

\medskip
\begin{definition}
	An ideal in a BDL-algebra $\Algebra[L] = \tup{L, {+}, {*}, {0}, {1}}$ is a subset $I \subseteq L$ s.t.: for all $x,y \in I$, ${x + y} \in I$,
	and for all $x \in I$ and $y \in L$, $(x*y) \in I$.
	Dual to ideals are filters.
	A filter is a subset $F \subseteq L$ s.t.: for all $x,y \in F$, $(x*y) \in F$, and 
	for all $x \in F$ and $y \in L$, ${x + y} \in F$.
	% An ideal (filter) is maximal if it is not contained in any other ideal (filter).
	% The smallest ideal (filter) containing an element $x \in L$, written ${\downarrow} x$ (${\uparrow} x$), is called a principal ideal (filter).\carlos{Usamos las nociones de principal?}
\end{definition}
\medskip

We use BDL-algebras to present Heyting and Boolean algebras. 

\medskip 
\begin{definition}[\cite{Esakia:2019}]\label[definition]{def:heyting:algebra}
	Let $\mathrm{H} = \tup{ \{s\}, \{\{{\hto}, {+},{*}\}_{sss},\{0,1\}_s\}}$ be a many-sorted signature.
	A Heyting algebra is an $\mathrm{H}$-algebra satisfying the equations L1--L12 in \Cref{def:bdl:algebra} together with the following equations:

\medskip
		\begin{enumerate}[label=H\arabic*.]
			\item $\tau_1 * (\tau_1 \hto \tau_2) \doteq \tau_2$
		
			\item $((\tau_1 * \tau_2) \hto \tau_1) * \tau_3 \doteq \tau_3$
			\item $\tau_1 * (\tau_2 \hto \tau_3) \doteq \tau_1 * ((\tau_1 * \tau_2) \hto (\tau_1 * \tau_3))$.
		\end{enumerate}	
\end{definition}
\medskip

In \Cref{def:heyting:algebra}, $\tau_i$ is an element of the term algebra of type $\mathrm{H}$.
We use $\Algebra[H] = \tup{H, {\hto}, {+}, {*}, {0}, {1}}$ to indicate an arbitrary Heyting algebra.
We use $\bar{\tau}$ as an abbreviation of $\tau \hto 0$.
This abbreviation gives rise to an operation $\bar{~} : {H \to H}$ defined as $\bar{x} = {x \hto 0}$.
We will sometimes use $\Algebra[H] = \tup{H, {+}, {*}, \bar{~}, {0}, {1}}$ to indicate an arbitrary Heyting algebra; in these cases, we assume $\hto$ implicitly.
% The class of Heyting algebras is a variety.

\medskip
\begin{definition}\label{definition:boolean:algebra}
	Boolean algebras are Heyting algebras that validate: (LEM) $\tau + \bar{\tau} \doteq 1$.
\end{definition}
\medskip

We use $\Algebra[B] = \tup{B, {+}, {*}, {\bar{~}}, {0}, {1}}$ to indicate an arbitrary Boolean algebra; and~$\mathbf{2}$ to indicate the Boolean algebra of exactly two elements.
A Boolean algebra is \emph{concrete} iff it is a field of sets.
% The class of Boolean algebras is a variety.
Important Boolean algebras in our setting are those freely generated and finitely generated~\cite{Halmos:2009}.
We use  Stone's representation theorem~\cite{Stone36}.
If $\Algebra[B]$ is a Boolean algebra, we use $\mathsf{s}(\Algebra[B])$ for its isomorphic Stone space, and ${\mathbf{\varphi}_{\Algebra[B]}: \Algebra[B] \to \mathsf{s}(\Algebra[B])}$ for the isomorphism.

%%% Local Variables:
%%% mode: latex
%%% TeX-master: "article"
%%% End:

\subsection{Algebraizing Deontic Action Logic}
We start the algebraization of  \DAL introducing its signature, i.e., the symbols needed to capture the language of the logic in an algebraic way.

%The first step in algebraizing a logic, and \DAL is no exception, is to view formulas of a logical language as terms of an algebraic language over an appropriate signature.
%To this end, we introduce the following definition.
%We define the signature and the algebraic language that we use in what follows.

\medskip
\begin{definition}\label[definition]{def:signature}
The signature of \DAL is a tuple $\Sigma = \tup{S, \Omega}$ where:
% \begin{enumerate}
% 	\item
		$S = \{\sorta, \sortf\}$; and %, i.e., $S$ has sort symbol $\sorta$ for actions, and sort symbol $\sortf$ for formulas; and
	% \item
		$\textstyle \bigcup \Omega = \{
			{\sqcup}, {\sqcap}, \bar{~}, \iact, \mathsf{1},
			{\lor}, {\land}, {\lnot}, {\bot}, {\top},
			{=}, {\perm}, {\forb}
		\}$.
	The symbols in $\textstyle \bigcup \Omega$ are further categorized into sets
		$\Omega_{\sorta\sorta\sorta}$,
		$\Omega_{\sorta\sorta}$,
		$\Omega_{\sorta}$,
		$\Omega_{\sortf\sortf\sortf}$,
		$\Omega_{\sortf\sortf}$,
		$\Omega_{\sortf}$,
		$\Omega_{\sorta\sorta\sortf}$,
		$\Omega_{\sorta\sortf}$ summarized in \Cref{tab:sig}.
	
	\begin{figure}
		\centering
		\begin{tabular}{r@{~}lr@{~}lr@{~}lr@{~}lr@{~}l}
			\toprule
			&& \multicolumn{8}{c}{operations}
			\tabularnewline
			\cmidrule{3-10}
			& sorts && actions && formulas && equality && normative
			\tabularnewline
			\midrule
			$S$ & $=\{\sorta, \sortf\}$ &
			$\Omega_{\sorta\sorta\sorta}$ & $= \{ {\sqcup}, {\sqcap}\}$ &
			$\Omega_{\sortf\sortf\sortf}$ & $= \{ {\lor}, {\land}\}$ &
			$\Omega_{\sorta\sorta\sortf}$ & $= \{ {=}\}$ &
			$\Omega_{\sorta\sorta\sortf}$ & $= \{ {\perm}, {\forb}\}$
			\tabularnewline
			&&
			$\Omega_{\sorta\sorta}$ & $= \{\bar{~}\}$ &
			$\Omega_{\sortf\sortf}$ & $= \{{\lnot}\}$ &
			\tabularnewline
			&&
			$\Omega_{\sorta}$ & $= \{\iact, \mathsf{1}\}$ &
			$\Omega_{\sortf}$ & $= \{\bot, \top\}$ &
			\tabularnewline
			\bottomrule
		\end{tabular}\\[1em]
		\caption{The Signature used in the algebraization of \DAL.}\label{tab:sig}
	\end{figure}
% \end{enumerate}
\end{definition}
\medskip

In our discussion on the algebraization of \DAL, we take $\Sigma = \tup{S, \Omega}$ to be as in \Cref{def:signature}.
Intuitively, the sort symbols $\sorta$ and $\sortf$ in $S$ categorize actions and formulas, respectively.
In turn, we think of $\Omega$ as containing
symbols for operations on actions, operations on formulas, and
(heterogeneous) operations from actions to formulas.

\medskip
\begin{definition}\label{dal:talg}
	The term algebra $\TAlgebra$ for \DAL uses the set ${\bact}$ as the set of variables of sort $\sorta$, and the empty set $\emptyset$ as the set of variables of sort $\sortf$.  We call this algebra the deontic action term algebra, or the algebraic language of \DAL.
\end{definition}
\medskip

The term algebra $\TAlgebra$ in \Cref{dal:talg} is interpreted over \emph{deontic action algebras}. Deontic action algebras are to \DAL what Boolean algebras are to Classical Propositional Logic, or what Heyting algebras are to Intuitionistic Propositional Logic.
We provide the precise definition of a deontic action algebra in \Cref{definition:deontic:algebra}.

\medskip
\begin{definition}\label[definition]{definition:deontic:algebra}
	A deontic action algebra is an algebra
		$\DAlgebra =
			\langle
				\Algebra[A], \Algebra[F], \E, \P, \F
			\rangle$
	of type $\Sigma$ where:%
		\footnote{
			We use $=$ as the function interpreting `$=$' in $\Algebra[F]$, and $=_{\Algebra[A]}$ and $=_{\Algebra[B]}$ as equality in $\Algebra[A]$ and $\Algebra[B]$, respectively.
		}
		$\Algebra[A] = \tup{A, {\sqcup}, {\sqcap}, \bar{~}, \iact, \uact}$
		and
		$\Algebra[F] = \tup{F, {\lor}, {\land}, {\lnot}, \bot, \top}$ are Boolean algebras,
		and
			$\E$,
			$\P$,
			and
			$\F$,
		satisfy the conditions below
		\begin{multicols}{3}
			\begin{enumerate}[leftmargin=\parindent]
				\item $\P(a {\sqcup} b) \,{=_{\Algebra[F]}}\, {\P(a) {\land} \P(b)}$
				\item $\F(a {\sqcup} b) \,{=_{\Algebra[F]}}\, {\F(a) {\land} \F(b)}$
				\item ${\P(a) {\land} \F(a)} \,{=_{\Algebra[F]}}\, (a \,{=}\, \iact)$
				\item $(a = b) \land \P(a) \preccurlyeq \P(b)$
				\item $(a = b) \land \F(a) \preccurlyeq \F(b)$
				\item[]
				\item ${a \,{=_{\Algebra[A]}}\, b} ~\text{iff}~ {(a \,{=}\, b) \,{=_{\Algebra[F]}}\, \top}$.
			\end{enumerate}
		\end{multicols}
	\noindent
	Let $h: \TAlgebra \to \DAlgebra$ be an interpretation.
	We use $\DAlgebra, h \vDash \tau_1 \doteq \tau_2$ as a shorthand for $h(\tau_1) = h(\tau_2)$.
	In turn, let $\DALVariety$ indicate the class of all deontic action algebras.
	We use $\DALVariety \vDash \tau_1 \doteq \tau_2$ as the universal quantification of $\vDash$ to all deontic action algebras in $\DALVariety$ and all interpretations on these algebras; i.e., $\DALVariety \vDash \tau_1 \doteq \tau_2$ iff $\DAlgebra, h \vDash \tau_1 \doteq \tau_2$, for all $\DAlgebra \in \DALVariety$, and all interpretations $h: \TAlgebra \to \DAlgebra$.
	% The condition $a = b \iff \E(a,b) = \top$
	% is a pair of quasi-identities.
	% This makes the class $\DALVariety$ of all deontic action algebras a quasi-variety.
\end{definition}
\medskip

The next two results are immediate.

\medskip
\begin{proposition}\label[proposition]{pro:dal:act2form}
	It follows that $\DALVariety \vDash \alpha \doteq_{\sorta} \beta$ iff $\DALVariety \vDash (\alpha = \beta) \doteq_{\sortf} \top$.
\end{proposition}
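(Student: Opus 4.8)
The plan is to reduce the statement to the last defining condition of a deontic action algebra, namely the quasi-equation pair ${a \,{=_{\Algebra[A]}}\, b} ~\text{iff}~ {(a \,{=}\, b) \,{=_{\Algebra[F]}}\, \top}$, and then simply push the universal quantifiers over algebras and interpretations through this biconditional. Concretely, I would first fix an arbitrary $\DAlgebra = \langle \Algebra[A], \Algebra[F], \E, \P, \F \rangle \in \DALVariety$ and an arbitrary interpretation $h : \TAlgebra \to \DAlgebra$, and argue the equivalence at that fixed level: $\DAlgebra, h \vDash \alpha \doteq_{\sorta} \beta$ iff $\DAlgebra, h \vDash (\alpha = \beta) \doteq_{\sortf} \top$.

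For that local equivalence, the key observation is that $h$ is a homomorphism of $\Sigma$-algebras, so $h(\alpha)$ and $h(\beta)$ lie in $|\DAlgebra|_{\sorta} = A$, and $h$ commutes with the operation symbols; in particular $h_{\sortf}(\alpha = \beta) = h_{\sorta}(\alpha) = h_{\sorta}(\beta)$, where the outer $=$ on the right is the function $=: A \times A \to F$ interpreting the symbol in $\Algebra[F]$, and $h_{\sortf}(\top) = \top$ (the top of $\Algebra[F]$). Unwinding the shorthand from \Cref{definition:deontic:algebra}, $\DAlgebra, h \vDash \alpha \doteq_{\sorta} \beta$ means $h_{\sorta}(\alpha) =_{\Algebra[A]} h_{\sorta}(\beta)$, while $\DAlgebra, h \vDash (\alpha = \beta) \doteq_{\sortf} \top$ means $h_{\sortf}(\alpha = \beta) =_{\Algebra[F]} \top$, i.e.\ $(h_{\sorta}(\alpha) = h_{\sorta}(\beta)) =_{\Algebra[F]} \top$. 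By the last condition in \Cref{definition:deontic:algebra}, instantiated with $a := h_{\sorta}(\alpha)$ and $b := h_{\sorta}(\beta)$, these two statements are equivalent. This settles the fixed-$\DAlgebra$, fixed-$h$ case.

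Finally I would conclude by quantification: since the local equivalence holds for every $\DAlgebra \in \DALVariety$ and every $h$, we get $\DALVariety \vDash \alpha \doteq_{\sorta} \beta$ (which by definition says the local left-hand side holds for all $\DAlgebra, h$) iff $\DALVariety \vDash (\alpha = \beta) \doteq_{\sortf} \top$ (which says the local right-hand side holds for all $\DAlgebra, h$). The only point requiring a little care is precisely this last step — making explicit that a biconditional that holds pointwise in $(\DAlgebra, h)$ transfers to the universally quantified statements — but since it is a genuine equivalence at each point, not merely an implication, no subtlety arises. I do not expect any real obstacle here: the proof is a direct unfolding of the definition of deontic action algebra together with the homomorphism property of interpretations.
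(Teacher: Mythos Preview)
Your proposal is correct and matches the paper's approach: the paper states this proposition as ``immediate'' without giving a proof, and the natural justification is exactly the unfolding of condition~6 in \Cref{definition:deontic:algebra} together with the homomorphism property of $h$, as you describe.
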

\medskip

\begin{proposition}\label[proposition]{pro:dal:qvariety}
	The class $\DALVariety$ of all deontic action algebras is a quasi-variety.
\end{proposition}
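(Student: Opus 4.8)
The plan is to exhibit an explicit set $Q$ of quasi-equations of type $\Sigma$ --- formed over a term algebra of type $\Sigma$ equipped with a countably infinite set of variables of each sort, whose variables of sort $\sorta$ we write $\alpha,\beta,\gamma,\dots$ and whose variables of sort $\sortf$ we write $\varphi,\psi,\chi,\dots$ --- with the property that a $\Sigma$-algebra $\DAlgebra$ belongs to $\DALVariety$ if and only if every quasi-equation of $Q$ is valid in $\DAlgebra$. By the definition of quasi-variety, this is all that is required. The guiding observation is that each defining condition of a deontic action algebra in \Cref{definition:deontic:algebra} is either an equation in one of the two sorts, or a biconditional between a genuine identity of algebra elements and an equation of sort $\sortf$, and that every such condition is expressible by one or two quasi-equations. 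We use throughout that any equation $\sigma \doteq_s \sigma'$ is itself available as a quasi-equation: prefixing it with a trivially satisfied hypothesis such as $\alpha \doteq_{\sorta} \alpha$ produces a quasi-equation valid in exactly those algebras in which $\sigma \doteq_s \sigma'$ holds.

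Concretely, $Q$ will consist of: (i) the evident equational renderings, in the variables $\alpha,\beta,\gamma$, of the Boolean-algebra axioms A1--A13 and LEM of \Cref{dal:axioms} (with $=$ read as $\doteq_{\sorta}$ and $\sqcup,\sqcap,\bar{~}$ read as the corresponding operation symbols of $\Sigma$); (ii) likewise, in the variables $\varphi,\psi,\chi$, the equational renderings of A1'--A13' and LEM' (with $\liff$ read as $\doteq_{\sortf}$); (iii) conditions 1--3 of \Cref{definition:deontic:algebra}, which, written with the operation symbols $\perm,\forb,=$, are already equations of sort $\sortf$; (iv) conditions 4 and 5 of \Cref{definition:deontic:algebra}, with $\tau \preccurlyeq \tau'$ unfolded to the equation $\tau \lor \tau' \doteq_{\sortf} \tau'$; and (v) the two quasi-equations $\alpha \doteq_{\sorta} \beta \To (\alpha = \beta) \doteq_{\sortf} \top$ and $(\alpha = \beta) \doteq_{\sortf} \top \To \alpha \doteq_{\sorta} \beta$, which together encode the final clause of \Cref{definition:deontic:algebra}.

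It then remains to verify that $\DAlgebra$ belongs to $\DALVariety$ if and only if every quasi-equation of $Q$ is valid in $\DAlgebra$, which is a routine unwinding of definitions. A $\Sigma$-algebra makes all of (i) (resp.\ (ii)) valid exactly when its $\sorta$-reduct (resp.\ $\sortf$-reduct) is a Boolean algebra, since Boolean algebras in the relevant signature form a variety cut out precisely by those equations (cf.\ \Cref{def:bdl:algebra} and \Cref{definition:boolean:algebra}). As validity of an equation is, by definition, universal truth of the corresponding identity in the algebra --- here using that the chosen term algebra is freely generated by its variables, so that interpretations realize all assignments of elements to those variables --- making (iii) and (iv) valid amounts exactly to conditions 1--5 (the latter after unfolding $\preccurlyeq$). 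Finally, a quasi-equation $\tau_1 \doteq_s \tau_2 \To \tau'_1 \doteq_{s'} \tau'_2$ is satisfied under an interpretation $h$ iff $h_s(\tau_1) = h_s(\tau_2)$ implies $h_{s'}(\tau'_1) = h_{s'}(\tau'_2)$; hence validity of the two quasi-equations of (v) over all interpretations says precisely that, for all elements $a,b$, $a =_{\Algebra[A]} b$ implies $(a = b) =_{\Algebra[F]} \top$ and conversely, which is the final clause of \Cref{definition:deontic:algebra}. The step I expect to require care --- and indeed the only genuine subtlety --- is this last clause: it is stated in terms of honest equality of elements of $\Algebra[A]$ rather than of the operation symbol $=$, and one must observe that this is exactly what the hypothesis of a quasi-equation provides, since $\alpha \doteq_{\sorta} \beta$ is satisfied under $h$ precisely when $h_{\sorta}(\alpha)$ and $h_{\sorta}(\beta)$ are literally the same element of the $\sorta$-universe. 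Everything else is bookkeeping; hence $Q$ defines $\DALVariety$, and $\DALVariety$ is a quasi-variety.
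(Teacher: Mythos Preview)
Your proposal is correct and follows essentially the same approach as the paper: exhibit, for each clause in \Cref{definition:deontic:algebra}, an equation or quasi-equation expressing it, with the only genuine quasi-equations arising from the biconditional in clause~6. The paper's proof is terser---it lists only the ``interesting cases'' and leaves the Boolean-algebra axioms and the equation/quasi-equation coercion implicit---but your more explicit treatment (including the remark that equations embed as quasi-equations with trivial hypothesis, and the care you take with clause~6) is exactly in the same spirit.
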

\begin{proof}
	It suffices to show that the conditions in the definition of a deontic action algebra can be captured by equations, or quasi-equations.
	The interesting cases are:
	\smallskip
	\begin{enumerate}[leftmargin=\parindent]
		\item $\P(a \sqcup b) =_{\Algebra[F]} {\P(a) \land \P(b)}$ expressed as $\P(a \sqcup b) \doteq_{\sortf} \P(a) \land \P(b)$;
		\item ${\P(a) \land \F(a)} =_{\Algebra[F]} \E(a,\iact)$ expressed as ${\P(a) \land \P(b)} \doteq_{\sortf} {a = \iact}$;
		\item $(a = b) \land \P(a) \preccurlyeq \P(b)$ expressed as $((a = b) \land \P(a)) \lor \P(b) \doteq_{\sortf} \P(b)$; and
		\item ${a =_{\Algebra[A]} b} ~\text{iff}~ {(a = b) =_{\Algebra[F]} \top}$ expressed as the quasi-equations
			${a \doteq_{\sorta} b} \To {(a = b) \doteq_{\sortf} \top}$, and
			${(a = b) \doteq_{\sortf} \top} \To {a \doteq_{\sorta} b}$. \qedhere
	\end{enumerate}
\end{proof}
\medskip

The definition of a deontic action algebra in \Cref{definition:deontic:algebra} draws on ideas and terminology from Pratt's dynamic algebras~\cite{Pratt:1991}. We present the general structure of a deontic action algebra in a form slightly different from the general treatment of many-sorted algebras in \Cref{section:basics}. In doing this we wish to highlight the modular nature of deontic action algebras. In \Cref{section:new:dals}, we leverage this modularity to introduce variants of \DAL by considering different algebraic structures for each component of the deontic action algebra. 
Finally, notice that, as made clear in \Cref{pro:dal:qvariety}, our treatment of equality in the logic results in the class of deontic action algebras being a quasi-variety.
The result in \Cref{pro:dal:act2form} tells us we can dispense explicitly referring to equations on actions as they are also captured as particular equations on formulas via equality in the logic.

\medskip
\begin{example}
	\Cref{ex:deontic:algebra} depicts the deontic action algebra ${\DAlgebra = \langle \Algebra[A], \Algebra[2], \E, \P, \F \rangle}$ where: the algebra $\Algebra[A]$ of actions is the free Boolean algebra on the set of generators $\{a,b\}$.
	In $\DAlgebra$, the functions $\P$ and $\F$ are defined as:
	\begin{align*}
		\P(x) &=
			{\begin{cases}
				1 & \text{if } x \preccurlyeq \bar{b} \\
				0 & \text{otherwise.}
			\end{cases}}
		&
		\F(x) &=
			{\begin{cases}
				1 & \text{if } x \preccurlyeq b \\
				0 & \text{otherwise.}
			\end{cases}}
	\label{eq:ex:pf}
	\end{align*}
	% The dashed arrows from the graph of $\Algebra[A]$ to the graph of $\Algebra[F]$ show the elements of $|\Algebra[A]|$ that $\P$ maps to $1$.
	In \Cref{ex:deontic:algebra}, the elements of $|\Algebra[A]|$ that $\P$ and $\F$ map to $\top$ are indicated with green and red, respectively.
	To avoid overcrowding the drawing, we have chosen not to highlight the elements these operations do not map to $\top$.
	In \Cref{ex:deontic:algebra} also, the sets $P$ and $F$ indicate which actions are permitted and which ones are forbidden.
	Note that both sets form an ideal in $\Algebra[A]$ whose intersection contains only the $\iact$ element of the algebra.
	It can easily be inferred from this example that: if $\P(x) = \top$ for all $x \in |\Algebra[A]|$, then, $\F(x) = \bot$ for all $\iact \prec x \in |\Algebra[A]|$.
	Similarly, if $\F(x) = \top$ for all $x \in |\Algebra[A]|$, then, $\P(x) = \bot$ for all $\iact \prec x \in |\Algebra[A]|$.
	These cases are known as \emph{deontic heaven} and \emph{deontic hell}, respectively.
	We will discuss them later on.
\end{example}
\medskip

\begin{figure}
	\centering
	\includegraphics[width=0.5\textwidth]{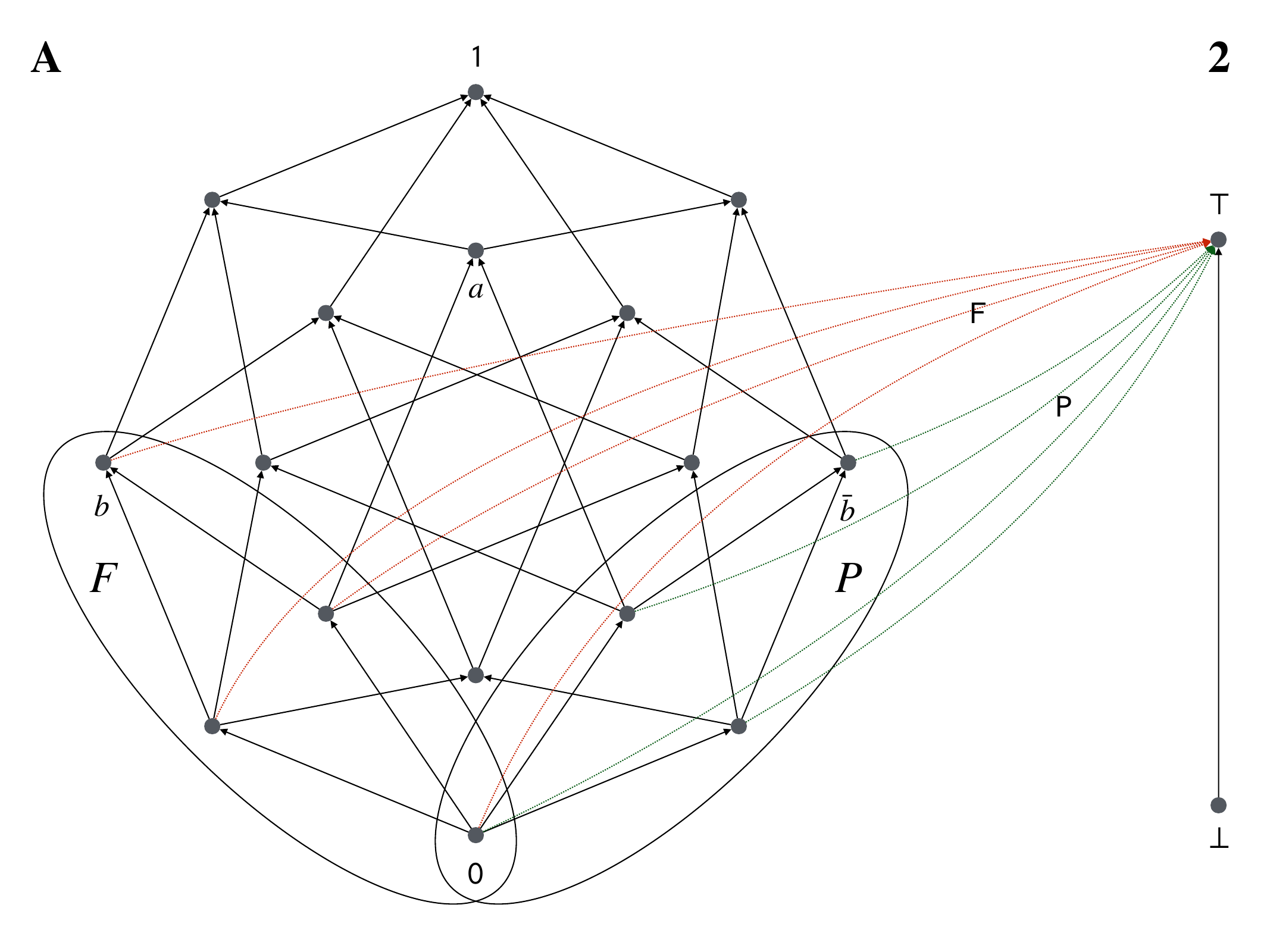}\\[.5em]
	\caption{A Deontic Action Algebra.}\label{ex:deontic:algebra}
\end{figure}

\begin{example}
	Let $\DAlgebra$ be the deontic action algebra in \Cref{ex:deontic:algebra}, and $\mathsf{drinking}$, $\mathsf{driving}$, and $\mathsf{parking}$, be basic action symbols.
	In addition, let $h: \TAlgebra \to \DAlgebra$ be an interpretation s.t.:
		$h_{\sorta}(\mathsf{drinking}) = b$,
		$h_{\sorta}(\mathsf{driving}) = a$, and
		$h_{\sorta}(\mathsf{parking}) = \bar{b}$.
	It follows that:

   \begin{multicols}{2}
   \begin{enumerate}
      \item $h(\overline{\mathsf{parking}} = \mathsf{driving}) =_{\Algebra[F]} \top$
      \item $h(\forb(\mathsf{drinking} \sqcap \mathsf{driving})) =_{\Algebra[F]} \top$
      \item $h(\perm(\mathsf{drinking} \sqcap \mathsf{parking})) =_{\Algebra[F]} \top$
      \item $h(\perm(\mathsf{driving} \sqcup \mathsf{parking})) \neq_{\Algebra[F]} \top$.
   \end{enumerate}
   \end{multicols}

   \noindent In brief, the deontic action algebra $\DAlgebra$ may be understood as the algebraic version of the deontic model $\DeonticModel$ in \Cref{section:dal:semantics}.
\end{example}

The following proposition shows the ideals in the deontic action algebra in \Cref{ex:deontic:algebra} are indeed a distinguishing characteristic of the operations of permission and prohibition.

\medskip
\begin{proposition}\label[proposition]{prop:dal:ideal}
	Let $\DAlgebra = \tup{\Algebra[A], \Algebra[F], \E, \P, \F}$ be a deontic action algebra. The pre-image $P$ of $\top$ under $\P$, as well as the preimage $F$ of $\top$ under $\F$, are ideals in $\Algebra[A]$ s.t.\ ${{P \cap F} = \{\iact\}}$.
\end{proposition}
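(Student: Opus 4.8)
The plan is to verify the two ideal conditions for $P$ (the case of $F$ is symmetric, swapping $\P$ for $\F$) and then to compute $P \cap F$ using axiom D3. First I would set $P = \set{a \in A}{\P(a) = \top}$ and show it is closed under $\sqcup$: if $a, b \in P$, then by condition~1 of \Cref{definition:deontic:algebra}, $\P(a \sqcup b) = \P(a) \land \P(b) = \top \land \top = \top$, so $a \sqcup b \in P$. Next I would show downward closure, i.e.\ if $a \in P$ and $c \in A$, then $a \sqcap c \in P$. The key observation is that $a \sqcup (a \sqcap c) = a$ (absorption, L4 in the Boolean algebra $\Algebra[A]$), so $a$ and $a \sqcap c$ satisfy the hypothesis of permission-preservation. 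More precisely, from $a = a \sqcup (a \sqcap c)$ and condition~1 we get $\P(a) = \P(a \sqcup (a \sqcap c)) = \P(a \sqcap c) \land \P(a \sqcap c \ \text{-- wait, } a\text{ again})$; better: $\P(a) = \P((a \sqcap c) \sqcup a) = \P(a \sqcap c) \land \P(a)$, hence $\P(a) \preccurlyeq \P(a \sqcap c)$ in $\Algebra[F]$, and since $\P(a) = \top$ is the top element, $\P(a \sqcap c) = \top$, so $a \sqcap c \in P$. This establishes that $P$ is an ideal; the argument for $F$ is identical, invoking condition~2 in place of condition~1.

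Then I would compute $P \cap F$. If $a \in P \cap F$, then $\P(a) = \top$ and $\F(a) = \top$, so $\P(a) \land \F(a) = \top$; by condition~3, $(a = \iact) = \top$, and then by condition~6 (the quasi-equation linking $=_{\Algebra[A]}$ and $= \top$), $a =_{\Algebra[A]} \iact$, so $a = \iact$. Conversely, $\iact \in P \cap F$: since $\iact =_{\Algebra[A]} \iact$, condition~6 gives $(\iact = \iact) = \top$; and from $\iact \sqcup \iact = \iact$ together with condition~1 we get $\P(\iact) = \P(\iact) \land \P(\iact)$, which only tells us $\P(\iact)$ is idempotent. To pin down $\P(\iact) = \top$ I would instead argue: $\iact \preccurlyeq a$ for every $a \in A$, so taking any $a \in P$ (e.g.\ there is always at least $\iact$ itself if $P$ is nonempty — but we need nonemptiness)... cleaner route: $\iact = \iact \sqcap \iact$ and we have just shown $P$ is downward closed, but that needs $\iact \in P$ to start. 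So the genuinely clean argument is via condition~3 read right-to-left: from $(\iact = \iact) = \top$ we get $\P(\iact) \land \F(\iact) = \top$, and since $\top = x \land y$ in a Boolean algebra forces $x = \top = y$, we conclude $\P(\iact) = \top$ and $\F(\iact) = \top$, i.e.\ $\iact \in P$ and $\iact \in F$. Hence $\iact \in P \cap F$, and combined with the previous inclusion, $P \cap F = \{\iact\}$.

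The only mild subtlety — and the step I expect to require the most care — is the manipulation of the order $\preccurlyeq$ in $\Algebra[F]$ and the repeated use of the Boolean-algebra fact that $x \land y = \top$ implies $x = y = \top$ (equivalently, $\top$ is join-irreducible from below, or simply that $\top$ is the greatest element and $x \land y \preccurlyeq x$). Everything else is a direct unfolding of conditions~1--3 and~6 of \Cref{definition:deontic:algebra} together with the lattice axioms L1--L12 holding in the Boolean algebra $\Algebra[A]$. I would present the downward-closure argument for $P$ in full, remark that $F$ is symmetric, and then give the two inclusions for $P \cap F$.
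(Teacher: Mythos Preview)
Your approach is correct and essentially identical to the paper's: closure under $\sqcup$ via condition~1, downward closure via absorption $a = a \sqcup (a \sqcap c)$ plus condition~1, the symmetric argument for $F$, and both inclusions for $P \cap F$ via conditions~3 and~6. In fact your justification that $\iact \in P \cap F$ (reading condition~3 right-to-left from $(\iact = \iact) = \top$) is more explicit than the paper's, which simply asserts $\P(\iact) = \F(\iact) = \top$.
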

\begin{proof}
	The result is obtained from the following:
		\medskip
		\begin{enumerate}%[(i)]
			\setlength{\itemsep}{5pt}
			\item
			For all $\{a,b\} \subseteq P$, ${a \sqcup b} \in P$.
			To see why, let $\{a,b\} \subseteq P$.
			Then, $\P(a) = \P(b) = \top$, and $\P(a) \land \P(b) = \top$.
			The properties of $\P$ in \Cref{definition:deontic:algebra} ensure $\P(a) \land \P(b) = \P(a \sqcup b)$.
			This implies $\P(a \sqcup b) = \top$; and so $(a \sqcup b) \in P$.

			\item
			For all $a \in P$ and $b \in |\Algebra[A]|$, ${(a \sqcap b)} \in P$.
			To see why, let $a \in P$ and $b \in |\Algebra[A]|$.
			We know $\P(a) = \top$ and $a = {a \sqcup (a \sqcap b)}$.
			This means $\P({a \sqcup (a \sqcap b)}) = \top$.
			The properties of $\P$ in \Cref{definition:deontic:algebra} ensure $\P(a \sqcup (a \sqcap b)) = \P(a) \land \P(a \sqcap b)$.
			This means, $\P(a) \land \P(a \sqcap b) = \top$.
			From our supposition, $\P(a \sqcap b) = \top$; and so $(a \sqcap b) \in P$.

			\item
			The arguments in 1 and 2 remain true if we replace $P$ and $\P$ for $F$ and $\F$, resp.

			\item
			${P \cap F} = \{ \iact \}$.
			To see why, note that $\P(\iact) = \F(\iact) = \top$; and so $\{\iact\} \subseteq {P \cap F}$.
			In turn, consider an arbitrary $a \in (P \cap F)$.
			Then, $\P(a) = \F(a) = \top$.
			This implies $\P(a) \land \F(a) = \top$, and so $(a = \iact) =_{\Algebra[F]} \top$.
			The `iff' condition in \Cref{definition:deontic:algebra} ensures $a =_{\Algebra[A]} \iact$.
			Since $a$ is arbitrary, the last step tells us that any element in ${P \cap F}$ is equal to $\iact$.
			Therefore, ${P \cap F} \subseteq \{ \iact \}$. \qedhere
		\end{enumerate}
\end{proof}

We proceed to connect the deontic action algebras in $\DALVariety$ with the theorems of $\DAL$.

\medskip
\begin{theorem}[Soundness]\label[theorem]{theorem:soundness}
	If $\varphi$ is a theorem of \DAL, then, $\DALVariety \vDash {\varphi \doteq \top}$.
\end{theorem}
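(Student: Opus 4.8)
The plan is to prove soundness by the standard induction on the length of a Hilbert-style proof of $\varphi$. Since the only inference rule is modus ponens, it suffices to verify two things: (i) every axiom $\psi$ satisfies $\DALVariety \vDash \psi \doteq \top$; and (ii) the property of ``evaluating to $\top$ under every interpretation into every deontic action algebra'' is preserved by modus ponens. For (ii), fix a deontic action algebra $\DAlgebra = \tup{\Algebra[A], \Algebra[F], \E, \P, \F}$ and an interpretation $h : \TAlgebra \to \DAlgebra$; if $h(\psi) = \top$ and $h(\psi \to \chi) = \top$, then since $\psi \to \chi$ abbreviates $\lnot\psi \lor \chi$ and $\Algebra[F]$ is a Boolean algebra, $h(\psi \to \chi) = \overline{h(\psi)} \lor h(\chi) = \bar{\top} \lor h(\chi) = \bot \lor h(\chi) = h(\chi)$, so $h(\chi) = \top$. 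As this holds for every $\DAlgebra$ and $h$, we get $\DALVariety \vDash \chi \doteq \top$.

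For (i), I would organize the axioms into the four groups of Figure~\ref{dal:axioms}. Each propositional axiom A1$'$--A13$'$, LEM$'$ is an instance of a Boolean-algebra identity (associativity, commutativity, idempotence, absorption, distributivity, the bound laws, complement laws, and excluded middle); since these are stated as $\liff$ equivalences and $\varphi \liff \psi$ abbreviates $(\lnot\varphi \lor \psi) \land (\lnot\psi \lor \varphi)$, the fact that $\Algebra[F]$ is a Boolean algebra gives $h(\varphi) = h(\psi)$ for the relevant subterms, hence $h$ of the $\liff$ formula is $\top$. The action axioms A1--A13, LEM have the form $\alpha = \beta$ where $\alpha, \beta$ are action terms provably equal in every Boolean algebra; since $\Algebra[A]$ is a Boolean algebra, $h_{\sorta}(\alpha) =_{\Algebra[A]} h_{\sorta}(\beta)$, and the biconditional clause of Definition~\ref{definition:deontic:algebra} then gives $h(\alpha = \beta) =_{\Algebra[F]} \top$. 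For E1, $\alpha = \alpha$ evaluates to $\top$ again by the biconditional clause and reflexivity in $\Algebra[A]$. The deontic axioms D1--D3 are precisely conditions 1--3 in Definition~\ref{definition:deontic:algebra}, so they hold by definition of a deontic action algebra; for D3 one additionally uses the biconditional clause to identify $h(a = \iact) = \top$ with $h(a) =_{\Algebra[A]} \iact$.

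The one axiom requiring genuine care is the substitution axiom E2, $(\alpha = \beta \land \varphi) \to \varphi_{\alpha}^{\beta}$, where $\varphi_{\alpha}^{\beta}$ denotes replacing (some occurrences of) $\alpha$ by $\beta$ in $\varphi$. To handle it I would argue: suppose $h(\alpha = \beta \land \varphi) = \top$, so $h(\alpha = \beta) = \top$ and $h(\varphi) = \top$; by the biconditional clause $h_{\sorta}(\alpha) =_{\Algebra[A]} h_{\sorta}(\beta)$, and then a straightforward induction on the term structure of $\varphi$ shows $h(\varphi) = h(\varphi_{\alpha}^{\beta})$ — the base and propositional-connective cases are immediate, and the cases where $\alpha$ sits inside a $\perm$, $\forb$, or $=$ subterm use that $h$ is a homomorphism so the value depends only on $h_{\sorta}$ of the action arguments, which is unchanged. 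Hence $h(\varphi_{\alpha}^{\beta}) = \top$, so $h$ of the whole implication is $\top$. (Conditions 4 and 5 of Definition~\ref{definition:deontic:algebra} are not needed for E2 itself but could be used for an alternative argument restricted to $\perm$/$\forb$ contexts; with the homomorphism argument the substitution lemma is clean.) Thus I expect the substitution axiom E2 to be the main obstacle, not because it is deep but because it is the only case needing an auxiliary substitution lemma rather than a one-line appeal to an axiom of Definition~\ref{definition:deontic:algebra} or an identity valid in all Boolean algebras. Assembling (i) and (ii) by induction on proof length completes the argument.
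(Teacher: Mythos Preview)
Your overall architecture---induction on proof length, closure under modus ponens via the Boolean identity $\overline{\top}\lor h(\chi)=h(\chi)$, and case analysis over the four axiom groups---matches the paper's proof. The treatments of A1--A13/LEM, A1$'$--A13$'$/LEM$'$, E1, and D1--D3 are fine.

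There is, however, a genuine gap in your handling of E2. Your argument runs: ``suppose $h(\alpha=\beta\land\varphi)=\top$, so $h(\alpha=\beta)=\top$ \ldots''. But in an arbitrary Boolean algebra $\Algebra[F]$, showing that $h(\text{consequent})=\top$ whenever $h(\text{antecedent})=\top$ does \emph{not} establish $h(\text{antecedent}\to\text{consequent})=\top$; you need $h(\text{antecedent})\preccurlyeq h(\text{consequent})$. The operation $\E$ may take values strictly between $\bot$ and $\top$: condition~(6) of Definition~\ref{definition:deontic:algebra} only fixes when $\E(a,b)=\top$, and says nothing about its value when $a\neq_{\Algebra[A]}b$. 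So when $h_{\sorta}(\alpha)\neq_{\Algebra[A]}h_{\sorta}(\beta)$ you cannot invoke your substitution lemma (which relies on $h_{\sorta}(\alpha)=h_{\sorta}(\beta)$), yet you still must show, e.g., $\E(h_{\sorta}(\alpha),h_{\sorta}(\beta))\land\P(h_{\sorta}(\alpha))\preccurlyeq\P(h_{\sorta}(\beta))$.

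This is exactly what conditions~(4) and~(5) supply, and it is how the paper proceeds: it proves the instance $((\alpha=\beta)\land\perm(\alpha))\to\perm(\beta)$ by rewriting $\perm(h_{\sorta}(\beta))$ as $((h_{\sorta}(\alpha)=h_{\sorta}(\beta))\land\perm(h_{\sorta}(\alpha)))\lor\perm(h_{\sorta}(\beta))$ via condition~(4), then collapsing $\lnot x\lor x\lor y$ to $\top$. Your parenthetical remark that conditions~(4) and~(5) ``are not needed for E2 itself'' is therefore mistaken; they are precisely the ingredient that handles the non-$\top$ values of $\E$, and without them E2 cannot be verified over algebras whose formula component has more than two elements.
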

\begin{proof} %(Sketch)
	Let $\DAlgebra \in \DALVariety$ and $h: \TAlgebra \to \DAlgebra$ be any interpretation.
	We continue by induction on the length of the proof of $\varphi$.
	We prove the more interesting cases;  others are similar.

	\medskip
	\begin{enumerate}[leftmargin=\parindent]
		\setlength{\itemsep}{5pt}
		
		\item
		$h_{\sortf}(\perm(\alpha\sqcup\beta) \liff (\perm(\alpha) \land \perm(\beta)))
			=_{\Algebra[F]} \top$.
		The result follows from items (a)--(c) below.

			\medskip
			\begin{enumerate}[leftmargin=\parindent]
				\setlength{\itemsep}{5pt}
				\item
				\begin{description}[leftmargin=\parindent]
					\item[]
					$h_{\sortf}(\perm(\alpha\sqcup\beta) \liff (\perm(\alpha) \land \perm(\beta))) =_{\Algebra[F]}$
					\item[]
					$h_{\sortf}(
						(\lnot \perm(\alpha\sqcup\beta) \lor (\perm(\alpha) \land \perm(\beta)))
						\land
						(\lnot (\perm(\alpha) \land \perm(\beta)) \lor \perm(\alpha\sqcup\beta))
						)=_{\Algebra[F]}$
					\item[]
					$h_{\sortf}(
						\lnot \perm(\alpha\sqcup\beta) \lor (\perm(\alpha) \land \perm(\beta)))
					\land
						h_{\sortf}(
						(\lnot (\perm(\alpha) \land \perm(\beta)) \lor \perm(\alpha\sqcup\beta)))$.
				\end{description}

				\item 
				\begin{description}[leftmargin=\parindent]
					\item[]
					$h_{\sortf}(
						\lnot
							\perm(\alpha\sqcup\beta)
							\lor
							(\perm(\alpha) \land \perm(\beta))) =_{\Algebra[F]}$
					\item[]
						$\lnot
							\perm(h_{\sorta}(\alpha \sqcup \beta))
							\lor
							(\perm(h_{\sorta}(\alpha)) \land \perm(h_{\sorta}(\beta))) =_{\Algebra[F]}$
					\item[]
						$\lnot
							\perm(h_{\sorta}(\alpha \sqcup \beta))
							\lor
							(\perm(h_{\sorta}(\alpha) \sqcup h_{\sorta}(\beta))) =_{\Algebra[F]}$
							\dotfill \Cref{definition:deontic:algebra}(1)
					\item[]
						$\lnot
							\perm(h_{\sorta}(\alpha \sqcup \beta))
							\lor
							(\perm(h_{\sorta}(\alpha \sqcup \beta))) =_{\Algebra[F]} \top$.
				\end{description}

				\item $h_{\sortf}(
					\lnot(\perm(\alpha) \land \perm(\beta))
					\lor
					\perm(\alpha\sqcup\beta)) =_{\Algebra[F]} \top$ is similar to (b).
			\end{enumerate}

		\item
		$h_{\sortf}((\perm(\alpha) \land \forb(\alpha)) \to (\alpha = \iact))
			=_{\Algebra[F]} \top$.
		Then,

			\medskip
			\begin{description}[leftmargin=\parindent]
				\item[]
				$h_{\sortf}((\perm(\alpha) \land \forb(\alpha)) \to (\alpha = \iact)) =_{\Algebra[F]}$
				\item[]
				$h_{\sortf}(
					\lnot (\perm(\alpha) \land \forb(\alpha)) \lor (\alpha = \iact)) =_{\Algebra[F]}$
				\item[]
				$\lnot
					(\perm(h_{\sorta}(\alpha)) \land \forb(h_{\sorta}(\alpha)))
					\lor
					h_{\sortf}(\alpha = \iact) =_{\Algebra[F]}$
				\item[]
				$\lnot
					(\perm(h_{\sorta}(\alpha)) \land \forb(h_{\sorta}(\alpha)))
					\lor
					h_{\sortf}(\alpha = \iact) =_{\Algebra[F]}$
				\item[]
				$\lnot
					(h_{\sorta}(\alpha) = \iact)
					\lor
					h_{\sortf}(\alpha = \iact) =_{\Algebra[F]}$ \dotfill \Cref{definition:deontic:algebra}(3)
				\item[]
				$\lnot
					h_{\sortf}(\alpha = \iact)
					\lor
					h_{\sortf}(\alpha = \iact) =_{\Algebra[F]} \top$.
			\end{description}

		\item
		$h_{\sortf}(((\alpha = \beta) \land \perm(\alpha)) \to \perm(\beta))
			=_{\Algebra[F]} \top$.
			Then,
			
			\medskip
			\begin{description}[leftmargin=\parindent]
				\item[]
					$h_{\sortf}(((\alpha \,{=}\, \beta) {\land} \perm(\alpha)) \to \perm(\beta)) =_{\Algebra[F]}$
				\item[]
					$\lnot
						h_{\sortf}((\alpha \,{=}\, \beta) {\land} \perm(\alpha))
					\lor
					\perm(h_{\sorta}(\beta)) =_{\Algebra[F]}$
				\item[]
					$\lnot
						h_{\sortf}((\alpha \,{=}\, \beta) {\land} \perm(\alpha))
					\lor
						((h_{\sorta}(\alpha) \,{=}\, h_{\sorta}(\beta)) {\land} \perm(h_{\sorta}(\alpha)))
						\lor
						\perm(h_{\sorta}(\beta)) =_{\Algebra[F]}$
						\dotfill \Cref{definition:deontic:algebra}(4)
				\item[]
					$\lnot
						h_{\sortf}((\alpha \,{=}\, \beta) {\land} \perm(\alpha))
					\lor 
						h_{\sortf}((\alpha \,{=}\, \beta) {\land} \perm(\alpha))
					\lor
						\perm(h_{\sorta}(\beta)) =_{\Algebra[F]} \top$.
			\end{description}

			\smallskip
			\noindent
			The result in (3.) is a particular case of the axioms E2 in \Cref{dal:axioms}.
			Other instances can be proven by induction on the size of the formula $\varphi$.
			\qedhere
	\end{enumerate}
\end{proof}

\Cref{theorem:soundness} implies that not every formula of \DAL is provable in the logic.
In particular, non-theorems are not provable.
To see why, consider a theorem $\varphi$, the deontic action algebra $\DAlgebra = \langle \Algebra[A], \Algebra[2], \E, \P, \F \rangle$, and any interpretation $h: \TAlgebra \to \DAlgebra$.
From \Cref{theorem:soundness}, we have $h_{\sortf}(\varphi) = \top$.
Since $h$ is a homomorphism, $h_{\sortf}(\lnot \varphi) = \iact$.
Using the contrapositive of \Cref{theorem:soundness}, $\lnot \varphi$ is not a theorem; i.e., it is not provable.

The converse of \Cref{theorem:soundness}, i.e., the algebraic completeness of \DAL, requires us to show that every non-theorem $\varphi$ of \DAL is falsified in some deontic action algebra $\DAlgebra$ (i.e., there is an interpretation $h: \TAlgebra \to \DAlgebra$ s.t.\ $h_{\sortf}(\varphi) \neq \top$).
We arrive at this result introducing an appropriate notion of congruence, and constructing a quotient algebra via this congruence. 

\medskip
\begin{proposition}\label[proposition]{prop:congruence}
	Let $\TAlgebra$ be the deontic term algebra, and ${\cong_{\sorta}} \subseteq |\TAlgebra|_{\sorta} \times |\TAlgebra|_{\sorta}$ and ${\cong_{\sortf}} \subseteq |\TAlgebra|_{\sortf} \times |\TAlgebra|_{\sortf}$ be s.t.: 1.~$\alpha \cong_{\sorta} \beta$ iff $\alpha = \beta$ is a theorem, and 2.~$\varphi \cong_{\sortf} \psi$ iff $\varphi \liff \psi$ is a theorem.
	It follows that $\cong_{\sorta}$ and $\cong_{\sortf}$ define a congruence $\cong$ on $\TAlgebra$.
\end{proposition}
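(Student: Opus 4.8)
The plan is to check the two defining requirements of a congruence on the many-sorted term algebra $\TAlgebra$: that $\cong_{\sorta}$ and $\cong_{\sortf}$ are equivalence relations on $|\TAlgebra|_{\sorta}$ and $|\TAlgebra|_{\sortf}$, and that they are compatible with every operation symbol of $\Sigma$. The only ingredients needed are the equality axioms E1 and E2 together with the fact that A1'--A13', LEM', and modus ponens derive all classical propositional tautologies (standard for the presentation of Boolean algebras through biconditional identities); the deontic axioms D1--D3 play no role here, as they are needed only later, when one verifies that the quotient $\TAlgebra/{\cong}$ is a deontic action algebra. Reflexivity of $\cong_{\sorta}$ is E1, and reflexivity of $\cong_{\sortf}$ is the propositional theorem $\vdash \varphi \liff \varphi$. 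Symmetry of $\cong_{\sorta}$ comes from the E2-instance with formula $\alpha = \alpha$, which gives $\vdash (\alpha = \beta \land \alpha = \alpha) \to (\beta = \alpha)$ and hence, with E1, $\vdash (\alpha = \beta) \to (\beta = \alpha)$; transitivity of $\cong_{\sorta}$ is the E2-instance with formula $\alpha = \beta$ and equality $\beta = \gamma$, namely $\vdash (\beta = \gamma \land \alpha = \beta) \to (\alpha = \gamma)$. Symmetry and transitivity of $\cong_{\sortf}$ are just the propositional theorems $(\varphi \liff \psi) \to (\psi \liff \varphi)$ and $((\varphi \liff \psi) \land (\psi \liff \chi)) \to (\varphi \liff \chi)$.

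For compatibility, the constants $\iact, \uact, \bot, \top$ impose nothing. The formula connectives $\lor, \land, \lnot$ are handled by the replacement property of $\liff$: $((\varphi \liff \varphi') \land (\psi \liff \psi')) \to ((\varphi \lor \psi) \liff (\varphi' \lor \psi'))$ and its analogues are propositional theorems, so $\varphi \cong_{\sortf} \varphi'$ and $\psi \cong_{\sortf} \psi'$ yield $\varphi \lor \psi \cong_{\sortf} \varphi' \lor \psi'$, and similarly for $\land$ and $\lnot$. For the heterogeneous symbols $\perm$ and $\forb$, if $\alpha \cong_{\sorta} \alpha'$ then the E2-instance with formula $\perm(\alpha)$ and equality $\alpha = \alpha'$ gives $\vdash \perm(\alpha) \to \perm(\alpha')$, and the converse follows by symmetry of $\cong_{\sorta}$, so $\perm(\alpha) \cong_{\sortf} \perm(\alpha')$; $\forb$ is identical. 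For $=$, if $\alpha \cong_{\sorta} \alpha'$ and $\beta \cong_{\sorta} \beta'$, two successive E2-instances (rewriting $\alpha$ to $\alpha'$, then $\beta$ to $\beta'$, in the formula $\alpha = \beta$) give $\vdash (\alpha = \beta) \to (\alpha' = \beta')$, with the converse again by symmetry, so $(\alpha = \beta) \cong_{\sortf} (\alpha' = \beta')$. Finally, for the action operations $\sqcup, \sqcap, \bar{~}$, if $\alpha \cong_{\sorta} \alpha'$ and $\beta \cong_{\sorta} \beta'$ then, starting from $\vdash \alpha \sqcup \beta = \alpha \sqcup \beta$ (E1) and using E2 to rewrite the right-hand side first by $\alpha = \alpha'$ and then by $\beta = \beta'$, one gets $\vdash \alpha \sqcup \beta = \alpha' \sqcup \beta'$ (combining with transitivity of $\cong_{\sorta}$); the cases of $\sqcap$ and $\bar{~}$ are the same.

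The only mildly delicate point is the bookkeeping when an action term occurs more than once, since $\varphi_{\alpha}^{\beta}$ rewrites chosen occurrences: derivations such as that of $\vdash \alpha \sqcup \beta = \alpha' \sqcup \beta'$ go through a short chain of E2-instances glued by the transitivity of $\cong_{\sorta}$ established in the first step. This is the main obstacle and it is a minor one; everything else is routine. Having shown that each $\cong_s$ is an equivalence relation and that $\cong = \langle \cong_{\sorta}, \cong_{\sortf} \rangle$ respects all operation symbols of $\Sigma$, we conclude that $\cong$ is a congruence on $\TAlgebra$.
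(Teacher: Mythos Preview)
Your argument is correct and complete: you verify that each $\cong_s$ is an equivalence relation and that the pair is compatible with every operation symbol of $\Sigma$, using E1, E2, and classical propositional reasoning in exactly the right way. The paper itself states this proposition without proof, treating it as a routine Lindenbaum--Tarski construction; your write-up supplies precisely the details one would expect in such a verification, so there is nothing to contrast.
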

\medskip

\begin{proposition}\label[proposition]{prop:lindenbaum}
	The quotient of the deontic action term algebra $\TAlgebra$ under $\cong$ is a structure
		$\LTAlgebra = \tup{\Algebra[A], \Algebra[F], \E, \P, \F}$
	where:
		1.~$|\Algebra[A]| = {\act/{\cong_{\sorta}}}$,
		2.~$|\Algebra[F]| = {\form/{\cong_{\sortf}}}$, and
		3.~the operations in $\LTAlgebra$ are those induced by the equivalence classes in $\cong$.
		% \medskip
		% \begin{enumerate}[leftmargin=\parindent]
		% 	\item
		% 		$\Algebra[A]
		% 			= \tup{{\act/{\cong_{\sorta}}}, {\sqcup}, {\sqcap}, {\bar{~}}, {\iact}, {\uact}}$
		% 	\item
		% 		$\Algebra[F]
		% 			= \tup{{\form/{\cong_{\sortf}}}, {\lor}, {\land}, {\lnot}, {\bot}, {\top}}$
		% \end{enumerate}
		% \medskip
	It follows that $\LTAlgebra \in \DALVariety$.
\end{proposition}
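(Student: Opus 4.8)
The plan is to verify that the quotient structure $\LTAlgebra = \tup{\Algebra[A], \Algebra[F], \E, \P, \F}$ obtained from $\TAlgebra$ by $\cong$ satisfies every clause in \Cref{definition:deontic:algebra}. By \Cref{prop:congruence}, $\cong$ is a congruence, so the quotient is a well-defined $\Sigma$-algebra and all operations act on representatives; hence each condition reduces to a statement about provability in \DAL, which we discharge using the axioms in \Cref{dal:axioms} together with propositional reasoning and \emph{modus ponens}.

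First I would show that $\Algebra[A] = \tup{\act/{\cong_{\sorta}}, \sqcup, \sqcap, \bar{~}, \iact, \uact}$ is a Boolean algebra. This is a routine Lindenbaum-style argument: each of the axioms A1--A13 and LEM is an identity between actions that is a theorem (being an axiom, via E1 and E2 for the equality predicate), so the corresponding equation holds in the quotient; conversely the BDL/Boolean equations L1--L12 and LEM of \Cref{definition:boolean:algebra} are exactly the images of A1--A13 and LEM under the translation. The argument for $\Algebra[F] = \tup{\form/{\cong_{\sortf}}, \lor, \land, \lnot, \bot, \top}$ is the same, now using A1'--A13' and LEM' and the fact that $\cong_{\sortf}$ is defined via provable biconditionals, so that $[\varphi]\preccurlyeq[\psi]$ in $\Algebra[F]$ iff $\vdash \varphi \to \psi$ (here one uses that classical propositional logic is complete, or simply checks the lattice laws directly from A1'--A13'). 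I would also note $[\varphi]=\top$ in $\Algebra[F]$ iff $\vdash\varphi$, which is what makes completeness fall out later.

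Next I would verify the five deontic/equality clauses. For clause (1), $\P(a\sqcup b) = \P(a)\land\P(b)$ holds in $\LTAlgebra$ because D1 is a theorem, namely $\vdash \perm(\alpha\sqcup\beta)\liff(\perm(\alpha)\land\perm(\beta))$, so the two sides are $\cong_{\sortf}$-equivalent; clause (2) is identical using D2. For clause (3), $\P(a)\land\F(a) = (a=\iact)$ follows from D3, $\vdash(\perm(\alpha)\land\forb(\alpha))\liff(\alpha=\iact)$. For clauses (4) and (5) — $(a=b)\land\P(a)\preccurlyeq\P(b)$ and the analogue for $\F$ — I would observe that these are instances of the substitution axiom E2: taking $\varphi$ to be $\perm(\alpha)$, the formula $(\alpha=\beta\land\perm(\alpha))\to\perm(\beta)$ is a theorem, hence $[(\alpha=\beta)\land\perm(\alpha)]\preccurlyeq[\perm(\beta)]$ in $\Algebra[F]$. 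Finally, the `iff' clause ${a=_{\Algebra[A]}b}$ iff ${(a=b)=_{\Algebra[F]}\top}$ unfolds, on the left, to $\alpha\cong_{\sorta}\beta$, i.e.\ $\vdash\alpha=\beta$ (by definition of $\cong_{\sorta}$), and on the right, to $[\alpha=\beta]=[\top]$ in $\Algebra[F]$, i.e.\ $\vdash(\alpha=\beta)\liff\top$; these are equivalent — one direction is propositional, the other uses that $\vdash\alpha=\beta$ is exactly the statement needed, so the two coincide.

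The only genuinely delicate points are bookkeeping ones rather than conceptual ones: checking that the operations in \Cref{prop:lindenbaum} really are well-defined on $\cong$-classes — which is precisely \Cref{prop:congruence} and so may be assumed — and being careful that the definition of $\cong_{\sortf}$ via $\liff$-theorems interacts correctly with the lattice order on $\Algebra[F]$, so that clauses (4), (5) phrased with $\preccurlyeq$ translate to the right implications. I expect the main (modest) obstacle to be verifying that $\Algebra[F]$ is genuinely a Boolean algebra from A1'--A13' and LEM' alone, i.e.\ that these propositional axiom schemes together with \emph{modus ponens} prove all the BDL laws and the complementation laws as $\liff$-theorems; this is the familiar Lindenbaum construction for classical propositional logic, so it can be cited or sketched rather than carried out in full.
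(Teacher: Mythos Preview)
Your proposal is correct and follows essentially the same route as the paper: both take for granted (or sketch) that the quotients $\Algebra[A]$ and $\Algebra[F]$ are Boolean via A1--A13, LEM and A1'--A13', LEM', and then discharge the deontic conditions by reading D1, D2, D3 as $\liff$-theorems, E2 for the $\preccurlyeq$-conditions on $\P,\F$, and the definitions of $\cong_{\sorta},\cong_{\sortf}$ for the `iff' clause. The paper is simply more terse---it declares the Boolean-algebra part ``clear'' and writes out only three representative cases (clause~1 via D1, clause~3 via D3, and the `iff' clause)---whereas you spell out the full checklist, but there is no substantive difference in strategy.
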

\begin{proof}%[Sketch]
	It is clear that $\Algebra[A]$ and $\Algebra[F]$ are Boolean algebras.
	Let us use $\check{\_}$ to indicate the operations in $\LTAlgebra$ induced by $\cong$, and to separate them from the corresponding symbols.
	The result is concluded if $\check{\E}$, $\check{\P}$, and $\check{\F}$ satisfy the conditions in \Cref{definition:deontic:algebra}.
  	We prove some interesting cases only.
	% We drop the subscript $\cong$ unless it is strictly necessary to improve readability.

	\medskip
	\begin{enumerate}[leftmargin=\parindent]
		\setlength{\itemsep}{7pt}
		
		\item%
		{$\check{\P}([\alpha \sqcup \beta]_{\cong_{\sorta}}) =_{\Algebra[F]} \check{\P}([\alpha]_{\cong_{\sorta}}) \land \check{\P}([\beta]_{\cong_{\sorta}})$}
		% See (a) below.

			\medskip
			\begin{description}[leftmargin=\parindent]
				\item[]
					$\check{\P}([\alpha \sqcup \beta]_{\cong_{\sorta}}) =_{\Algebra[F]}
					[\perm(\alpha \sqcup \beta)]_{\cong_{\sortf}} =_{\Algebra[F]}$
				\item[]
					$[\perm(\alpha) \land \perm(\beta)]_{\cong_{\sortf}} =_{\Algebra[F]}$
					\dotfill \Cref{dal:axioms}(D1)
				\item[]
					$[\perm(\alpha)]_{\cong_{\sortf}} \land [\perm(\beta)]_{\cong_{\sortf}} =_{\Algebra[F]}
					\check{\P}([\alpha]_{\cong_{\sorta}}) \land \check{\P}([\beta]_{\cong_{\sorta}})$.
			\end{description}

		\item%
		{$\check{\P}([\alpha]_{\cong_{\sorta}}) \land \check{\F}([\alpha]_{\cong_{\sorta}}) =_{\Algebra[F]}
			[\alpha]_{\cong_{\sorta}} \mathrel{\check{=}} \iact$}
		% See below.

			\medskip
			\begin{description}[leftmargin=\parindent]
				\item[]
					$
					\check{\P}([\alpha]_{\cong_{\sorta}})
					\land
					\check{\F}([\alpha]_{\cong_{\sorta}})
					=_{\Algebra[F]}
					[\perm(\alpha)]_{\cong_{\sortf}}
					\land
					[\forb(\alpha)]_{\cong_{\sortf}}
					=_{\Algebra[F]}$
				\item[]
					$
					[\perm(\alpha) \land \forb(\alpha)]_{\cong_{\sortf}}
					=_{\Algebra[F]}
					$
				\item[]
					$[\alpha = \iact]_{\cong_{\sortf}} =_{\Algebra[F]}$
					\dotfill \Cref{dal:axioms}(D3)
				\item[]
					$[\alpha]_{\cong_{\sorta}} \mathrel{\check{=}} \iact$.
			\end{description}
		
		\item%
		{$[\alpha]_{\cong_{\sorta}} =_{\Algebra[A]} [\beta]_{\cong_{\sorta}}$ iff
			$[\alpha]_{\cong_{\sorta}} \mathrel{\check{=}} [\beta]_{\cong_{\sorta}} =_{\Algebra[F]} \top$}.
		% See below.

			\medskip
			\begin{description}[leftmargin=\parindent]
				\setlength{\itemsep}{2pt}
				\item[Left-to-right:]
				Let $[\alpha]_{\cong_{\sorta}} =_{\Algebra[A]} [\beta]_{\cong_{\sorta}}$.
				This assumption implies, by definition, that ${\alpha = \beta}$ is a theorem.
				Immediately, ${(\alpha = \beta) \liff \top}$ is also a theorem.
				But this means, $[\alpha = \beta]_{\cong_{\sortf}} =_{\Algebra[F]} \top$.
				Thus, $[\alpha]_{\cong_{\sorta}} \mathrel{\check{=}} [\beta]_{\cong_{\sorta}} =_{\Algebra[F]} \top$.

				\item[Right-to-left:]
				Similarly, let $[\alpha]_{\cong_{\sorta}} \mathrel{\check{=}} [\beta]_{\cong_{\sorta}} =_{\Algebra[F]} \top$.
				Then, $[\alpha = \beta]_{\cong_{\sortf}} = \top$.
				This means $\alpha = \beta$ is a theorem.
				And so, $[\alpha]_{\cong_{\sorta}} =_{\Algebra[A]} [\beta]_{\cong_{\sorta}}$. \qedhere
			\end{description}
	\end{enumerate}	
\end{proof}

We call the quotient algebra $\LTAlgebra$ in \Cref{prop:lindenbaum} the Lindenbaum-Tarski deontic action algebra.
In brief, $\LTAlgebra$ is a canonical algebra that captures theoremhood in \DAL.
From this observation, we obtain the following result.

\medskip
\begin{theorem}[Completeness]\label[theorem]{theorem:completeness}
	If $\DALVariety \vDash {\varphi \doteq \top}$, then, $\varphi$ is a theorem.
\end{theorem}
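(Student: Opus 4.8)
The plan is to prove the contrapositive in the standard Lindenbaum--Tarski style, using the canonical algebra $\LTAlgebra$ built in \Cref{prop:lindenbaum}. Assume $\varphi$ is not a theorem of \DAL; I want to exhibit a deontic action algebra $\DAlgebra \in \DALVariety$ and an interpretation $h$ falsifying $\varphi$, i.e.\ with $h_{\sortf}(\varphi) \neq \top$. Take $\DAlgebra$ to be exactly the Lindenbaum--Tarski algebra $\LTAlgebra$, which by \Cref{prop:lindenbaum} indeed lies in $\DALVariety$. Define the interpretation $h : \TAlgebra \to \LTAlgebra$ to be the canonical quotient map, sending each basic action symbol $\mathsf{a}_i \in \bact$ to its congruence class $[\mathsf{a}_i]_{\cong_{\sorta}}$; since $\TAlgebra$ is the term algebra freely generated (on the action variables, with no formula variables) this extends uniquely to a homomorphism.

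The key computational step is the routine induction showing that $h$ is just the quotient projection on both sorts: for every action term $\alpha$, $h_{\sorta}(\alpha) = [\alpha]_{\cong_{\sorta}}$, and for every formula $\varphi$, $h_{\sortf}(\varphi) = [\varphi]_{\cong_{\sortf}}$. This follows because $h$ is a homomorphism and the operations of $\LTAlgebra$ are, by construction, precisely those induced on equivalence classes; the heterogeneous operations $\perm$, $\forb$, $=$ are handled by the same argument since their interpretations in $\LTAlgebra$ were defined as $\check{\P}([\alpha]) = [\perm(\alpha)]$, etc. Then I observe the bridge between theoremhood and the top element: $[\varphi]_{\cong_{\sortf}} = \top$ in $\LTAlgebra$ iff $\varphi \liff \top$ is a theorem iff $\varphi$ is a theorem (the last equivalence using that $\varphi \to (\varphi \liff \top)$ and $(\varphi \liff \top) \to \varphi$ are derivable from the propositional axioms A1'--A13', LEM' via modus ponens). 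Consequently, since $\varphi$ is \emph{not} a theorem, $h_{\sortf}(\varphi) = [\varphi]_{\cong_{\sortf}} \neq \top$, which gives $\LTAlgebra, h \nvDash \varphi \doteq \top$, hence $\DALVariety \not\vDash \varphi \doteq \top$. Taking the contrapositive yields the statement.

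The main obstacle — really the only place needing care rather than bookkeeping — is verifying \Cref{prop:congruence}, that $\cong_{\sorta}$ and $\cong_{\sortf}$ genuinely form a congruence compatible with \emph{all} the operations in $\Sigma$, including the heterogeneous ones. For the purely propositional and Boolean-action operations this is classical (replacement of provable equivalents, using A1'--A13' and A1--A13). The delicate cases are: (i) $\perm$ and $\forb$ must respect $\cong_{\sorta}$, i.e.\ $\alpha \cong_{\sorta} \beta$ implies $\perm(\alpha) \cong_{\sortf} \perm(\beta)$ — this is where axioms E1, E2 and D1--D3 do the work, since $\alpha = \beta$ being a theorem together with E2 lets us substitute $\beta$ for $\alpha$ inside $\perm(\cdot)$; and (ii) the equality operation $=$ must respect $\cong_{\sorta}$ in both arguments, again via E1 and E2. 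Once \Cref{prop:congruence} is in hand, everything downstream is mechanical. I would therefore state the proof of \Cref{theorem:completeness} tersely, citing \Cref{prop:lindenbaum} for membership in $\DALVariety$, noting that the quotient map is the desired falsifying interpretation, and invoking the theoremhood-vs-$\top$ equivalence established in the course of \Cref{prop:lindenbaum}.

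\begin{proof}
We prove the contrapositive. Suppose $\varphi$ is not a theorem of \DAL. Let $\LTAlgebra = \tup{\Algebra[A], \Algebra[F], \E, \P, \F}$ be the Lindenbaum--Tarski deontic action algebra of \Cref{prop:lindenbaum}; by that proposition $\LTAlgebra \in \DALVariety$. Let $h : \TAlgebra \to \LTAlgebra$ be the homomorphism determined by $h_{\sorta}(\mathsf{a}_i) = [\mathsf{a}_i]_{\cong_{\sorta}}$ for each $\mathsf{a}_i \in \bact$ (this determines $h$ uniquely, as $\bact$ is the set of variables of sort $\sorta$ and there are no variables of sort $\sortf$). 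A straightforward induction on the structure of terms, using that the operations of $\LTAlgebra$ are those induced by $\cong$, shows that $h_{\sorta}(\alpha) = [\alpha]_{\cong_{\sorta}}$ for every $\alpha \in \act$ and $h_{\sortf}(\psi) = [\psi]_{\cong_{\sortf}}$ for every $\psi \in \form$.

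Now, $[\psi]_{\cong_{\sortf}} =_{\Algebra[F]} \top$ in $\LTAlgebra$ iff $\psi \liff \top$ is a theorem, and, by propositional reasoning from A1'--A13' and LEM', the latter holds iff $\psi$ is a theorem. Since $\varphi$ is not a theorem by assumption, we get $h_{\sortf}(\varphi) = [\varphi]_{\cong_{\sortf}} \neq_{\Algebra[F]} \top$. Hence $\LTAlgebra, h \not\vDash \varphi \doteq \top$, and therefore $\DALVariety \not\vDash \varphi \doteq \top$. Taking the contrapositive, if $\DALVariety \vDash \varphi \doteq \top$, then $\varphi$ is a theorem.
\end{proof}
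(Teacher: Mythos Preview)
Your proposal is correct and follows essentially the same Lindenbaum--Tarski argument as the paper: prove the contrapositive, take $\LTAlgebra$ as the falsifying algebra (invoking \Cref{prop:lindenbaum} for membership in $\DALVariety$), use the canonical quotient map determined by $\mathsf{a}_i \mapsto [\mathsf{a}_i]_{\cong_{\sorta}}$ as the interpretation, and observe that $h_{\sortf}(\varphi) = [\varphi]_{\cong_{\sortf}} \neq \top$ because $\varphi$ is not a theorem. Your version is in fact more careful than the paper's, spelling out the induction that $h$ coincides with the quotient projection and the equivalence between $[\psi]_{\cong_{\sortf}} = \top$ and theoremhood of $\psi$.
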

\begin{proof}
	We show that if $\varphi$ is not a theorem, then $\DALVariety \nvDash {\varphi \doteq \top}$.
	Let $\varphi$ be a non-theorem.
	From the definition of $\cong$, $[\varphi]_{\cong_{\sortf}} \neq_{\Algebra[F]} \top$.
	Define a function $h: {\bact \to {\Algebra[A]/{\cong}}}$ that sends each $\mathsf{a}_i \in \bact$ to the equivalence class $[\mathsf{a}_i]_{\cong_{\sorta}}$.
	The function $h$ extends uniquely to an interpretation $\check{h}: \TAlgebra \to \LTAlgebra$ such that $\check{h}(\varphi) \neq_{\Algebra[F]} [\varphi]_{\cong_{\sortf}}$.
	Therefore, $\DALVariety \nvDash {\varphi \doteq \top}$.
	%\qed
\end{proof}

% \Cref{cor:completeness} is obtained via a standard argument in algebraic logic.

\begin{corollary}\label[corollary]{cor:completeness}
	$\DALVariety \vDash {\varphi \doteq \top}$ implies $\varphi$ is a tautology.
\end{corollary}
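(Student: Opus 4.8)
The plan is to chain together the two completeness results already available. By \Cref{theorem:completeness}, the hypothesis $\DALVariety \vDash \varphi \doteq \top$ gives that $\varphi$ is a theorem of \DAL. By Segerberg's completeness result, \Cref{th:segerber:completeness}, every theorem of \DAL is a tautology. Composing these two implications yields that $\varphi$ is a tautology, which is exactly the claim.

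Concretely, I would first restate the hypothesis and invoke \Cref{theorem:completeness} to obtain $\vdash \varphi$. Then I would invoke the ``only if'' direction of \Cref{th:segerber:completeness} to conclude that $\varphi$ holds in every deontic action model under every valuation, i.e.\ $\varphi$ is a tautology in the sense of \Cref{section:dal:semantics}. No new construction is needed, since the bridge between algebraic validity and relational validity is routed entirely through the shared notion of theoremhood.

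An alternative self-contained route would bypass \Cref{th:segerber:completeness} entirely: given a deontic action model $\DeonticModel = \tup{E,P,F}$ with a valuation $v$ falsifying $\varphi$, one builds a deontic action algebra by taking $\Algebra[A]$ to be (a subalgebra of) the powerset Boolean algebra $2^E$, $\Algebra[F]$ to be $\mathbf{2}$, and setting $\P(X) = \top$ iff $X \subseteq P$ and $\F(X) = \top$ iff $X \subseteq F$, then checking the five conditions of \Cref{definition:deontic:algebra} (the disjointness of $P$ and $F$ is precisely what makes condition 3 hold). The interpretation induced by $v$ would then witness $\DALVariety \nvDash \varphi \doteq \top$, giving the contrapositive. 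However, since \Cref{th:segerber:completeness} is already stated for reference, the short composition argument is preferable, and I would present that one.

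The only mild subtlety—hardly an obstacle—is bookkeeping about which direction of \Cref{th:segerber:completeness} is being used (``theorem $\Rightarrow$ tautology'', i.e.\ soundness of \DAL with respect to its relational semantics) and making sure the notion of ``tautology'' referenced in the corollary is the one defined in \Cref{section:dal:semantics}; both align, so the proof is a two-line deduction.
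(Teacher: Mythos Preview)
Your proposal is correct and matches the paper's own proof, which simply says the corollary is ``Immediate from \Cref{th:segerber:completeness,theorem:completeness}.'' Your chaining of algebraic completeness (\Cref{theorem:completeness}) with the soundness direction of Segerberg's result (\Cref{th:segerber:completeness}) is exactly what is intended.
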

\begin{proof}
	Immediate from \Cref{th:segerber:completeness,theorem:completeness}.
\end{proof}

% As usual then, the Lindenbaum-Tarski algebra can be seen as a canonical (algebraic) model providing counterexamples to non-valid formulas.

\subsection{Deontic Action Algebras and Deontic Action Models}

Interestingly, the algebraization of \DAL enjoys a Stone-type representation result connecting the algebraic semantics using deontic action algebras with the original semantics using deontic action models.
This connection provides us with another completeness result for the theorems of $\DAL$.

Recall that Stone's representation theorem establishes that every Boolean algebra is isomorphic to a field of sets~\cite{Stone36}. Such a result reveals a tight connection between the properties of an abstract structure with those of a \emph{concrete} one (a collection of sets).
This is also true for deontic action algebras.
We begin by introducing the definition of a concrete deontic action algebra.

%Just as Boolean algebras made of sets (i.e., fields of sets) are referred to as \emph{concrete} in algebraic logic, concrete deontic action algebras are deontic action algebras whose algebras of actions and formulas are fields of sets.
%This is made precise immediately below.

\medskip
\begin{definition}
	A deontic action algebra $\DAlgebra = \langle \Algebra[A], \Algebra[F],  \E, \P, \F \rangle$ is \emph{concrete} iff $\Algebra[A]$ and $\Algebra[F]$ are fields of sets.
  	Let $\DALVariety(0)$ be the class of concrete deontic algebras, for equations of the appropriate sort, we use $\DALVariety(0) \vDash \tau_1 \doteq \tau_2$ as the analogue of $\DALVariety \vDash \tau_1 \doteq \tau_2$ in \Cref{definition:deontic:algebra}.
\end{definition}
\medskip

We prove that validity in deontic action algebras reduces to validity in concrete deontic algebras.
In this way, concrete deontic algebras enable us to connect the algebraic semantics of \DAL with Segerberg's original semantics via Stone's duality.

\medskip
\begin{theorem}\label[theorem]{theorem:reducibility}
	It follows that $\DALVariety(0) \vDash \varphi \doteq \top$ iff $\DALVariety \vDash \varphi \doteq \top$.
\end{theorem}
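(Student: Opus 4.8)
The plan is to prove the non-trivial direction, namely that $\DALVariety \vDash \varphi \doteq \top$ follows from $\DALVariety(0) \vDash \varphi \doteq \top$; the converse is immediate since $\DALVariety(0) \subseteq \DALVariety$. The key idea is a Stone-type representation for deontic action algebras: given an arbitrary $\DAlgebra = \langle \Algebra[A], \Algebra[F], \E, \P, \F \rangle \in \DALVariety$, I would build a concrete deontic action algebra $\DAlgebra' \in \DALVariety(0)$ together with an embedding $e : \DAlgebra \to \DAlgebra'$, i.e.\ a pair of injective homomorphisms $e_{\sorta} : \Algebra[A] \hookrightarrow \Algebra[A]'$ and $e_{\sortf} : \Algebra[F] \hookrightarrow \Algebra[F]'$ that commute with $\P$, $\F$ and $=$. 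Then, assuming $\varphi$ is falsified somewhere in $\DALVariety$ (say $h : \TAlgebra \to \DAlgebra$ with $h_{\sortf}(\varphi) \neq \top$), the composite $e \circ h : \TAlgebra \to \DAlgebra'$ falsifies $\varphi$ in $\DALVariety(0)$, because injectivity of $e_{\sortf}$ gives $e_{\sortf}(h_{\sortf}(\varphi)) \neq e_{\sortf}(\top) = \top'$. Contrapositively this is exactly the claimed implication.

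The concrete construction would go as follows. Take $\Algebra[A]' = \mathsf{s}(\Algebra[A])$ and $\Algebra[F]' = \mathsf{s}(\Algebra[F])$, the Stone fields of sets, with $e_{\sorta} = \varphi_{\Algebra[A]}$ and $e_{\sortf} = \varphi_{\Algebra[F]}$ the Stone isomorphisms from the excerpt; these are already isomorphisms of Boolean algebras, so in fact $e$ is an isomorphism onto its image with the field-of-sets structure transported back. The operations $\E'$, $\P'$, $\F'$ on the concrete algebras are then \emph{defined} by transport of structure: $\P'(X) = \varphi_{\Algebra[F]}(\P(\varphi_{\Algebra[A]}^{-1}(X)))$ for $X \in \Algebra[A]'$, and symmetrically for $\F'$ and $\E'$. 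One then checks that $\DAlgebra' = \langle \Algebra[A]', \Algebra[F]', \E', \P', \F' \rangle$ satisfies conditions (1)--(5) and the `iff' clause of \Cref{definition:deontic:algebra}: each of these is an equational or quasi-equational condition preserved under isomorphism, so this verification is routine and mechanical — for instance $\P'(X \sqcup' Y) = \varphi_{\Algebra[F]}(\P(\varphi_{\Algebra[A]}^{-1}(X) \sqcup \varphi_{\Algebra[A]}^{-1}(Y))) = \varphi_{\Algebra[F]}(\P(\varphi_{\Algebra[A]}^{-1}(X)) \land \P(\varphi_{\Algebra[A]}^{-1}(Y))) = \P'(X) \land' \P'(Y)$, using that $\varphi_{\Algebra[A]}$ is a Boolean homomorphism and condition (1) in $\DAlgebra$. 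By construction $e$ commutes with all three heterogeneous operations, so it is a $\Sigma$-homomorphism, and it is injective since $\varphi_{\Algebra[A]}$, $\varphi_{\Algebra[F]}$ are.

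The main subtlety — and the step I would be most careful about — is not the embedding but rather making sure the definition of $\DALVariety(0)$ is used correctly: a concrete deontic action algebra only requires $\Algebra[A]$ and $\Algebra[F]$ to be fields of sets, with \emph{no} constraint on how $\P, \F$ interact with the set-theoretic structure, so the transported operations $\P', \F'$ are perfectly legitimate even though they need not be ``natural'' on sets. This is exactly why the reduction works so cleanly; if one had demanded, e.g., that $\P$ be given by ``$\P(X) = $ the set of points $\subseteq P$'' for some fixed $P$, the argument would fail and one would genuinely have to go back to Segerberg's models. I would state this point explicitly. An alternative, slightly slicker, phrasing avoids naming $\Algebra[A]', \Algebra[F]'$ at all: observe that $\DALVariety$ is closed under isomorphism and every Boolean algebra is isomorphic to a field of sets by Stone, hence every $\DAlgebra \in \DALVariety$ is isomorphic to some $\DAlgebra' \in \DALVariety(0)$; since isomorphisms preserve and reflect satisfaction of equations, $\DALVariety \nvDash \varphi \doteq \top$ iff $\DALVariety(0) \nvDash \varphi \doteq \top$, which is the claim. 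I would present the proof in this second form, as it is shortest and makes the routine nature of the computation transparent.
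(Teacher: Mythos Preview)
Your proposal is correct and follows essentially the same approach as the paper: apply Stone's representation theorem to each Boolean component, transport $\E$, $\P$, $\F$ along the resulting isomorphisms to obtain an isomorphic concrete deontic action algebra, and then compose the falsifying interpretation with the isomorphism. Your write-up is in fact more detailed than the paper's, which simply asserts that Stone duality yields an isomorphic concrete algebra and defines the transported interpretation by $h'(\mathsf{a}_i) = \varphi_{\Algebra[A]}(h(\mathsf{a}_i))$; your explicit verification that the transported operations still satisfy conditions (1)--(6), and your remark that the definition of ``concrete'' imposes no extra set-theoretic constraint on $\P$, $\F$, $\E$, fill in exactly what the paper leaves implicit.
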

\begin{proof}
	The left-to-right direction is straightforward.
	The proof for the right-to-left direction is by contradiction.
	Assume that $\DALVariety(0) \vDash \varphi \doteq \top$ and $\DALVariety \nvDash \varphi \doteq \top$.
	This means that we have a deontic action algebra $\DAlgebra = \langle \Algebra[A], \Algebra[F],  \E, \P, \F \rangle$ and an interpretation $h: \TAlgebra \to \DAlgebra$ s.t.\ $h_{\sortf}(\varphi) \neq_{\Algebra[F]} \top$.
	Via the Stone duality result for Boolean algebras, we can construct a concrete deontic action algebra $\DAlgebra' = \langle \Algebra[A]', \Algebra[F]',  \E', \P', \F' \rangle$ that is isomorphic to $\DAlgebra$.
	Moreover, we can define an interpretation $h': \TAlgebra \to \DAlgebra'$ s.t.\ $h'(a_i) = \varphi_{\Algebra[A]'}(h(a_i))$ (with  $\varphi_{\Algebra[A]'}$ being the Stone isomorphism for $\Algebra[A]'$).
	This construction ensures $h'(\varphi) \neq_{\Algebra[F]'} \top$; contradicting the original assumption that $\DALVariety(0) \vDash \varphi \doteq \top$.
\end{proof}

We can now link deontic action models with concrete deontic action algebras.

\medskip
\begin{definition}\label[definition]{def:mod2alg}
	Let $\DeonticModel = \langle E, P, F \rangle$ be a deontic action model, $v:\bact \to 2^{E}$ be a valuation on $\DeonticModel$, and $A = \set{v(\mathsf{a}_i)}{\mathsf{a}_i \in \bact}$.
	Define a concrete deontic action algebra $\algebra(\DeonticModel, v) = \langle \Algebra[A], \Algebra[2], \E, \P, \F \rangle$ where:
	\begin{align*}
			\Algebra[A] &= \tup{2^A, {\cup}, {\cap}, {}^{\complement}, \emptyset, A} &
			(a = b) =_{\Algebra[2]} \top &~\text{iff}~ a =_{\Algebra[A]} b &
			\P(a) = \top &~\text{iff}~ a \subseteq P \\
			&&&& \F(a) = \top &~\text{iff}~ a \subseteq F.
		\end{align*}
	Define also the interpretation $h: \TAlgebra \to \algebra(\DeonticModel, v)$ as the unique extension of $v$.
\end{definition}
\medskip

Similarly, concrete deontic algebras give rise to deontic action models.

\medskip
\begin{definition}\label{def:alg2mod}
	Let $\DAlgebra = \langle \Algebra[A], \Algebra[F],  \E, \P, \F \rangle$ be a concrete deontic algebra, $h: \TAlgebra \to \DAlgebra$ be an interpretation.
	Define a deontic action model $\model(\DAlgebra, h) = \langle E, P, F \rangle$ where:
	\begin{align*}
		E &= |\Algebra[A]| &
		P &= \bigcup \set{a}{\P(a) =_{\Algebra[F]} \top} &
		F &= \bigcup \set{a}{\F(a) =_{\Algebra[F]} \top}.
	\end{align*}
	Define also a valuation $v$ on $\model(\DAlgebra, f)$ as the restriction of $h$ to $\bact$.
\end{definition}
\medskip

If seen as operators, $\model$ and $\algebra$ are inverses of each other.

\medskip
\begin{theorem}\label[theorem]{theorem:inverses}
	It follows that:
		$\algebra(\model(\DAlgebra,v),h)\!=\!\DAlgebra$; and
		$\model(\algebra(\DeonticModel,v),h)\!=\!\DeonticModel$.
\end{theorem}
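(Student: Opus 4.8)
The plan is to prove the two equalities separately, in each case by unfolding the definitions of $\algebra$ (\Cref{def:mod2alg}) and $\model$ (\Cref{def:alg2mod}) and checking that the five components of the resulting structure coincide with those of the starting structure. Since both $\model$ and $\algebra$ carry along an interpretation/valuation, I should be careful to state that the valuations also match, though the theorem as phrased only asserts equality of the algebra and the model; I would remark that the associated valuation/interpretation is preserved as well.

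For the first identity, $\algebra(\model(\DAlgebra,h),h) = \DAlgebra$: start from a concrete deontic action algebra $\DAlgebra = \langle \Algebra[A], \Algebra[2], \E, \P, \F \rangle$ (note that concreteness forces the formula algebra to be $\Algebra[2]$, or more precisely a field of sets which in the relevant construction is $\mathbf{2}$). Form $\model(\DAlgebra,h) = \langle E, P, F\rangle$ with $E = |\Algebra[A]|$, $P = \bigcup\set{a}{\P(a) = \top}$, $F = \bigcup\set{a}{\F(a)=\top}$. Now apply $\algebra$ to this model with valuation $v = h\restriction\bact$: the action algebra it produces has carrier $2^{A}$ where $A = \set{v(\mathsf{a}_i)}{\mathsf{a}_i\in\bact}$, and since $\Algebra[A]$ is a concrete algebra generated (as a field of sets over $E$) by the $v(\mathsf{a}_i)$, we get $2^{A}$ reconstructs $|\Algebra[A]|$ with the Boolean operations being genuine union, intersection, complement — matching $\Algebra[A]$. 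For $\P$: in $\algebra(\model(\DAlgebra,h),h)$ we have $\P'(a) = \top$ iff $a \subseteq P = \bigcup\set{b}{\P(b)=\top}$; the key lemma here is that, because $P$ is an ideal in $\Algebra[A]$ with $P$ precisely the preimage of $\top$ under $\P$ (this is \Cref{prop:dal:ideal}), $a \subseteq \bigcup P$ as sets is equivalent to $a \in P$, i.e. $\P(a) = \top$. So $\P' = \P$, and dually $\F' = \F$; equality $\E' = \E$ follows since both are determined by $=_{\Algebra[A]}$.

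For the second identity, $\model(\algebra(\DeonticModel,v),h) = \DeonticModel$: start from $\DeonticModel = \langle E, P, F\rangle$ and $v:\bact\to 2^E$. Build $\algebra(\DeonticModel,v) = \langle \Algebra[A], \Algebra[2], \E, \P, \F\rangle$ with $|\Algebra[A]| = 2^{A}$, $A = \set{v(\mathsf{a}_i)}{i}$, $\P(a)=\top$ iff $a\subseteq P$, $\F(a)=\top$ iff $a\subseteq F$. Now apply $\model$: it produces $\langle E', P', F'\rangle$ with $E' = |\Algebra[A]|$'s underlying ground set. Here is the subtle point: $\model$ sets $E' = |\Algebra[A]|$, but $|\Algebra[A]| = 2^{A}$ is a power set, not literally $E$; one must use the convention (implicit in "field of sets") that the ground set of the concrete algebra $\Algebra[A] = \tup{2^A,\cup,\cap,{}^{\complement},\emptyset,A}$ is $E$ itself — i.e. $E' = \bigcup A = E$, provided $\set{v(\mathsf{a}_i)}{i}$ covers $E$ (which one may need to assume, or handle by noting that elements of $E$ outside every $v(\mathsf{a}_i)$ are indistinguishable and can be identified). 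Then $P' = \bigcup\set{a}{\P(a)=\top} = \bigcup\set{a\in 2^A}{a\subseteq P} = P$ (the largest such $a$ is $P\cap(\bigcup A) = P$ itself, assuming $P$ is a union of $v(\mathsf{a}_i)$'s, or again handled by the indistinguishability convention), and similarly $F' = F$.

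The main obstacle I anticipate is precisely this bookkeeping about ground sets versus carrier sets: $\algebra$ takes a model over $E$ and produces a power-set algebra $2^A$ over the subsets $A$ of $E$ generated by $v$, while $\model$ reads off $E$ from the carrier of the action algebra. Making "are inverses" literally true requires a precise reading of when a concrete deontic action algebra "is" the power-set field over its ground set, and a matching assumption that valuations are surjective onto the ground set (or a quotienting of $E$ by the indistinguishability relation induced by $v$). Once that convention is pinned down, the verification of each of the five components ($\Algebra[A]$, $\Algebra[2]$/$\Algebra[F]$, $\E$, $\P$, $\F$) is a routine unfolding, with \Cref{prop:dal:ideal} doing the real work in identifying the ideal $P$ with its set-theoretic union. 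I would state the ground-set convention explicitly at the start of the proof and then proceed component by component.
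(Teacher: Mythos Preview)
The paper gives no proof of \Cref{theorem:inverses}: the theorem is stated and immediately followed by ``In light of \Cref{theorem:inverses}, we obtain the following result.'' Your proposal is therefore more detailed than anything the paper provides, and the component-by-component unfolding you describe is the natural way to argue it.

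You have also correctly put your finger on a genuine difficulty that the paper glosses over. As written, \Cref{def:mod2alg} sets $A = \set{v(\mathsf{a}_i)}{\mathsf{a}_i \in \bact}$ (a family of subsets of $E$) and then builds $\Algebra[A] = \tup{2^A,\cup,\cap,{}^\complement,\emptyset,A}$, while \Cref{def:alg2mod} sets $E = |\Algebra[A]|$ (the carrier, i.e., a family of sets). Taken literally these do not compose to identities: the ground set $E$ of the original model is not recovered by reading off $|\Algebra[A]|$, and the top element $A$ of the reconstructed algebra is a set of subsets rather than $E$. Your diagnosis---that a literal equality requires a convention identifying a concrete field of sets with its underlying ground set, together with some surjectivity or indistinguishability assumption on the valuation---is exactly right. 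The use of \Cref{prop:dal:ideal} to match $\P'$ with $\P$ via the ideal property is also the correct key step for that component. In short, your plan is sound and more careful than the paper; the obstacle you anticipate is real and stems from imprecision in the paper's definitions rather than from your argument.
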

\medskip

In light of \Cref{theorem:inverses}, we obtain the following result.

\medskip
\begin{corollary}\label[corollary]{theorem:models-and-algebras}
		It follows that:
		$\DeonticModel, v \Vdash \varphi$ iff $\algebra(\DeonticModel, v), h \vDash \varphi \doteq \top$; and
		$\DAlgebra, h \vDash \varphi \doteq \top$ iff $\model(\DAlgebra,h), v \vDash \varphi$.
\end{corollary}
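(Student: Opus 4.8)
\textbf{Proof plan for \Cref{theorem:models-and-algebras}.}

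The plan is to derive the corollary directly from \Cref{theorem:inverses} together with a single auxiliary fact: that the two translation constructions $\algebra(-,-)$ and $\model(-,-)$ preserve and reflect satisfaction of formulas. Concretely, I would first prove the biconditional $\DeonticModel, v \Vdash \varphi$ iff $\algebra(\DeonticModel, v), h \vDash \varphi \doteq \top$, where $h$ is the unique homomorphic extension of $v$ as stipulated in \Cref{def:mod2alg}. This is an induction on the structure of $\varphi$. For the base cases one unwinds the definitions: in $\algebra(\DeonticModel,v)$ the algebra of actions is literally the field of sets $2^A$ over $E$, so the unique extension $h$ of $v$ agrees with $v^{*}$ on actions; hence $\DeonticModel, v \Vdash \alpha = \beta$ iff $v^{*}(\alpha) = v^{*}(\beta)$ iff $h_{\sorta}(\alpha) =_{\Algebra[A]} h_{\sorta}(\beta)$, which by the defining clause $(a=b) =_{\Algebra[2]} \top$ iff $a =_{\Algebra[A]} b$ is exactly $h_{\sortf}(\alpha = \beta) =_{\Algebra[2]} \top$. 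Likewise $\DeonticModel, v \Vdash \perm(\alpha)$ iff $v^{*}(\alpha) \subseteq P$ iff $\P(h_{\sorta}(\alpha)) = \top$ iff $h_{\sortf}(\perm(\alpha)) =_{\Algebra[2]} \top$, and symmetrically for $\forb$. The inductive step for $\lor, \land, \lnot, \bot, \top$ is routine because $\Algebra[2]$ is the two-element Boolean algebra: $h_{\sortf}$ is a homomorphism into $\mathbf 2$, so $h_{\sortf}(\varphi \land \psi) = \top$ iff $h_{\sortf}(\varphi) = \top$ and $h_{\sortf}(\psi) = \top$, matching the semantic clause, and similarly for the other connectives.

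Next I would establish the dual fact $\DAlgebra, h \vDash \varphi \doteq \top$ iff $\model(\DAlgebra,h), v \vDash \varphi$ for a concrete deontic action algebra $\DAlgebra$ and $v$ the restriction of $h$ to $\bact$. Here the subtle point is that $\model(\DAlgebra,h)$ takes $E = |\Algebra[A]|$, i.e.\ the elements of the algebra themselves become the realizations of actions, with $P = \bigcup\set{a}{\P(a)=\top}$ and $F = \bigcup\set{a}{\F(a)=\top}$. By \Cref{prop:dal:ideal}, $P$ and $F$ are the (sets underlying the) ideals of $\top$-preimages, and because $\Algebra[A]$ is a field of sets each action $a$ is itself a subset of $E$; one checks $a \subseteq P$ in the model iff $\P(a) = \top$ in the algebra, using that $P$ is an ideal (so it is downward closed and hence contains $a$ precisely when $\P(a)=\top$) — this is the one place where the ideal structure from \Cref{prop:dal:ideal} does real work. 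The rest of the induction mirrors the first part.

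Finally, the corollary follows by combining these two equivalences with \Cref{theorem:inverses}. For the first statement, the equivalence proved in the first paragraph is exactly $\DeonticModel, v \Vdash \varphi$ iff $\algebra(\DeonticModel, v), h \vDash \varphi \doteq \top$. For the second, apply the equivalence of the second paragraph to the concrete algebra $\DAlgebra$, or alternatively note that by \Cref{theorem:inverses} we have $\algebra(\model(\DAlgebra,h),v)= \DAlgebra$, so $\DAlgebra, h \vDash \varphi \doteq \top$ iff $\algebra(\model(\DAlgebra,h),v), h \vDash \varphi \doteq \top$ iff (by the first equivalence applied to the model $\model(\DAlgebra,h)$) $\model(\DAlgebra,h), v \vDash \varphi$. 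I expect the main obstacle to be purely bookkeeping: making precise that the unique homomorphic extension $h$ built from a valuation $v$ in \Cref{def:mod2alg} genuinely coincides with $v^{*}$ on all actions, and dually that restricting $h$ to $\bact$ and re-extending recovers $h$ on the action sort — both are consequences of uniqueness of homomorphic extension from free generators, but they must be invoked cleanly. No single step is deep; the content is entirely in lining up the definitions of $\Vdash$, of $\P,\F,\E$, and of the two translations, and in the one appeal to \Cref{prop:dal:ideal} for the $\perm/\forb$ clauses in the algebra-to-model direction.
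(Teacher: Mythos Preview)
The paper states this corollary without proof, so your plan supplies the missing argument and does so along the intended line: structural induction on $\varphi$ for the model-to-algebra direction, then an appeal to \Cref{theorem:inverses} for the converse. That part is fine.

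There is, however, a genuine gap in your \emph{direct} argument for the algebra-to-model direction. You claim that $a \subseteq P$ in $\model(\DAlgebra,h)$ iff $\P(a)=\top$ in $\DAlgebra$, ``using that $P$ is an ideal (so it is downward closed and hence contains $a$ precisely when $\P(a)=\top$)''. Downward closure of the ideal $\{\,b : \P(b)=\top\,\}$ says only that if $\P(b)=\top$ and $a \preccurlyeq b$ then $\P(a)=\top$; it does \emph{not} say that the set-theoretic union $P=\bigcup\{\,b : \P(b)=\top\,\}$ belongs to the algebra, nor that $a \subseteq P$ forces $a$ below some single $b$ with $\P(b)=\top$. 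Concretely: take $\Algebra[A]$ to be the finite--cofinite field of subsets of $\mathbb{N}$, $\Algebra[F]=\mathbf{2}$, $\P(a)=\top$ iff $a$ is finite, $\F(a)=\top$ iff $a=\emptyset$. This is a concrete deontic action algebra, yet $P=\mathbb{N}$, so every $a$ satisfies $a\subseteq P$ while $\P(\mathbb{N})\neq\top$. So \Cref{prop:dal:ideal} alone does not do the work you assign it.

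Your alternative route --- apply \Cref{theorem:inverses} to get $\algebra(\model(\DAlgebra,h),v)=\DAlgebra$ and then invoke the first biconditional --- avoids this pitfall and is the argument the paper evidently intends by labelling the result a corollary. I would drop the direct argument for the second clause entirely and rely solely on that route; the burden then rests on \Cref{theorem:inverses}, which the paper also states without proof.
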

\medskip

% \begin{theorem}\label[theorem]{theorem:algebras-to-models}
% $\DAlgebra, f \vDashcurly \varphi \doteq \top \mbox{ iff } \model(\DAlgebra,f), {f{\restriction}_{\bact}} \vDash \varphi$.
% \end{theorem}

\medskip

The results in \Cref{theorem:inverses,theorem:models-and-algebras} enable us to prove the completeness of $\DAL$ w.r.t.\ Segerberg's original deontic models entirely in an algebraic way.

\medskip
\begin{theorem} It follows that $\varphi$ is a theorem iff it is a tautology.
\end{theorem}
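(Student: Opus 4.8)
The plan is to derive this statement purely algebraically, by chaining the soundness and completeness results for deontic action algebras with the reducibility theorem and the model–algebra correspondence, thereby avoiding any appeal to \Cref{th:segerber:completeness}. First I would combine \Cref{theorem:soundness} and \Cref{theorem:completeness} to get that $\varphi$ is a theorem of \DAL if and only if $\DALVariety \vDash \varphi \doteq \top$. Next, \Cref{theorem:reducibility} reduces this to validity over concrete deontic action algebras: $\DALVariety \vDash \varphi \doteq \top$ iff $\DALVariety(0) \vDash \varphi \doteq \top$. It then remains to establish that $\DALVariety(0) \vDash \varphi \doteq \top$ if and only if $\varphi$ is a tautology.

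For the left-to-right direction I would assume $\DALVariety(0) \vDash \varphi \doteq \top$ and let $\DeonticModel$ be an arbitrary deontic action model with an arbitrary valuation $v$. By \Cref{def:mod2alg}, $\algebra(\DeonticModel, v)$ is a concrete deontic action algebra and $h$ is the interpretation extending $v$, so the assumption gives $\algebra(\DeonticModel, v), h \vDash \varphi \doteq \top$; applying the first half of \Cref{theorem:models-and-algebras} yields $\DeonticModel, v \Vdash \varphi$, and since $\DeonticModel$ and $v$ were arbitrary, $\varphi$ is a tautology. For the converse, I would assume $\varphi$ is a tautology and let $\DAlgebra$ be an arbitrary concrete deontic action algebra with interpretation $h$. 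By \Cref{def:alg2mod}, $\model(\DAlgebra, h)$ is a deontic action model with valuation $v$ the restriction of $h$ to $\bact$; as $\varphi$ is a tautology, $\model(\DAlgebra, h), v \Vdash \varphi$, and the second half of \Cref{theorem:models-and-algebras} gives $\DAlgebra, h \vDash \varphi \doteq \top$. Since $\DAlgebra$ and $h$ were arbitrary, $\DALVariety(0) \vDash \varphi \doteq \top$.

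Combining the three equivalences, $\varphi$ is a theorem iff $\DALVariety \vDash \varphi \doteq \top$ iff $\DALVariety(0) \vDash \varphi \doteq \top$ iff $\varphi$ is a tautology, which is the claim. I do not expect a genuine obstacle here: the proof is essentially bookkeeping over results already in place. The only point that needs care is matching up the interpretations — namely, that the interpretation attached to $\algebra(\DeonticModel,v)$ is the unique extension of $v$ and the valuation attached to $\model(\DAlgebra,h)$ is the restriction of $h$, exactly as fixed in \Cref{def:mod2alg} and \Cref{def:alg2mod} — so that \Cref{theorem:models-and-algebras} can be invoked verbatim in both directions. It is also worth remarking in the write-up that this argument re-proves \Cref{th:segerber:completeness} without using it, i.e., entirely within the algebraic framework.
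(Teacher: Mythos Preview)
Your proposal is correct and follows essentially the same chain of equivalences as the paper: theorem $\Leftrightarrow$ $\DALVariety$-validity $\Leftrightarrow$ $\DALVariety(0)$-validity $\Leftrightarrow$ tautology, via soundness/completeness, reducibility, and the model--algebra correspondence. If anything, your version is slightly cleaner: the paper's proof cites \Cref{cor:completeness} (which itself relies on \Cref{th:segerber:completeness}) at the step where \Cref{theorem:soundness} is what is actually needed, so your explicit use of \Cref{theorem:soundness} and \Cref{theorem:completeness} makes the ``entirely algebraic'' claim genuinely self-contained.
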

\begin{proof}
	Suppose that $\varphi$ is a theorem.
	From \Cref{cor:completeness}, $\DALVariety \vDash \varphi \doteq \top$.
	From \Cref{theorem:reducibility}, $\DALVariety(0) \vDash \varphi \doteq \top$.
	From \Cref{theorem:models-and-algebras}, $\varphi$ is a tautology.
	Thus, $\varphi$ is a theorem implies $\varphi$ is a tautology.
	Using these results in the inverse order we obtain $\varphi$ is a tautology implies $\varphi$ is a theorem.
\end{proof}

\section{Other Deontic Action Logics}\label{section:new:dals}
One of the main benefits of our algebraic treatment of $\DAL$ is that it can be extended using standard algebraic tools to cope with different versions of the logic.  In this section, firstly, we show how classical variations of 
$\DAL$ can be algebraically captured by standard algebraic constructions,  that is, equations, sub-algebras, and generated algebras.  One interesting point of these extensions is that the soundness and completeness properties of these extensions can be obtained by applying similar constructions to the Lindenbaum algebra presented in earlier sections.  Secondly, we consider intuitionistic versions of the logic by replacing the Boolean components of the algebras by Heyting algebras.  As far as we are aware, intutionistic deontic action logics have not been considered before in the literature. 
%We take advantage of the characteristics of our algebraic treatment of $\DAL$ and characterize some extensions and possible variations of \DAL.

\subsection{Previously Proposed Variants of \DAL}\label{section:dals}
%\subsection{Algebraizing Other Deontic Action Logics}\label{section:dals}

Segerberg's foundational work~\cite{Segerberg1982} laid the groundwork for a family of closely related deontic action logics. Building on this foundation, the five systems introduced in~\cite{Trypuz15} are particularly interesting as they address specific open issues in the field of Deontic Logic \textemdash such as the \emph{principle of deontic closure}.
We show how our algebraic framework can be easily extended to characterize each of these logics, showcasing the adaptability and versatility of deontic action algebras.
In the rest of this section, we use $\mathsf{a}$ to indicate a basic action symbol and $a$ to indicate its corresponding interpretation in an algebra.

% To introduce the algebraic counterparts of each $\NDAL{i}$, we need two preliminary definitions: that of a \emph{free deontic action algebra}, and that of an \emph{atomic deontic action algebra}. These are straightforward extensions of the corresponding concepts in Boolean algebras.
%
% \begin{definition}
% 	A deontic action algebra $\DAlgebra = \langle \Algebra[A], \Algebra[F], \E, \P, \F \rangle$ is freely (resp. finitely) generated by a subset $G \subseteq A$ of generators iff $\Algebra[A]$ is freely (resp. finitely) generated by $G$.
% 	We say that $\DAlgebra$ is atomic if $\Algebra[A]$ is atomic.
% \end{definition}
%
% We are now in position to introduce each deontic logic \NDAL{i} together with its algebraic counterpart.

The first of the five systems in~\cite{Trypuz15}, here called \NDAL{1}, is obtained from \DAL by adding the set $\set{{\forb(\mathsf{a}) \lor \perm(\mathsf{a})}}{\mathsf{a} \in \bact}$ of formulas as additional axioms.
Intuitively, this new set of axioms aims to capture the so-called \emph{principle of deontic closure}\textemdash what is not forbidden is permitted\textemdash at the level of basic actions (i.e., action generators).
The algebraic counterpart of \NDAL{1} is determined by the class of deontic action algebras whose algebra of actions is generated by a set of generators s.t.\ the condition $\F(a) \lor \P(a) =_{\Algebra[F]} \top$ holds for every generator $a$.

% \begin{definition}\label[definition]{def:dal:1}
% 	The class $\mathcal{D}_1$ contains all the deontic action algebras in $\mathcal{D}_0$ that are freely generated by a set $G$. Moreover, every deontic action algebra in $\mathcal{D}_1$ satisfies the following equations:
% \begin{equation}\label{eq:dal1}
% 	\F[x] \lor \P[x] =_{\Algebra[F]} \top \qquad \mbox{for every $x \in G$}
% \end{equation}
% \end{definition}

The second system, here called \NDAL{2}, is obtained from \NDAL{1} by adding the formula $\perm(\bar{\mathsf{a}}_0 \sqcap \dots \sqcap \bar{\mathsf{a}}_n) \lor \forb(\bar{\mathsf{a}}_0 \sqcap \dots \sqcap \bar{\mathsf{a}}_n)$ as an additional axiom of the logic, under the proviso that $\bact = \{\mathsf{a}_0, \dots, \mathsf{a}_n\}$ for some $n \in \mathbb{N}_0$; i.e., under the proviso that there are finitely many basic action symbols.
Intuitively, this additional axiom states that not performing any of the basic actions is permitted or forbidden.
% On the algebraic side, considering a finite set $\bact$ of basic actions results in the algebra $\Algebra[A]$ of actions being a \emph{finite atomic} Boolean algebra.
The algebraic counterpart of \NDAL{2} corresponds to the class of deontic action algebras with a finitely generated atomic Boolean algebra of actions $\Algebra[A]$ satisfying the condition ${
	\P(%
		\bar{a}_1 \sqcap
		\dots \sqcap
		\bar{a}_n
	)
	\lor
	\F(%
		\bar{a}_1 \sqcap
		\dots \sqcap
		\bar{a}_n
	) =_{\Algebra[F]} \top}$
for $\{a_0, \dots, a_n\}$ the set of generators of $\Algebra[A]$.

The third system, \NDAL{3}, is obtained from \NDAL{2} by adding $(\mathsf{a}_0 \sqcup \dots \sqcup \mathsf{a}_n) = \uact$ as an additional axiom.
Intuitively, this new axiom indicates that the agent can only perform actions in $\{\mathsf{a}_1,\dots, \mathsf{a}_n\}$.
The algebraic counterpart of \NDAL{3} corresponds to the subclass of $\NDAL{2}$ further satisfying the condition $a_0 \sqcup \dots \sqcup a_n = \uact$. 

The fourth system, \NDAL{4}, aims to capture the principle of deontic closure at the level of ``atomic'' actions.
Formally, the language of the logic assumes a finite set $\{\mathsf{a}_0, \dots, \mathsf{a}_n\}$ of basic action symbols.
Its axiomatization adds all the formulas in
	$\set
		{\perm({\tilde{\mathsf{a}}_0} \sqcap \dots \sqcap {\tilde{\mathsf{a}}_n}) \lor \forb({\tilde{\mathsf{a}}_0} \sqcap \dots \sqcap {\tilde{\mathsf{a}}_n})}
		{\tilde{\mathsf{a}}_i \in \{\mathsf{a}_i, \bar{\mathsf{a}}_i\}}$ as additional axioms to \DAL.
The algebraic counterpart of \NDAL{4} corresponds to the class of all deontic action algebras with a finitely generated and atomic algebra of actions, whose atoms $a$ satisfy the condition ${\P(a) \lor \F(a) =_{\Algebra[F]} \top}$.

The fifth and last system in \cite{Trypuz15}, here called \NDAL{5}, is the union of \NDAL{3} and \NDAL{4}. Naturally, its
algebraic counterpart corresponds to the intersection of the classes of deontic action algebras characterizing \NDAL{3} and \NDAL{4}.

%In summary, for each \NDAL{i} we have a corresponding subclass $\DALVariety(i)$ of deontic action algebras --determined by the particular conditions of the logic. 
We now present soundness and completeness results of each of these logics. 
To this end, we introduce the auxiliary definitions of \emph{deontic action subalgebra} and  \emph{deontic action generated algebra}.
Both are analogous to the standard case.

\medskip
\begin{definition}\label{def:deontic-subalgebra} Let $\DAlgebra = \langle \Algebra[A], \Algebra[F], \E, \P, \F \rangle$ and $\DAlgebra' = \langle \Algebra[A]', \Algebra[F]', \E', \P', \F' \rangle$ be two deontic action algebras, we say that 
$\DAlgebra'$ is a subalgebra of $\DAlgebra$ iff: 1.~$\Algebra[A]'$ is a Boolean subalgebra of $\Algebra[A]$; 2.~$\Algebra[F]'$ is a subalgebra of $\Algebra[F]$; and 3.~$\E'$, $\F'$, and $\P'$ are restrictions of $\E$, $\F$, and $\P$ to $\Algebra[A]'$ and $\Algebra[F]'$, respectively.
\end{definition}
\medskip

% The notion of generated algebra is also an extension of the standard definition.  As formally introduced by the next definition.

% \medskip

\begin{definition} Let $\DAlgebra = \langle \Algebra[A], \Algebra[F], \E, \P, \F \rangle$ be a deontic action algebra.
In addition, let $A' \subseteq |\Algebra[A]|$ and $F' \subseteq |\Algebra[F]|$.
The sets $A'$ and $F'$ are called generators.
The deontic action algebra generated by $A'$ and $F'$ is the subalgebra $\DAlgebra = \langle \Algebra[A]', \Algebra[F]', \E', \P', \F' \rangle$ of $\DAlgebra$ where: 1.~$\Algebra[A]'$ is the intersection of all the subalgebras of $\Algebra[A]$ whose carrier set contains $A'$; and 2.~$\Algebra[F]'$ is the intersection of all the subalgebras of $\Algebra[F]$ whose carrier set contains $F'$.
\end{definition}
\medskip

The following theorem extends \Cref{theorem:completeness} for \DAL to its variants \NDAL{i}.

\medskip

\begin{theorem}\label{theorem:completeness:dal:i}
	If follows that $\varphi$ is a theorem of \NDAL{i} iff $\DALVariety(i) \vdash \varphi \doteq \top$.
\end{theorem}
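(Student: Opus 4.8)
The plan is to mimic the soundness/completeness argument for $\DAL$ (Theorems~\ref{theorem:soundness} and~\ref{theorem:completeness}) uniformly for each $i \in \{1,\dots,5\}$, adapting the Lindenbaum–Tarski construction so that the extra axioms are absorbed into the congruence. Concretely, for a fixed $i$, let $\vdash_i$ denote derivability in $\NDAL{i}$ (i.e.\ $\DAL$ together with the additional axioms described above), and define $\cong^i_{\sorta}$ and $\cong^i_{\sortf}$ exactly as in \Cref{prop:congruence} but with ``theorem'' replaced by ``theorem of $\NDAL{i}$''. The first step is to check that this still yields a congruence on $\TAlgebra$ — this is routine, since the extra axioms are just new theorems and the rules (modus ponens, plus substitution of equals for actions) are unchanged. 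The quotient $\LTAlgebra_i = \TAlgebra/{\cong^i}$ is then a deontic action algebra by the same verification as in \Cref{prop:lindenbaum}; the only new obligation is to show $\LTAlgebra_i$ lies in $\DALVariety(i)$, i.e.\ that the algebra of actions of $\LTAlgebra_i$ satisfies the defining structural condition of $\DALVariety(i)$ (for $i=1$: generated by $\{[\mathsf{a}]_{\cong^i_{\sorta}} : \mathsf{a} \in \bact\}$ with $\F([\mathsf{a}]) \lor \P([\mathsf{a}]) =_{\Algebra[F]} \top$ for each generator; for $i=2,\dots,5$: the analogous finiteness/atomicity/coverage conditions). This is immediate from the extra axioms: e.g.\ for $\NDAL{1}$, $\forb(\mathsf{a}) \lor \perm(\mathsf{a})$ being a theorem means precisely $[\forb(\mathsf{a})]_{\cong^i_{\sortf}} \lor [\perm(\mathsf{a})]_{\cong^i_{\sortf}} =_{\Algebra[F]} \top$, and the action algebra is generated by the classes of basic symbols because $\TAlgebra$ is.

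For soundness (the ``only if'' direction), I would argue by induction on the length of an $\NDAL{i}$-proof exactly as in \Cref{theorem:soundness}; the only new cases are the extra axiom schemes of $\NDAL{i}$, and each of these holds under any interpretation $h : \TAlgebra \to \DAlgebra$ with $\DAlgebra \in \DALVariety(i)$ directly by the structural condition defining $\DALVariety(i)$. For instance, if $\mathsf{a} \in \bact$ then $h_{\sorta}(\mathsf{a})$ is a generator of the action algebra (or can be handled via the coverage/atomicity hypothesis in the \NDAL{4}, \NDAL{5} cases, where one additionally uses that a formula-algebra value is $\top$ iff it dominates all the relevant conjuncts, together with distributivity of $\P,\F$ over $\sqcup$), so $h_{\sortf}(\forb(\mathsf{a}) \lor \perm(\mathsf{a})) = \F(h_{\sorta}(\mathsf{a})) \lor \P(h_{\sorta}(\mathsf{a})) =_{\Algebra[F]} \top$. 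The propositional-connective and deontic axioms of $\DAL$ proper are handled verbatim as before.

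For completeness (the ``if'' direction), I would take the contrapositive: if $\varphi$ is not a theorem of $\NDAL{i}$, then $[\varphi]_{\cong^i_{\sortf}} \neq_{\Algebra[F]} \top$ in $\LTAlgebra_i$, and the canonical interpretation $\check h : \TAlgebra \to \LTAlgebra_i$ sending $\mathsf{a}_j \mapsto [\mathsf{a}_j]_{\cong^i_{\sorta}}$ witnesses $\LTAlgebra_i, \check h \not\vDash \varphi \doteq \top$; since $\LTAlgebra_i \in \DALVariety(i)$ by the step above, this gives $\DALVariety(i) \nvDash \varphi \doteq \top$. The main obstacle, I expect, is not the logical skeleton — which is a near-mechanical transcription of Sections~\ref{sec:algebraic-char} — but the bookkeeping for systems \NDAL{2}--\NDAL{5}, where one must verify that the Lindenbaum action algebra is genuinely finitely generated and atomic (hence that the ``atoms'' over which the deontic-closure axioms range are exactly the classes $[\tilde{\mathsf{a}}_0 \sqcap \cdots \sqcap \tilde{\mathsf{a}}_n]$), and that the atomicity interacts correctly with the defining conditions. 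This needs the proviso $\bact = \{\mathsf{a}_0,\dots,\mathsf{a}_n\}$ and a small argument that a Boolean algebra freely generated by finitely many elements (quotiented by the theory) is finite and atomic with the expected atoms; given that, the five cases follow the same template, and \NDAL{5} is simply the conjunction of the \NDAL{3} and \NDAL{4} verifications.
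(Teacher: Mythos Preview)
Your proposal is correct and follows essentially the same strategy as the paper: soundness by induction on proofs (checking the new axioms against the structural conditions defining $\DALVariety(i)$), and completeness via a Lindenbaum--Tarski construction tailored to $\NDAL{i}$. The only cosmetic difference is that the paper builds its canonical algebra by first forming the $\DAL$ Lindenbaum algebra $\LTAlgebra$, passing to the subalgebra generated by $\{[\mathsf{a}_j]_{\cong_{\sorta}}\}$, and then quotienting by the $\NDAL{i}$-induced congruence, whereas you quotient the term algebra directly by $\NDAL{i}$-provability; since the action component of $\LTAlgebra$ is already generated by the classes of basic action symbols, the subalgebra step is vacuous and the two constructions coincide.
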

\begin{proof} The proof is direct extension of that in \Cref{theorem:completeness}. We only sketch relevant steps.

\medskip 

\begin{description}
	\setlength{\itemsep}{5pt}
	\item[Soundness.]
	For \NDAL{1} we need to show for all $\DAlgebra \in \DALVariety(1)$ and all interpretations $h: \TAlgebra \to \DAlgebra$, it follows that $h({\forb(\mathsf{a}_i) \lor \perm(\mathsf{a}_i)}) =_{\Algebra[F]} \top$.
	Then:
		\begin{align*}
			h({\forb(\mathsf{a}_i) \lor \perm(\mathsf{a}_i)}) =_{\Algebra[F]}
			h(\forb(\mathsf{a}_i)) \lor h(\perm(\mathsf{a}_i)) =_{\Algebra[F]}
			\F(h(\mathsf{a}_i)) \lor \P(h(\mathsf{a}_i)) =_{\Algebra[F]}
			\top.
		\end{align*}%
	Note that for every $a_i \in \bact$, $h(a_i)$ is a generator, and that homomorphisms between generated Boolean algebras are determined by the mapping between their generators. 
	For the other variants the proofs are similar using the properties of generators, homomorphisms, and the new equations for each case.

	\item[Completeness.]
	Similarly to our result in \Cref{theorem:completeness}, for each \NDAL{i}, we need to define an equivalent to the Lindenbaum-Tarski Algebra of \DAL.  We describe the procedure for \NDAL{1}.
	The other cases use the same argument. 
	First, consider the Lindenbaum-Tarski $\LTAlgebra$ in \Cref{prop:lindenbaum}, and  consider the subalgebra $\LTAlgebra(1)$ generated by the generators 
	$A' =\set{[\mathsf{a}_i]_{\cong_{\sorta}}}{\mathsf{a}_i \in \bact}$, and $F' = \form/_{\cong_{\sortf}}$.
	Furthermore, consider the congruence $\cong_{(1)}$ over $\LTAlgebra(1)$ induced by theoremhood in \NDAL{1}, i.e., the axioms of \DAL plus the new axiom set $\set{\forb(\mathsf{a}_i) \lor \perm(\mathsf{a}_i)}{\mathsf{a}_i \in \bact}$.
	% ,  and also the corresponding algebra $\LTAlgebra(1)/\cong_{\NDAL{1}}$.  As proven in \Cref{theorem:completeness}, this is a deontic action algebra.  Furthermore,
	From its construction, $\LTAlgebra(1)/\cong_{(1)}$ is such that	$\F([\mathsf{a}_i]_{\cong_{\sorta}}) \lor \P([\mathsf{a}_i]_{\cong_{\sorta}}) = \top$.
	This algebra provides the canonical deontic action algebra for \NDAL{1}.
	The proof for \NDAL{i}, for $i \in \{2,3,4,5\}$, can be obtained by a similar procedure: a subalgebra of the original Lindenbaum Algebra is considered,  this subalgebra is quotiented by the corresponding axioms, obtaining an algebra that allows us to prove the completenes for the corresponding version of the logic.
	\qedhere
\end{description}
\end{proof}

\subsection{Introducing Propositions}\label{sec:dal:propositions}

Our algebraization of \DAL features an unusual characteristic: the use of an empty set of variables of sort $\sortf$ in the definition of the term algebra $\TAlgebra$ in~\Cref{dal:talg}.
% This is an unusual characteristic.
A more natural approach would be to consider a countable set $\prop$ of proposition symbols as variables of sort $\sortf$, analogous to the set $\bact$ of basic action symbols.
Incorporating the set $\prop$ into \DAL results in a new deontic action logic, which we denote as $\DAL(\prop)$.
The construction of this new logic is relatively straightforward, as are its soundness and completeness results.
Moreover, we demonstrate that this new logic has certain advantages over \DAL for modeling scenarios that require explicit propositional reasoning.

\paragraph{Deontic Action Algebras and Propositions.}

The logic $\DAL(\prop)$ extends the logical language of \DAL incorporating symbols in $\prop = \set{p_i}{i \in \Nat_0}$ as base cases in the recursive definition of formulas.
In addition, it adapts the axiom system for \DAL to accommodate for this new definition of a formula.
The algebraization of $\DAL(\prop)$ uses the same signature as \DAL.
Its algebraic language is the term algebra $\TAlgebra_1$ built sets:
	$\bact$ of variables of sort $\sorta$, and
	$\prop$ of variables of sort $\sortf$.
The term algebra $\TAlgebra_1$ is interpreted into deontic action algebras as explained in \Cref{section:basics}. That is, an interpretation of $\TAlgebra_1$ in a deontic action algebra 
$\DAlgebra = \tup{\Algebra[A], \Algebra[F], \E, \P, \F}$ is a homomorphism $h:\TAlgebra_1 \to \DAlgebra$. Note that, by definition, $h$ maps symbols in $\bact$ into elements in $|\Algebra[A]|$ and symbols in $\prop$ into elements in $|\Algebra[F]|$ (in fact, interpretations are completely determined by mappings $\bact \to |\Algebra[A]|$ and $\prop \to |\Algebra[F]|$).

We derive the soundness and completeness of $\DAL(\prop)$ by adapting the corresponding results for \DAL in \Cref{sec:algebraic-char}.

\medskip
\begin{theorem}\label{prop:completeness:dal:prop} $\varphi$ is a theorem of $\DAL(\prop)$ iff $\DALVariety \vDash {\varphi \doteq \top}$.
\end{theorem}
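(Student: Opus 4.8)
The plan is to mirror, almost verbatim, the soundness and completeness arguments for \DAL given in \Cref{theorem:soundness,theorem:completeness}. The key structural observation is that passing from \DAL to $\DAL(\prop)$ changes only the algebraic language — we now interpret the term algebra $\TAlgebra_1$, which carries the extra set $\prop$ of variables of sort $\sortf$, so an interpretation $h\colon \TAlgebra_1 \to \DAlgebra$ is determined by a \emph{pair} of maps $\bact \to |\Algebra[A]|$ and $\prop \to |\Algebra[F]|$ — whereas the target class of algebras is still $\DALVariety$ and the axioms are still the schemas of \Cref{dal:axioms}, merely instantiated over the larger set of formulas. Hence no new algebraic machinery is required.

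For the forward (soundness) direction I would induct on the length of a Hilbert proof in $\DAL(\prop)$, exactly as in the proof of \Cref{theorem:soundness}. For each axiom $\psi$, each $\DAlgebra \in \DALVariety$, and each interpretation $h\colon \TAlgebra_1 \to \DAlgebra$, the computation showing $h_{\sortf}(\psi) =_{\Algebra[F]} \top$ is the very same one used there: the propositional axiom schemas A1'--A13', LEM' hold because $\Algebra[F]$ is a Boolean algebra, the equality axioms E1--E2 and the deontic axioms D1--D3 hold because of conditions (1)--(6) of \Cref{definition:deontic:algebra}, and the occurrence of propositional atoms among the formulas is immaterial — they are simply further generators of $\Algebra[F]$ under $h$. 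Modus ponens is preserved because $h_{\sortf}$ is a Boolean homomorphism: if $h_{\sortf}(\chi) = \top$ and $h_{\sortf}(\chi \to \psi) = \overline{h_{\sortf}(\chi)} \lor h_{\sortf}(\psi) = \top$, then $h_{\sortf}(\psi) = \top$.

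For the backward (completeness) direction I would rerun the Lindenbaum--Tarski construction on $\TAlgebra_1$. Define $\cong_{\sorta}$ and $\cong_{\sortf}$ by $\alpha \cong_{\sorta} \beta$ iff $\vdash \alpha = \beta$ and $\varphi \cong_{\sortf} \psi$ iff $\vdash \varphi \liff \psi$ (provability now in $\DAL(\prop)$); the argument that these form a congruence on $\TAlgebra_1$ is word-for-word that of \Cref{prop:congruence}, and the argument that the quotient $\LTAlgebra_1 = \TAlgebra_1/{\cong}$ lies in $\DALVariety$ is that of \Cref{prop:lindenbaum} — none of those steps ever inspects whether $\sortf$-variables occur. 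Then, given a non-theorem $\varphi$ of $\DAL(\prop)$, we have $[\varphi]_{\cong_{\sortf}} \neq_{\Algebra[F]} \top$ in $\LTAlgebra_1$; the map sending each $\mathsf{a}_i \in \bact$ to $[\mathsf{a}_i]_{\cong_{\sorta}}$ and each $p_i \in \prop$ to $[p_i]_{\cong_{\sortf}}$ extends uniquely to an interpretation $\check{h}\colon \TAlgebra_1 \to \LTAlgebra_1$ with $\check{h}(\varphi) = [\varphi]_{\cong_{\sortf}} \neq_{\Algebra[F]} \top$, whence $\DALVariety \nvDash \varphi \doteq \top$, the contrapositive of the claim.

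I do not expect a genuine obstacle here — this is precisely the robustness the modular framework was designed to provide. The only care needed is bookkeeping: recording that interpretations of $\TAlgebra_1$ are now pairs of maps, that the extension to a homomorphism is still unique by the universal property of term algebras (\Cref{def:talg}), and that the class of target algebras is left untouched (we are feeding $\DALVariety$ a different algebraic language, not changing the algebras themselves). One might also remark that a \Cref{cor:completeness}-style link to Segerberg's concrete semantics does \emph{not} transfer automatically, since $\DAL(\prop)$ has no counterpart of \Cref{th:segerber:completeness}; but that lies outside the scope of the stated theorem.
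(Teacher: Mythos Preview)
Your proposal is correct and follows essentially the same route as the paper: soundness by induction on proof length lifted verbatim from \Cref{theorem:soundness} (the presence of proposition symbols being immaterial to the axiom-by-axiom verification), and completeness by rebuilding the Lindenbaum--Tarski quotient on the larger term algebra $\TAlgebra_1$ exactly as in \Cref{prop:congruence,prop:lindenbaum,theorem:completeness}. Your write-up is in fact more detailed than the paper's own sketch, and your closing remark that the link to Segerberg's concrete semantics does not automatically transfer is a correct and pertinent observation.
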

\begin{proof} The proof follows the steps of that of  \Cref{theorem:soundness},  we highlight some subtle details.  

\medskip 
	\begin{description}
		\item[Soundness.]
		The proof of soundness is, as in \Cref{theorem:soundness}, by induction on the length of proofs.
		Note that the set of proofs is defined as in \Cref{theorem:soundness}, but instantiations of axiom schemas may now contain proposition symbols. E.g.,  $(p \land \lnot p) \liff \bot$ is an instance of an axiom schema, and a theorem. This does not affect the proof given in \Cref{theorem:soundness}, which is immediately lifted to a proof of soundness for ${\DAL(\prop)}$.

		\item[Completeness.]
		For completeness, we need to redefine the Lindenbaum algebra.
		First, in this case, we consider the term algebra $\TAlgebra_1$ that contains also formulas with propositions. The congruence $\cong$ in \Cref{prop:lindenbaum} is used to construct the quotient algebra. This quotient algebra is a deontic action algebra. Note that this algebra also contains formula terms with propositions by definition. Finally, adapting the proof of \Cref{theorem:completeness} we obtain the algebraic completeness result.
		\qedhere
	\end{description}
\end{proof}
\medskip

In summary, $\DAL(\prop)$ differs only from $\DAL$ in their respective term algebras.
That is, while \DAL is associated with a term algebra $\TAlgebra$ built over an empty set of variables of sort $\sortf$, the term algebra $\TAlgebra_1$ associated to $\DAL(\prop)$ uses the set $\prop$ as the set of variables of sort $\sortf$.
By adding proposition symbols, the term algebra $\TAlgebra_1$ brings about a sense of correspondence between the basic symbols used for building the set of actions and those used for building the set of formulas.

\paragraph{Propositions Matter}

\DAL, as well as its variants discussed in \Cref{section:dals}, place the focus on formalizing notions of permission and prohibition pertaining to actions.
Nonetheless, they face challenges when presented with statements such as \emph{it is not the case you are permitted to drive without a license}.
This limitation stems from the inability to distinguish between \emph{pure propositions}, such as \emph{you have a driver's license}, and \emph{normative propositions}, such as \emph{you are permitted to drive}.
% In this section, we show how to extend our algebraization for \DAL to be able to cope with this challenge.
% Moreover, we show how to extend Segerberg's proposal for the notions of permission and prohibition to states of affairs.
This distinction is seamlessly addressed in $\DAL(\prop)$.
For example, we can use a proposition symbol $\mathsf{haslicense} \in \prop$ to indicate that a person has a driver's license, and the formula $\perm(\mathsf{driving})$ to indicate that (the action of) driving is permitted.
This allow us to formalize \emph{it is not the case you are permitted to drive without a license} as $\lnot(\lnot\mathsf{haslicense} \land \perm(\mathsf{driving}))$.

Including propositions in deontic action algebras leads to some interesting discussions.
Imagine the following scenario: \emph{there must be no fence;  if there is a fence, then, it must be a white fence; there is a fence}.
This typical case of contrary-to-duty reasoning is discussed in~\cite{Prakken:1996}, where it is noted that prescriptions are applied to propositions rather than actions.
The shift from ought-to-do to ought-to-be is central to most deontic logics developed in the late part of the 20th century; and to \SDL (Standard Deontic Logic) in particular~\cite{Aqvist:2002}.
This shift is not without difficulties. It comes at a cost of quickly leading to paradoxes, i.e., theorems in the logic that are intuitively invalid~\cite{Aqvist:2002,Meyer:1994}.
For instance, in \SDL, a natural formalization of the scenario above, together with the formalization of the global assumption that \emph{if there is a white fence, then, there is a fence} leads to a contradiction; while the scenario is intuitively plausible.

The position held in \cite{Prakken:1996} is that it is worthwhile exploring conditions under which contrary-to-duties can be given consistent readings.
In this respect, we raise the following point.
Up to now, we have assumed that in a deontic action algebra $\DAlgebra = \tup{\Algebra[A], \Algebra[F], \E, \P, \F}$ the algebra $\Algebra[A]$ is used to describe actions.
But a more abstract view of this algebra is also possible.
In particular, we can think of the elements of  $\Algebra[A]$ as entities of the world that can be prescribed, they might be actions, but also propositions such as \emph{there is a fence}, or \emph{the fence is white}.
Under this new reading of a deontic action algebra, in $\DAL(\prop)$, we can distinguish between propositions we can prescribe over, and those we cannot.
For instance, we may regard statements like \emph{it is permitted that is raining} as having little sense, and thus as being ill-formed.
The statement \emph{it is raining} is either true or false, but in no case seems to be amenable to be regulated by a normative system.
Summarizing, deontic action algebras can be used to model ought-to-be normative systems where there is a clear distinction between entities that can be prescribed (corresponding to the elements in $\Algebra[A]$), and those that cannot (corresponding to the elements in $\Algebra[F]$).

\begin{figure}
	\centering
	\includegraphics[width=0.5\textwidth]{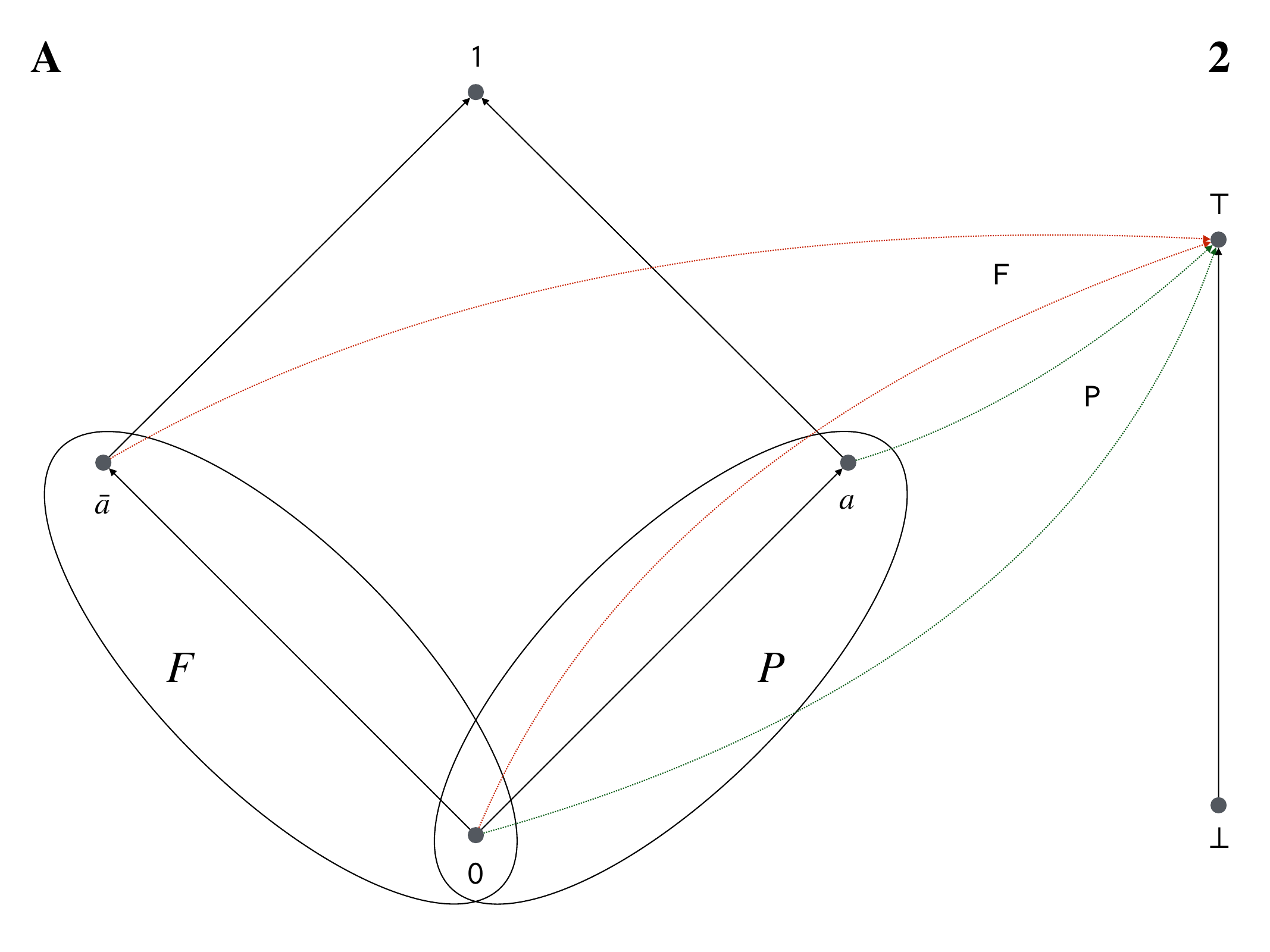}\\[1em]
	\caption{The Cottage Regulations Example}\label{ex:fence}
\end{figure}

\medskip
\begin{example}
	Returning to the example, let us use $\obl(\alpha)$, read as $\alpha$ is \emph{obligatory}, as an abbreviation of $\forb(\bar{\alpha})$.
	Then, we could use the formulas:
	$\obl(\overline{\mathsf{isfenced}})$ 
		to indicate that \emph{there must be no fence},
	$\mathsf{isfenced} = \uact \to \obl(\mathsf{ispaintedwhite})$
		to indicate that \emph{if there is a fence, then, it must be a white fence}, and
	$\mathsf{isfenced} = \uact$
		to indicate that \emph{there is a fence}.
	Finally, we could use the formula
		$\mathsf{ispaintedwhite} \sqcup \mathsf{isfenced} = \mathsf{isfenced}$ to indicate the global assumption that \emph{if the fence is painted white, then, the house is fenced}.
	The deontic action algebra $\DAlgebra$ in \Cref{ex:fence}, together with the interpretation $h: \TAlgebra_1 \to \DAlgebra$ defined as $h_{\sorta}(\mathsf{isfenced}) = \uact$, $h_{\sorta}(\mathsf{ispaintedwhite}) = a$, prove that these formulas are consistent.
	Precisely, we have:

	\medskip
	\centerline{
	\begin{minipage}{0.6\textwidth}
		\begin{enumerate}
			\setlength{\itemsep}{5pt}
			\item $h(\obl(\overline{\mathsf{isfenced}})) =_{\Algebra[F]} \top$.
			\item $h(\mathsf{isfenced} = \uact \to \obl(\mathsf{ispaintedwhite})) =_{\Algebra[F]} \top$.
			\item $h(\mathsf{isfenced} = \uact) =_{\Algebra[F]} \top$.
			\item $h(\mathsf{ispaintedwhite} \sqcup \mathsf{isfenced} = \mathsf{isfenced}) =_{\Algebra[F]} \top$.
		\end{enumerate}
	\end{minipage}}
\end{example}
\medskip

To sum up, we have explored some key features and applications of the incorporation of propositions into deontic action algebras. These insights, brought about by our discussion and examples, are particularly relevant to understand some broader implications of Segerberg's formalization of the notions of permission and prohibition.
\subsection{Heyting Algebras for Formulas}\label{sec:heyting:formulas}

Let us now turn to leveraging the modular framework of deontic action algebras in the construction of new deontic action logics.
In \Cref{sec:algebraic-char}, we brought attention to this modularity presenting a deontic action algebra as a structure $\DAlgebra = \tup{\Algebra[A], \Algebra[F], \E, \P, \F}$, with $\Algebra[A]$ and $\Algebra[F]$ interpreting actions and formulas, and $\E$, $\P$, and $\F$ formalizing equality, permission, and prohibition of actions.
While we have primarily considered $\Algebra[A]$ and $\Algebra[F]$ as Boolean algebras, our framework allows also for alternative algebras for actions and formulas.
Notably, defining $\Algebra[F]$ as a Heyting algebra leads to a new deontic action logic worth considering.
We call this new logic $\DAL(\IPL)$.
We begin with an outline of the technical foundations of $\DAL(\IPL)$, and follow with a discussion of its key features and advantages.

\paragraph{Constructive Reasoning in Deontic Action Algebras}

The language of $\DAL(\IPL)$ contains the actions and formulas of $\DAL(\prop)$. Namely, actions are built using basic action symbols in $\bact$, and the connectives $\sqcup$, $\sqcap$, $\bar{~}$, $\iact$, and $\uact$.
In turn, formulas are built using proposition symbols in $\prop$, the deontic connectives on actions, i.e., $\perm(\alpha)$ and $\forb(\alpha)$, and the connectives $\lor$, $\land$, $\lnot$, $\bot$, and $\bot$.
The sole difference is that $\DAL(\IPL)$ introduces the connective $\to$ as primitive rather than as an abbreviation \textemdash with $\varphi \liff \psi$ remaining as an abbreviation for $(\varphi \to \psi) \land (\psi \to \varphi)$.
The axiomatization of $\DAL(\IPL)$ uses the axioms in \Cref{dal:axioms} for actions, equality, and the deontic operations, while the axioms for the propositional connectives (A1'--A13' and LEM') are replaced by those in \Cref{axioms:ipl}.
These last axioms are standard for Intuitionistic Propositional Logic~\cite{Troelstra:1988}.
Provability and theoremhood are straightforwardly adapted to accommodate for the new axioms.
% Provability and theoremhood in $\DAL(\IPL)$ are defined straightforwardly in a Hilbert-style fashion using axioms and the rule of modus ponens.

\begin{figure}
	\centering
	\fbox{
	\begin{minipage}{1.0\textwidth}
		\setlength{\linewidth}{.97\textwidth}
		\setlength{\columnsep}{-1.6cm}
		\begin{multicols}{2}
			\begin{enumerate}[label=H\arabic*.]
				\item $\varphi \to (\varphi \lor \psi)$
				\item $\varphi \to (\psi \lor \varphi)$
				\item $\varphi \land \psi \to \varphi$
				\item $\varphi \land \psi \to \psi$
				\item $(\varphi \to \bot) \to \lnot \varphi$
				\item $\lnot \varphi \to (\varphi \to \bot)$
				\item $\bot \to \varphi$
				\item $\varphi \to \top$
				\item $\varphi \to ( \psi \to \varphi )$
				\item $\varphi \to (\psi \to (\varphi \land \psi))$
				\item ${(\varphi \to \chi) \to ((\psi \to \chi) \to ((\varphi \lor \psi) \to \chi))}$
				\item ${(\varphi \to (\psi \to \chi)) \to ((\varphi \to \psi) \to (\varphi \to \chi))}$
			\end{enumerate}
		\end{multicols}
	\end{minipage}}\\[1em]
	% \medskip
	\caption{Axiom System of $\DAL(\IPL)$}\label{axioms:ipl}
\end{figure}

The algebraization of $\DAL(\IPL)$ replaces the Boolean algebra of formulas in the definition of a deontic action algebra for a Heyting algebra. The precise definition of the new type of deontic action algebra being used is given below.

\medskip
\begin{definition}\label[definition]{def:dalgebra:heyting:boolean}
	A BH-deontic-action algebra is an algebra
		$\DAlgebra =
			\langle
				\Algebra[A], \Algebra[H], \E, \P, \F
			\rangle$
		where:
			$\Algebra[A]$ is a Boolean algebra,
			$\Algebra[H]$ is a Heyting algebra, and
			$\E : {|\Algebra[A]| \times |\Algebra[A]| \to |\Algebra[H]|}$,
			$\P : {|\Algebra[A]| \to |\Algebra[H]|}$,
			and
			$\F : {|\Algebra[A]| \to |\Algebra[H]|}$ satisfy the conditions 1--6 in \Cref{definition:deontic:algebra}.
\end{definition}
\medskip

In brief, Heyting algebras play a role in constructive reasoning analogous to the role Boolean algebras play in classical reasoning. A key distinction is how Heyting algebras treat $\to$.
Despite this difference, Heyting algebras are closely related to Boolean algebras.
Specifically, every Boolean algebra is a Heyting algebra, and the regular elements of a Heyting algebra \textemdash those $x 
\in |\Algebra[H]|$ for which $x =_{\Algebra[F]} \lnot\lnot x$ \textemdash form a Boolean algebra.
It is well-known also that Heyting algebras have a representation theorem \textemdash as the category of Heyting algebras is dually equivalent to the category of Eusaki spaces.
Furthermore, the Lindenbaum algebra obtained from the axioms in \Cref{axioms:ipl} is itself a Heyting algebra \cite{vanDalen:2008}.
These facts collectively support the idea of replacing Boolean algebras with Heyting algebras in the algebraic treatment of deontic action logic, ensuring that such an approach is well-founded.

The proposition below exposes an interesting feature of BH-deontic-action algebras.

\medskip
\begin{proposition}\label[proposition]{prop:ideals-int}
	Let $\DAlgebra = \tup{\Algebra[A], \Algebra[H], \E, \P, \F}$ be a BH-deontic-action algebra.
	The pre-images $P$ and $F$ of $\top$ under $\P$ and $\F$, respectively, are ideals in $\Algebra[A]$ s.t.\ ${{P \cap F} = \{\iact\}}$.
\end{proposition}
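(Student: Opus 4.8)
The plan is to observe that the proof of \Cref{prop:dal:ideal} goes through essentially verbatim, because that argument never used the fact that the algebra of formulas is Boolean; it relied only on $\Algebra[A]$ being a Boolean algebra, on $\land$ and $\top$ being the meet and the greatest element of the algebra of formulas, and on conditions 1--6 of \Cref{definition:deontic:algebra}. A BH-deontic-action algebra supplies exactly these ingredients: by \Cref{def:dalgebra:heyting:boolean}, $\Algebra[A]$ is still a Boolean algebra, $\Algebra[H]$ is a Heyting algebra (hence a bounded lattice with greatest element $\top$), and $\E$, $\P$, $\F$ satisfy conditions 1--6. So the whole task reduces to re-running the four steps of \Cref{prop:dal:ideal} in the Heyting setting.

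Concretely, I would first isolate the one lattice fact doing all the work: in any bounded lattice $x \land y = \top$ iff $x = \top$ and $y = \top$ (from $x \land y \preccurlyeq x$ and $x \land y \preccurlyeq y$). With this in hand, the four steps transcribe directly. If $\P(a) = \P(b) = \top$, then condition~1 gives $\P(a \sqcup b) = \P(a) \land \P(b) = \top$, so $P$ is closed under $\sqcup$. If $\P(a) = \top$ and $b \in |\Algebra[A]|$, then the absorption law in $\Algebra[A]$ gives $a = a \sqcup (a \sqcap b)$, so by condition~1, $\top = \P(a) = \P(a) \land \P(a \sqcap b)$, whence $\P(a \sqcap b) = \top$; thus $P$ absorbs meets with arbitrary elements, and so $P$ is an ideal of $\Algebra[A]$. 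The same reasoning with $\F$ and $F$ shows $F$ is an ideal. For the intersection, condition~6 gives $(\iact = \iact) =_{\Algebra[H]} \top$, so condition~3 at $a = \iact$ yields $\P(\iact) \land \F(\iact) = \top$, hence $\P(\iact) = \F(\iact) = \top$ and $\iact \in P \cap F$; conversely, if $a \in P \cap F$ then $\P(a) \land \F(a) = \top$, so condition~3 gives $(a = \iact) =_{\Algebra[H]} \top$, and condition~6 gives $a =_{\Algebra[A]} \iact$. Therefore ${P \cap F} = \{\iact\}$.

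I do not anticipate a real obstacle here: the only subtlety is to notice that the step ``$\P(\iact) = \F(\iact) = \top$'' used in \Cref{prop:dal:ideal} is not automatic but is derived from condition~3 together with the bounded-lattice fact above, and that this derivation is insensitive to whether the algebra of formulas is Boolean or merely Heyting. Everything else is a faithful copy of the earlier proof, which is why the statement can be dispatched simply by pointing out that no Boolean-specific property of $\Algebra[F]$ was ever invoked.
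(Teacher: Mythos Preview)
Your proposal is correct and takes essentially the same approach as the paper: both argue that the proof of \Cref{prop:dal:ideal} transfers verbatim because nothing Boolean-specific about the algebra of formulas was used. Your write-up is in fact more explicit than the paper's, which simply observes that the earlier proof ``uses only reasoning on ideals and the properties of $\E$, $\P$, and $\F$'' and hence carries over.
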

\begin{proof}
	Note that ideals in Heyting algebras and ideals in Boolean algebras are defined identically.
	Note also that the proof of the analogous result for deontic action algebras in \Cref{prop:dal:ideal} uses only reasoning on ideals and the properties of $\E$, $\P$, and $\F$.
	Since these properties are maintained in BH-deontic-action algebras, the proof in \Cref{prop:dal:ideal} transfers directly to this new setting.
\end{proof}
\medskip

In line with \Cref{prop:dal:ideal}, the result in \Cref{prop:ideals-int} tells us that permission and prohibition on actions yielding ideals carry over if we replace Boolean for Heyting algebras.

By way of conclusion, we present soundness and completeness theorems for $\DAL(\IPL)$.
Definitions of interpretations of the term algebra into BH-deontic-action algebras, homomorphisms, and congruences and quotients, are akin to those in \Cref{sec:algebraic-char}.
% In our proof, we can follow the steps and the main ideas of  \Cref{sec:dal:propositions} to prove the soundness and completeness of the intuitionistic version of the logic.
% For doing so,  first, we note that the axiomatic system for the logic have to be adapted to the intuitionistic setting.  This is done by considering the axiomatic system of  \Cref{section:dal} without the axiom (PEM), we denote the deduction relation obtained by $\vdash_{\DAL(\IPL)}$, similarly we denote by $\vDashcurly_{\IPL} {\varphi \doteq \top}$ the algebraic validity in the algebras of \Cref{def:dalgebra:heyting:boolean}.
%Similarly to the case in \Cref{sec:dal:propositions}, we obtain the following result.

\medskip
\begin{theorem}\label{prop:completeness:heyting:formulas}
	 Let $\mathbb{BH}$ be the class of all BH-deontic-action algebras.
	 It follows that $\varphi$ is a theorem of $\DAL(\IPL)$ iff $\mathbb{BH} \vDash {\varphi \doteq \top}$.
\end{theorem}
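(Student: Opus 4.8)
The plan is to mirror the proof strategy of \Cref{theorem:soundness} and \Cref{theorem:completeness} for \DAL, replacing the Boolean algebra of formulas with a Heyting algebra throughout. First I would establish soundness by induction on the length of a proof in $\DAL(\IPL)$. There are three kinds of axioms to check. For the action axioms (A1--A13, LEM), the equality axioms (E1, E2), and the deontic axioms (D1--D3), the verification is essentially identical to the one in \Cref{theorem:soundness}, since these involve only the Boolean algebra $\Algebra[A]$ of actions and the heterogeneous operations $\E$, $\P$, $\F$ --- whose defining conditions 1--6 are retained verbatim in \Cref{def:dalgebra:heyting:boolean}. The genuinely new cases are the intuitionistic propositional axioms H1--H12 of \Cref{axioms:ipl}: here I would invoke the standard fact that a Heyting algebra validates exactly the theorems of $\IPL$, so each $h_{\sortf}$ of an instance of H1--H12 evaluates to $\top$ in $\Algebra[H]$. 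Modus ponens is handled as usual: if $h_{\sortf}(\varphi) = \top$ and $h_{\sortf}(\varphi \to \psi) = \top$, then since $\to$ is now interpreted as Heyting implication and $\top = h_{\sortf}(\varphi) \preccurlyeq (h_{\sortf}(\varphi) \hto h_{\sortf}(\psi)) \hto$-adjointly forces $h_{\sortf}(\psi) = \top$.

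For completeness, the argument follows \Cref{prop:congruence}, \Cref{prop:lindenbaum}, and \Cref{theorem:completeness}. I would take the term algebra $\TAlgebra_1$ over $\bact$ and $\prop$ (with $\to$ now primitive), define $\alpha \cong_{\sorta} \beta$ iff $\alpha = \beta$ is a theorem of $\DAL(\IPL)$ and $\varphi \cong_{\sortf} \psi$ iff $\varphi \liff \psi$ is a theorem, and check this is a congruence. The key structural point is that the quotient $\LTAlgebra = \TAlgebra_1/{\cong}$ is a genuine BH-deontic-action algebra: the action component is a Boolean algebra exactly as before, and the formula component is a Heyting algebra because the quotient of the term algebra by provable equivalence in $\IPL$ is the free/Lindenbaum Heyting algebra --- a standard fact (cited in the excerpt via \cite{vanDalen:2008}). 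The conditions 1--6 relating $\check{\E}$, $\check{\P}$, $\check{\F}$ are verified using axioms D1--D3 and E1--E2 precisely as in \Cref{prop:lindenbaum}. Then, given a non-theorem $\varphi$, we have $[\varphi]_{\cong_{\sortf}} \neq_{\Algebra[H]} \top$, and the canonical interpretation $\check{h}: \TAlgebra_1 \to \LTAlgebra$ sending each generator to its class witnesses $\check{h}_{\sortf}(\varphi) \neq \top$, so $\mathbb{BH} \nvDash \varphi \doteq \top$.

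The main obstacle --- and it is a mild one --- is confirming that the formula-component of the Lindenbaum-Tarski algebra is indeed a Heyting algebra, i.e., that the quotient by intuitionistic provable equivalence satisfies the Heyting equations H1--H3 of \Cref{def:heyting:algebra} and, crucially, that the operation $\check{\hto}$ induced on equivalence classes of the primitive connective $\to$ is well-defined and is the relative pseudocomplement with respect to the lattice order. This amounts to the classical Lindenbaum-Tarski construction for $\IPL$ and poses no real difficulty, but it does require that one has $\to$ available as a primitive symbol in the signature --- which is why $\DAL(\IPL)$ makes $\to$ primitive rather than an abbreviation. A secondary point to handle carefully is that axiom E2 (the substitution axiom $\varphi_\alpha^\beta$) and the deontic axioms now sit inside an intuitionistic rather than classical propositional base; but since these axioms are still $\top$-valued in any BH-deontic-action algebra by conditions 4--6 and since D3's biconditional is preserved, the completeness argument of \Cref{theorem:completeness} transfers without change.
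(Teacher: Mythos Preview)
Your proposal is correct and follows essentially the same approach as the paper's own proof: soundness by checking that the action, equality, and deontic axioms are handled exactly as in \Cref{theorem:soundness} while the intuitionistic propositional axioms H1--H12 are validated by the standard Heyting-algebra semantics of $\IPL$, and completeness via the Lindenbaum--Tarski quotient where the formula component is now a Heyting algebra (citing the standard $\IPL$ construction). The paper's proof is terser but structurally identical; your additional remarks on modus ponens, on the necessity of $\to$ being primitive, and on the well-definedness of the relative pseudocomplement simply make explicit what the paper leaves implicit.
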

\begin{proof}
	Like with the proofs of \Cref{theorem:completeness:dal:i,prop:completeness:dal:prop}, we only remark on the differences with the proofs of \Cref{theorem:soundness,theorem:completeness}.
	\begin{description}
		\item[Soundness.]
		We need to prove that any interpretation of an axiom is mapped to $\top$.
		For axioms on actions, this is just like in \Cref{theorem:soundness}.
		For axioms on propositional connectives, this is  immediate from well known results for
		Heyting algebras (see~\cite{vanDalen:2008,Troelstra:1988}).
		Finally, the cases of equality, permission, and prohibition are not affected by the new interpretation of $\to$ in a Heyting algebra.

		\item[Completeness.]
		We begin by defining the Lindenbaum algebra as in \Cref{prop:lindenbaum} via congruences $\cong_{\sorta}$ for actions and $\cong_{\sortf}$ for formulas.
		Again, following~\cite{vanDalen:2008}, it is easy to see that the axioms in \Cref{axioms:ipl} result in the algebra of formulas itself being a Heyting algebra.
		This construction provides a witness for theoremhood for the logic.
		\qedhere
	\end{description}
\end{proof}

\paragraph{Constructive Reasoning Matters}

Just like we did when we introduced propositions, let us discuss why interpreting formulas on Heyting algebras instead of Boolean algebras bears an interest beyond its formal properties.

To set the stage for discussion, imagine the following scenario: \emph{if John does not have a driver's license, then, it is forbidden for him to drive; John is not forbidden to drive}.
From this scenario, using classical reasoning, we derive \emph{John has a driver's license}.
This conclusion is somewhat counterintuitive.
It is easy to consider many cases in which \emph{John does not have a driver's license} is true, which are consistent with the scenario in question.
But this is logically impossible.
There are many ways of dealing with this kind of problem, one of which is to move from Classical to Intuitionistic reasoning. %\cite{DBLP:conf/wollic/DalmonteGO22}, that is, to reject the principle $\varphi \vee \neg \varphi$ (or equivalently $\neg ( \varphi \wedge \neg \varphi)$) as part of the logic.  This has many consequences in the resulting  formal system. 

\medskip
\begin{example}
	The driver's license example in the previous paragraph can be formalized with formulas:
	$\lnot \mathsf{haslicense} \to \forb(\mathsf{driving})$
		capturing that \emph{if John does not have a driver's license, then, it is forbidden for him to drive}, and
	$\lnot\forb(\mathsf{driving})$
		capturing that \emph{John is not forbidden to drive}.
	Note that, in this formalization, $\mathsf{haslicense} \in \prop$ and $\mathsf{driving} \in \bact$.
	The BH-deontic-action algebra $\DAlgebra$ in \Cref{ex:driver}, together with the interpretation $h$ defined as $h_{\sorta}(\mathsf{driving}) = a$, $h_{\sortf}(\mathsf{haslicense}) = \frac{1}{2}$, prove that $\mathsf{haslicense}$ is not a consequence of the previous two formulas.
	Precisely, we have:

	\medskip
	\centerline{
	\begin{minipage}{0.7\textwidth}
		\begin{enumerate}[leftmargin=\parindent]
			\setlength{\itemsep}{5pt}
			\item $
				h(\lnot\mathsf{haslicense}) =_{\Algebra[H]}
				\lnot h(\mathsf{haslicense}) =_{\Algebra[H]}
				\lnot \frac{1}{2} =_{\Algebra[H]}
				\bot$.
			\item $
				h(\lnot\mathsf{haslicense} \to \forb(\mathsf{driving})) =_{\Algebra[H]}
				% h(\lnot\mathsf{haslicense}) \to h(\forb(\mathsf{driving})) =_{\Algebra[H]}
				\bot \to h(\forb(\mathsf{driving})) =_{\Algebra[H]}
				\top$.
			\item $
				h(\lnot\forb(\mathsf{driving})) =_{\Algebra[H]}
				\lnot h(\forb(\mathsf{driving})) =_{\Algebra[H]}
				\lnot \bot =_{\Algebra[H]}
				\top$.
			\item $h(\mathsf{haslicense}) \neq_{\Algebra[H]} \top$.
		\end{enumerate}
	\end{minipage}}
	\medskip

	\noindent Note how in the BH-deontic-algebra $\DAlgebra$ in \Cref{ex:driver} the only element of the algebra of actions which the operation $\F$ maps to $\top$ is $\iact$, all other elements are mapped to $\bot$.
\end{example}
\medskip

% \begin{figure}
% 	\centering
% 	\includegraphics[width=0.5\textwidth]{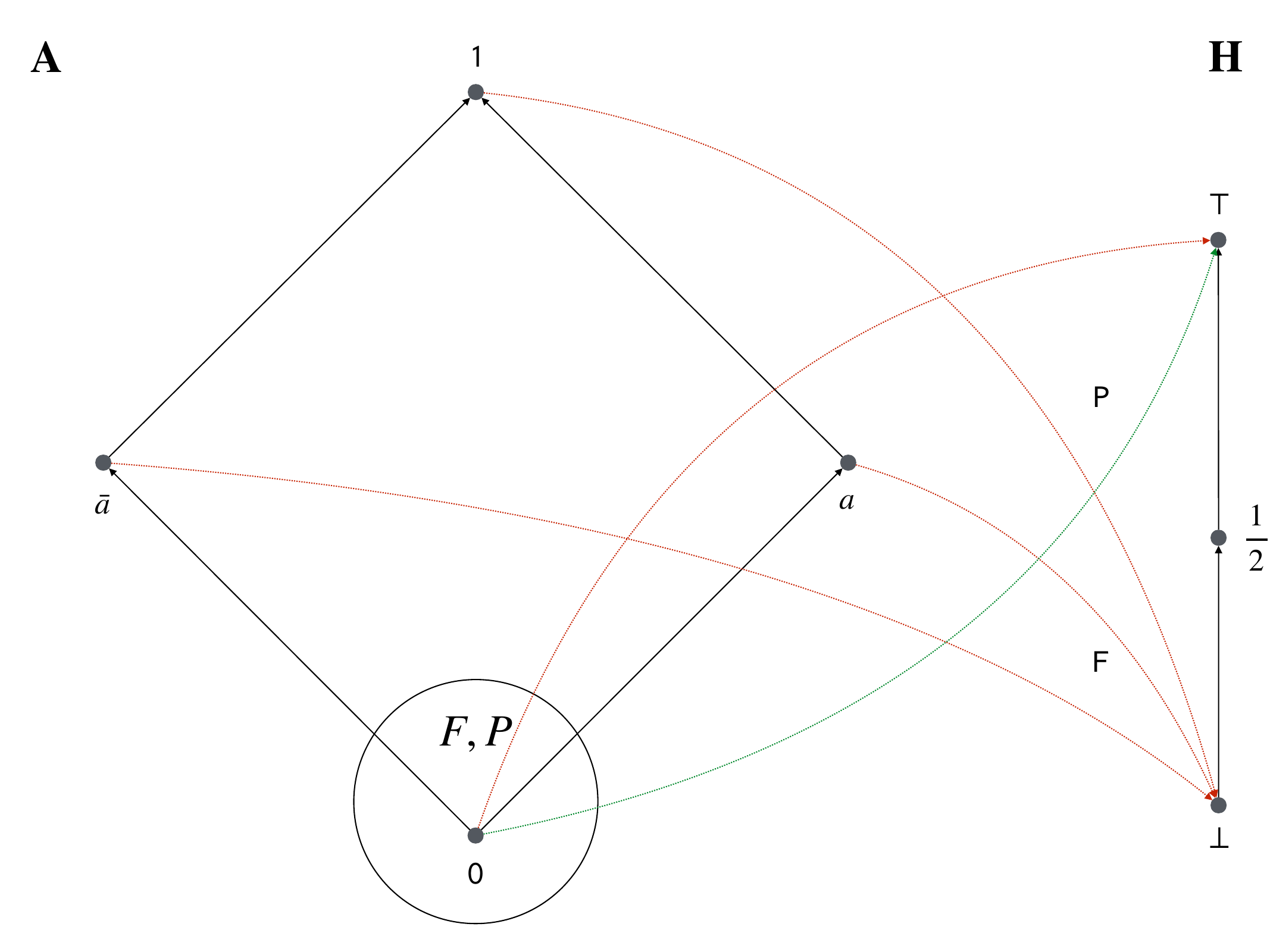}\\[1em]
% 	\caption{The Driver's License Paradox}\label{ex:driver}
% \end{figure}

\begin{figure}
	\centering
	\begin{minipage}{0.48\textwidth}
		\centering
		\includegraphics[trim=20pt 0pt 20pt 0pt, clip, width=\textwidth]{deontic-algebra-driving.pdf}\\[1em]
		\caption{The Driver's License Paradox.}\label{ex:driver}
	\end{minipage}\hfill
	\begin{minipage}{0.48\textwidth}
		\centering
		\includegraphics[trim=20pt 0pt 20pt 0pt, clip, width=\textwidth]{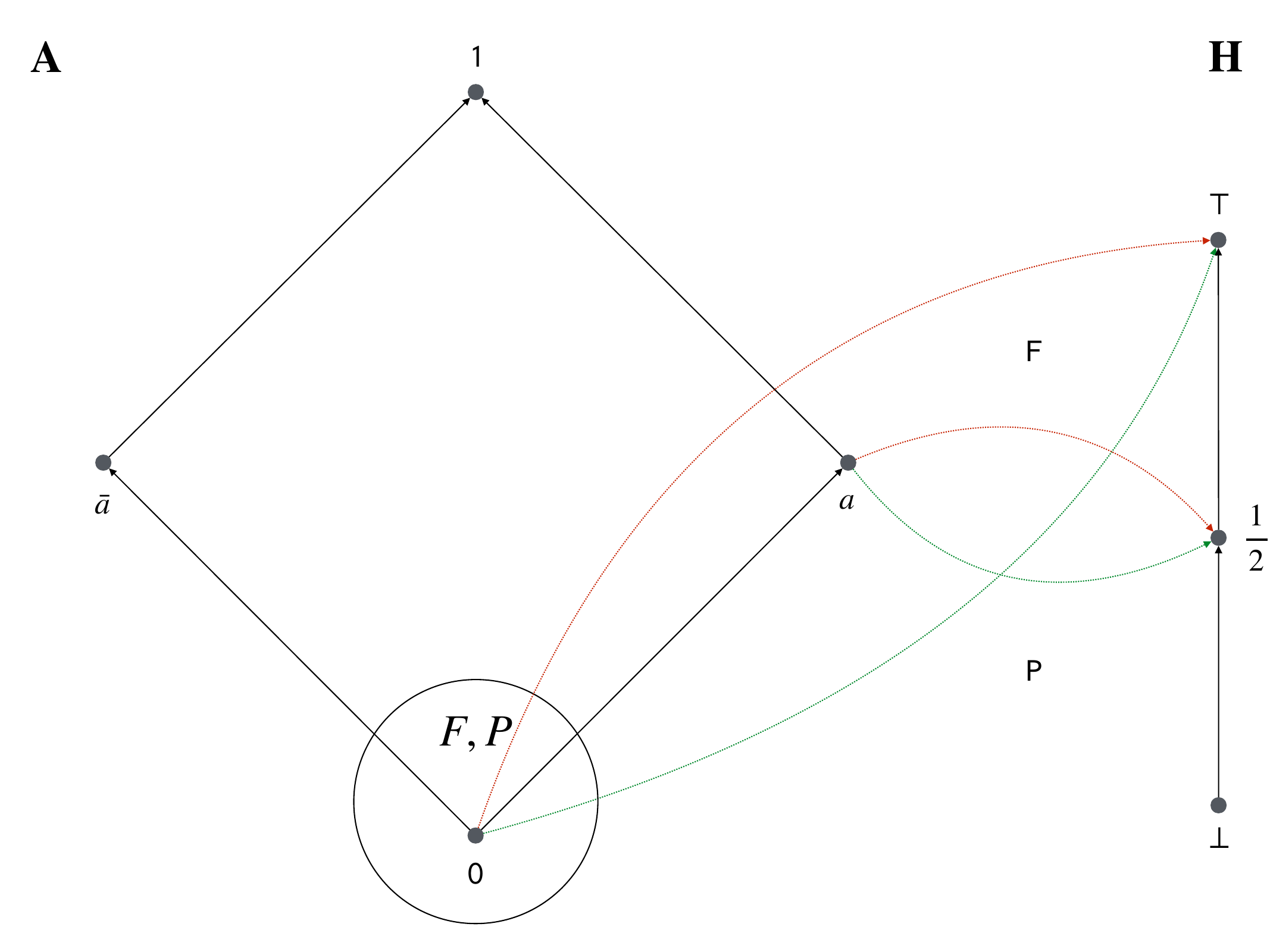}\\[1em] % second figure itself
		\caption{Principle of Deontic Closure.}\label{ex:deontic:closure}
	\end{minipage}
\end{figure}

There is another interesting discussion emerging from the use of an Intuitionistic basis for reasoning about formulas.
Recall the deontic action logic $\DAL(1)$ from \Cref{section:dals} and how this logic is built from $\DAL$ adding additional axioms with the intent to capture the \emph{principle of deontic closure}.
This principle is stated in \cite{Segerberg1982} as: \emph{what is not forbidden is permitted}.
The formalization of this principle as formulas of the form $\forb(\mathsf{a}) \lor \perm(\mathsf{a})$ is taken from \cite{Trypuz15}.
Still, a more faithful formalization of this principle is $\lnot\forb(\mathsf{a}) \to \perm(\mathsf{a})$.
Clearly, there is no substantial distinction in a Classical setting, as both formulas are equivalent.
This is not the case in an Intuitionistic setting.
For instance, the BH-deontic-action algebra $\DAlgebra$ in \Cref{ex:deontic:closure} satisfies one version of the principle but not the other.
Precisely, note that if we have a single basic action symbol $\mathsf{a} \in \bact$, and an interpretation $h$ on $\DAlgebra$ s.t.\ $h_{\sorta}(\mathsf{a}) = a$, then:

	\medskip
	\centerline{
	\begin{minipage}{0.5\textwidth}
		\begin{enumerate}[leftmargin=\parindent]
			\setlength{\itemsep}{5pt}
			\item $
				h(\forb(\mathsf{a})) =_{\Algebra[H]}
				\F(h(\mathsf{a})) =_{\Algebra[H]}
				\F(a) =_{\Algebra[H]}
				\frac{1}{2}$.
			\item $
				h(\lnot\forb(\mathsf{a})) =_{\Algebra[H]}
				\lnot h(\forb(\mathsf{a})) =_{\Algebra[H]}
				\lnot \frac{1}{2} =_{\Algebra[H]}
				\bot$.
			\item $
				h(\perm(\mathsf{a})) =_{\Algebra[H]}
				\P(h(\mathsf{a})) =_{\Algebra[H]}
				\P(a) =_{\Algebra[H]}
				\frac{1}{2}$.
			\item $
				h(\lnot\forb(\mathsf{a}) \to \perm(\mathsf{a})) =_{\Algebra[H]}
				\bot \to h(\perm(\mathsf{a})) =_{\Algebra[H]}
				\top$.
			\item $
				h(\forb(\mathsf{a}) \lor \perm(\mathsf{a})) =_{\Algebra[H]}
				\frac{1}{2} \lor \frac{1}{2} =_{\Algebra[H]}
				\frac{1}{2} \neq_{\Algebra[H]}
				\top$.
		\end{enumerate}
	\end{minipage}}
	\medskip

\noindent In words, this example shows that there is a distinction between considering the principle of deontic closure as \emph{what is not forbidden is permitted} \textemdash alternatively, \emph{what is not permitted is forbidden}\textemdash and considering this principle as \emph{every (basic) action is either permitted or forbidden}.

To sum up, we have explored some key features and applications of replacing the Boolean algebra of formulas in a deontic action algebra for a Heyting algebra.
The results we obtained underscore leveraging the modularity of our framework to build a new deontic action logic $\DAL(\IPL)$. The discussion and ensuing examples reinforce the utility of this new logic in the broader area of Deontic Logic, and in particular in relation to the principle of deontic closure.

\subsection{A Heyting Algebra of Actions}\label{sec:action-int}

The formal machinery in \Cref{sec:heyting:formulas} naturally suggests its symmetric extension: replacing the Boolean algebra of actions with a Heyting algebra.
This results in a new deontic action logic, $\DAL(\IAL)$, where actions are interpreted in a manner analogous to formulas in an intuitionistic framework.
To the best of our knowledge, no existing deontic action logic provides an intuitionistic perspective on actions, making this approach a novel contribution to the field.

\paragraph{Another look at Constructive Reasoning in Deontic Action Algebras}

We begin with an outline of the technical foundations of $\DAL(\IAL)$.
The formulas of this new logic are built using proposition symbols in $\prop$, the deontic connectives on actions, i.e., $\perm(\alpha)$ and $\forb(\alpha)$, and the connectives $\lor$, $\land$, $\lnot$, $\bot$, and $\bot$.
In turn, actions are built using basic action symbols in $\bact$, and the connectives $\sqcup$, $\sqcap$, $\bar{~}$, $\iact$, and $\uact$.
In addition, $\DAL(\IAL)$ introduces a new connective $\hto$ on actions giving rise to actions of the form $\alpha \hto \beta$.
This new connective is introduced to capture the notion of a relative complement (or intuitionistic implication) in a Heyting algebra.
The axiomatization of $\DAL(\IAL)$ uses all the axioms in \Cref{dal:axioms} except the axiom (LEM) for actions.
In addition, it introduces as axioms the properties H1\textendash H3 in \Cref{def:heyting:algebra} for the new connective $\hto$.
In essence, the axioms for actions are the conditions on Heyting algebras in~\cite{Esakia19}.
Provability and theoremhood are easily adapted to accommodate for the new axioms.
%\carlos{No me queda muy claro por que esta axiomatizacion es diferente de la anterior.}

The algebraization of $\DAL(\IAL)$ replaces the Boolean algebra of actions in the definition of a deontic action algebra for a Heyting algebra. This is made precise in \Cref{def:dalgebra:heyting:actions} below.

\medskip
\begin{definition}\label[definition]{def:dalgebra:heyting:actions}
	An HB-deontic-action algebra is an algebra
		$\DAlgebra =
			\langle
				\Algebra[H], \Algebra[F], \E, \P, \F
			\rangle$
		where:
			$\Algebra[H]$ is a Heyting algebra,
			$\Algebra[F]$ is a Boolean algebra, and
			$\E : {|\Algebra[H]| \times |\Algebra[H]| \to |\Algebra[B]|}$,
			$\P : {|\Algebra[H]| \to |\Algebra[B]|}$,
			and
			$\F : {|\Algebra[H]| \to |\Algebra[B]|}$ satisfy the conditions 1--6 in \Cref{definition:deontic:algebra}.
\end{definition}
\medskip

\Cref{prop:heyting:actions:ideal} shows that permission and prohibition behave as expected.

\medskip
\begin{proposition}\label[proposition]{prop:heyting:actions:ideal}
	Let $\DAlgebra = \tup{\Algebra[H], \Algebra[F], \E, \P, \F}$ be an HB-deontic-action algebra.
	The pre-images $P$ and $F$ of $\top$ under $\P$ and $\F$, respectively, are ideals in $\Algebra[A]$ s.t.\ ${{P \cap F} = \{\iact\}}$.
\end{proposition}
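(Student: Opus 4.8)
The plan is to observe that this proposition is simply the analogue, on the side of \emph{actions}, of \Cref{prop:dal:ideal} (and of \Cref{prop:ideals-int}), and that the proof of the former transfers essentially verbatim. (One should also note a minor slip in the statement: the relevant algebra is the Heyting algebra $\Algebra[H]$ of actions, not $\Algebra[A]$.) The reason the transfer works is that the notion of an ideal is defined \emph{uniformly} for every BDL-algebra — it mentions only the operations $\sqcup$ and $\sqcap$ — and a Heyting algebra is in particular a BDL-algebra, so the phrase ``$P$ is an ideal in $\Algebra[H]$'' is meaningful. Moreover, the only algebraic facts about actions used in the proof of \Cref{prop:dal:ideal} are the absorption law $a = a \sqcup (a \sqcap b)$ and the join-semilattice structure of $\sqcup$, both of which hold in any Heyting algebra; no appeal is ever made to the complement $\bar{~}$ or to (LEM) on actions.

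Concretely, I would reproduce the four steps of \Cref{prop:dal:ideal}. For closure of $P$ under $\sqcup$: if $a,b \in P$ then $\P(a) = \P(b) = \top$, so $\P(a) \land \P(b) = \top$, and condition 1 of \Cref{definition:deontic:algebra} — preserved by \Cref{def:dalgebra:heyting:actions} — gives $\P(a \sqcup b) = \P(a) \land \P(b) = \top$, i.e.\ $a \sqcup b \in P$. For downward closure under $\sqcap$: if $a \in P$ and $b \in |\Algebra[H]|$, write $a = a \sqcup (a \sqcap b)$ and apply condition 1 again to obtain $\top = \P(a) = \P(a) \land \P(a \sqcap b)$, whence $\P(a \sqcap b) = \top$. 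The same arguments with condition 2 handle $F$. Finally, for $P \cap F = \{\iact\}$: conditions 3 and 6 give $\P(\iact) \land \F(\iact) = (\iact = \iact) = \top$, so $\iact \in P \cap F$; conversely, if $a \in P \cap F$ then $\P(a) \land \F(a) = \top = (a = \iact)$, and condition 6 yields $a =_{\Algebra[H]} \iact$. Throughout, the only facts about the codomain used are that $\Algebra[F]$ is a bounded lattice with top $\top$ and that $x \land y = \top$ iff $x = \top$ and $y = \top$ (immediate from $x \land y \preccurlyeq x$ and $x \land y \preccurlyeq y$); since $\Algebra[F]$ remains Boolean, nothing changes here.

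The one point that requires light checking, rather than being automatic, is that the new connective $\hto$ on actions imposes no additional closure obligation on ideals: by definition an ideal in a BDL-algebra need only be closed under $\sqcup$ and absorb under $\sqcap$, so $\hto$ is irrelevant to the claim. Thus the only real ``obstacle'' is bookkeeping — confirming that the proof of \Cref{prop:dal:ideal} nowhere silently uses that the algebra of actions is Boolean — and once that is verified the result follows immediately, just as \Cref{prop:ideals-int} followed from \Cref{prop:dal:ideal}.
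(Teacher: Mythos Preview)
Your proposal is correct and follows exactly the paper's approach: the paper's proof is the one-line ``Analogous to that in \Cref{prop:ideals-int}'', and your argument spells out precisely why the transfer from \Cref{prop:dal:ideal} goes through (only BDL-level identities on actions and conditions 1--6 are used, never complement or LEM). Your observation about the typo $\Algebra[A]$ vs.\ $\Algebra[H]$ is also accurate.
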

\begin{proof}
	Analogous to that in \Cref{prop:ideals-int}.
\end{proof}
\medskip

%\paragraph{Soundness and completeness.} Interpretations into the kind of deontic action algebras in \Cref{def:dalgebra:heyting:actions} are defined straightforwardly.

Since $\DAL(\IAL)$ is the symmetric counterpart of $\DAL(\IPL)$, the proofs of soundness and completeness for $\DAL(\IAL)$ can be straightforwardly adapted from those of $\DAL(\IPL)$. Consequently, we establish the following theorem.

\medskip
\begin{theorem}\label{prop:completeness:heyting:actions}
	 Let $\mathbb{HB}$ be the class of all HB-deontic-action algebras.
	 It follows that $\varphi$ is a theorem of $\DAL(\IAL)$ iff $\mathbb{HB} \vDash {\varphi \doteq \top}$.
\end{theorem}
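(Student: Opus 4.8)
The plan is to replay, \emph{mutatis mutandis}, the development for $\DAL(\IPL)$ in \Cref{prop:completeness:heyting:formulas}, which itself adapts \Cref{theorem:soundness,theorem:completeness}; interpretations, homomorphisms, congruences and quotients are taken as in \Cref{sec:algebraic-char}, now over the signature of $\DAL$ extended with the operation $\hto \in \Omega_{\sorta\sorta\sorta}$. The only structural change with respect to $\DAL(\IPL)$ is that the \emph{dual} component is intuitionistic: the algebra of actions is a Heyting algebra (so the axiom (LEM) for actions is dropped and H1--H3 of \Cref{def:heyting:algebra} are added, phrased through the $=$ predicate), while the algebra of formulas stays Boolean; the heterogeneous maps $\E,\P,\F$ and the axioms governing them (E1--E2, D1--D3) are left intact, exactly as in \Cref{def:dalgebra:heyting:actions}.

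For soundness I would fix $\DAlgebra \in \mathbb{HB}$ and an interpretation $h:\TAlgebra\to\DAlgebra$ and induct on the length of a $\DAL(\IAL)$-proof. The propositional axioms A1'--A13' and LEM' are mapped to $\top$ because the formula component is Boolean; the action axioms A1--A13 (without LEM) together with H1--H3 are mapped to $\top$ because the action component is a Heyting algebra and these are precisely the defining equations of Heyting algebras from \Cref{def:heyting:algebra} --- recall that, read as formulas of the logic, each such axiom has the shape $\alpha = \beta$, so the claim ``it is mapped to $\top$'' is condition~6 of \Cref{definition:deontic:algebra} applied to the identity $h_{\sorta}(\alpha)=h_{\sorta}(\beta)$ in the Heyting component. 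For E1--E2 and D1--D3 the computations in the proof of \Cref{theorem:soundness} carry over verbatim, since they use only properties 1--6 of $\E,\P,\F$, which are imposed unchanged in \Cref{def:dalgebra:heyting:actions}; in particular E2 is still handled by substitution together with conditions 4 and 5.

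For completeness I would build the Lindenbaum--Tarski HB-deontic-action algebra just as in \Cref{prop:congruence,prop:lindenbaum}: take the term algebra on $\bact$ (sort $\sorta$) and $\prop$ (sort $\sortf$), put $\alpha\cong_{\sorta}\beta$ iff $\vdash\alpha=\beta$ and $\varphi\cong_{\sortf}\psi$ iff $\vdash\varphi\liff\psi$, where $\vdash$ denotes $\DAL(\IAL)$-derivability, and quotient. The congruence property for the new operation $\hto$ follows from E2 (substitution of provable equals), exactly as for $\sqcup,\sqcap,\bar{~}$. In the quotient, the action component is a Heyting algebra because A1--A13 without LEM together with H1--H3 axiomatize the variety of Heyting algebras, and the formula component is a Boolean algebra because of A1'--A13' and LEM'; the induced $\check{\E},\check{\P},\check{\F}$ satisfy conditions 1--6 by the same calculations as in \Cref{prop:lindenbaum}, using D1--D3 and E1--E2. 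Hence the quotient lies in $\mathbb{HB}$. If $\varphi$ is not a theorem then $[\varphi]_{\cong_{\sortf}}\neq\top$ there, so the canonical interpretation $\mathsf{a}_i\mapsto[\mathsf{a}_i]_{\cong_{\sorta}}$, $p_i\mapsto[p_i]_{\cong_{\sortf}}$ witnesses $\mathbb{HB}\nvDash\varphi\doteq\top$; contrapositively $\mathbb{HB}\vDash\varphi\doteq\top$ gives that $\varphi$ is a theorem, and the other direction is soundness.

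The only genuinely new point --- everything else being a transcription of the Boolean case --- is verifying that the action component of the quotient is a \emph{Heyting} algebra: one must check that A1--A13 without LEM plus H1--H3, read through the equality predicate and lifted to $\cong_{\sorta}$-classes, cut out exactly the Heyting algebras, and that $\cong_{\sorta}$ is compatible with $\hto$. This rests on the standard finite equational axiomatization of Heyting algebras underlying \Cref{def:heyting:algebra}, so it is not deep; but it is the step requiring care, symmetric to where the analogous care was needed on the formula side in \Cref{prop:completeness:heyting:formulas}. As a sanity check, \Cref{prop:heyting:actions:ideal} already confirms that $\P,\F$ behave deontically as expected over this Heyting algebra of actions.
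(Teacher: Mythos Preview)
Your proposal is correct and follows exactly the approach the paper takes: the paper's entire argument is the sentence preceding the theorem, namely that $\DAL(\IAL)$ is the symmetric counterpart of $\DAL(\IPL)$ and the proofs of \Cref{prop:completeness:heyting:formulas} adapt straightforwardly. You have spelled out precisely what that adaptation amounts to---keeping the Boolean reasoning on the formula side, replacing the Boolean reasoning on the action side by the Heyting equational theory of \Cref{def:heyting:algebra}, and noting that E1--E2 and D1--D3 go through unchanged because conditions~1--6 of \Cref{definition:deontic:algebra} are imposed verbatim in \Cref{def:dalgebra:heyting:actions}---so your write-up is in fact a faithful expansion of what the paper leaves implicit.
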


\paragraph{Constructive Reasoning and Realization of Actions}

We put forth the argument that an intuitionistic basis for actions is useful when actions are tied to constructions that witness their realizability.
This parallels the standard interpretation of Intuitionistic Logic, where the truth of a formula corresponds to the existence of a proof.
Interpreting actions on an intuitionistic basis is not only of theoretical interest but also hold potential for practical applications, particularly in automated planning~\cite{GNT:2016}.
For example, consider a robot capable of executing various actions.
To perform an action, the robot requires plans\textemdash sequences of basic activities that realize the action.
In such a scenario, we may reject $\bar{a} \sqcup a = \uact$, as the robot might lack a plan to execute action $a$ or a way to determine if $a$ is unrealizable.
% In this scenario, an action denoted by $a \actimplies b$ can be interpreted as a way of converting a plan for $a$ into a plan for $b$.
% Similarly, an action denoted by $\bar{a} = a \actimplies 0$ can be interpreted as a plan that converts a plan for $a$ into a plan that brings about an impossible action for the robot. In other words, $\bar{a}$ provides a way to describing that the robot cannot realize the action denoted by $a$.

Prescriptions often play an important role in planning.
For instance, if the robot is an autonomous vehicle, it must adhere to transit rules.
In this case, $\perm(\alpha)$ indicates that plans for executing $\alpha$ are permitted, while $\forb(\alpha)$ signals that such plans are forbidden.
This perspective highlights how an intuitionistic basis for interpreting actions aligns well with practical considerations in scenarios where realizability and prescriptive constraints on actions are central.
In this respect, $\DAL(\IAL)$ provides a logical framework that is well-suited to addressing real-life issues.

\subsection{Intuitionistic Deontic Action Logic}

Clearly, we can also simultaneously replace the Boolean algebras for actions and formulas for Heyting algebras.
We call the resulting logic $\DAL(\INT)$.
Similarly to the case in \Cref{sec:action-int}, to the best of our knowledge, this logic is the first fully intuitionistic deontic action logic.

% We start by presenting $\DAL(\INT)$ from a syntactic perspective.
% Then, we move on to explaining how to capture this new deontic action logic from an algebraic perspective.

\paragraph{The Logic Itself}

The language and axiomatization of $\DAL(\INT)$ combines those of $\DAL(\IAL)$ and $\DAL(\IPL)$ in \Cref{sec:heyting:formulas,sec:action-int}.
Precisely, it builds actions like in $\DAL(\IAL)$\textemdash using $\hto$ as a primitive connective.
In turn, it builds formulas like in $\DAL(\IPL)$\textemdash using $\to$ as a primitive connective.
The axiomatization of this new logic takes the axioms for actions from $\DAL(\IAL)$ and the axioms for formulas from $\DAL(\IPL)$.
The logic retains the axiomatization of equality, permission, and prohibition of Segerberg's logic, i.e., axioms E1--E2 and D1--D3 in \Cref{dal:axioms}.
The notions of proof and theoremhood are reformulated in the obvious way.

The algebraization of $\DAL(\INT)$ replaces the Boolean algebras of actions and of formulas in the definition of a deontic action algebra for Heyting algebras. This is made precise in the definition of an \emph{intuitionistic deontic action algebra} below.

\medskip
\begin{definition}\label[definition]{def:intuitionistic:dalgebra}
	By an intuitionistic deontic action algebra, we mean an algebra
		$\DAlgebra =
			\langle
				\Algebra[A], \Algebra[H], \E, \P, \F
			\rangle$
		where:
			$\Algebra[A]$ and 
			$\Algebra[F]$ are Heyting algebras, and
			$\E : {|\Algebra[A]| \times |\Algebra[A]| \to |\Algebra[F]|}$,
			$\P : {|\Algebra[A]| \to |\Algebra[F]|}$,
			and
			$\F : {|\Algebra[A]| \to |\Algebra[F]|}$ satisfy the conditions 1--6 in \Cref{definition:deontic:algebra}.
\end{definition}
\medskip

As before, permission and prohibition behave as expected.

\medskip
\begin{proposition}\label[proposition]{prop:intuitionistic:ideal}
	Let $\DAlgebra = \tup{\Algebra[A], \Algebra[F], \E, \P, \F}$ be an intuitionistic deontic action algebra.
	The pre-images $P$ and $F$ of $\top$ under $\P$ and $\F$ are ideals in $\Algebra[A]$ s.t.\ ${{P \cap F} = \{\iact\}}$.
\end{proposition}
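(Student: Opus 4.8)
The plan is to observe that the argument already carried out twice in the excerpt—in \Cref{prop:dal:ideal} for deontic action algebras and in \Cref{prop:ideals-int} for BH-deontic-action algebras—transfers verbatim, so the real work is just to check that every ingredient on which that argument rests remains available in an intuitionistic deontic action algebra.

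First I would recall that the notion of ideal is defined purely in terms of the operations $\sqcup$ and $\sqcap$, and hence reads identically whether $\Algebra[A]$ is a Boolean algebra or a Heyting algebra; in particular the absorption law $a = a \sqcup (a \sqcap b)$, which holds in the Heyting algebra $\Algebra[A]$ as an instance of L4 in \Cref{def:bdl:algebra}, is exactly what the closure-under-meet part of the argument uses. Next I would note that, by \Cref{def:intuitionistic:dalgebra}, the mappings $\E$, $\P$, $\F$ satisfy the very same conditions 1--6 of \Cref{definition:deontic:algebra} invoked in \Cref{prop:dal:ideal}: condition (1) yields $\P(a \sqcup b) = \P(a) \land \P(b)$ and dually (2) yields the prohibition case, so $P$ and $F$ are closed under $\sqcup$ and, via absorption, under meeting with an arbitrary element of $\Algebra[A]$; conditions (3) and (6) give $\P(\iact) = \F(\iact) = \top$ (instantiate (3) at $\iact$ and use that $\iact =_{\Algebra[A]} \iact$, so $(\iact = \iact) = \top$) and, conversely, that any $a$ with $\P(a) = \F(a) = \top$ satisfies $(a = \iact) = \top$ and hence $a =_{\Algebra[A]} \iact$.

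The only genuine difference from the Boolean setting is that the target algebra $\Algebra[F]$ is now a Heyting algebra. But every step of the proof that manipulates elements of $\Algebra[F]$ uses only facts valid in any bounded lattice—namely $1 \land x = x$, so that $\top \land x = \top$ forces $x = \top$, and $x \land y \preccurlyeq x$ and $x \land y \preccurlyeq y$, so that $x \land y = \top$ forces $x = y = \top$. None of these appeals to the law of excluded middle or to the behaviour of $\hto$, so they remain valid in a Heyting $\Algebra[F]$. I do not expect a real obstacle here: the proof is a transcription of the one for \Cref{prop:dal:ideal}, and, exactly as was done for \Cref{prop:ideals-int} and \Cref{prop:heyting:actions:ideal}, it suffices to point out that the earlier argument used only reasoning about ideals together with the defining conditions on $\E$, $\P$, $\F$—all of which are preserved by \Cref{def:intuitionistic:dalgebra}.
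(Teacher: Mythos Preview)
Your proposal is correct and matches the paper's own proof, which simply remarks that the argument of \Cref{prop:ideals-int} (and ultimately \Cref{prop:dal:ideal}) can be reused because it relies only on the defining conditions on $\P$ and $\F$ together with absorption and idempotence, all of which hold in Heyting algebras. You have made the same observation, only more explicitly.
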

\medskip
\begin{proof}
	We can reuse the proof of \Cref{prop:ideals-int} as it only uses the properties of $\P$ and $\F$ plus absorption, and idempotence properties which also hold in Heyting algebras.
\end{proof}

%\paragraph{Soundness and completeness} I
For stating and proving the soundness and completeness of $\DAL(\INT)$, we define interpretations and algebraic validity as in previous sections.

\medskip

\begin{theorem}\label{prop:completeness:heyting}
	Let $\mathbb{ID}$ be the class of all intuitionistic deontic action algebras.
	Then, $\varphi$ is a theorem of $\DAL(\INT)$ iff $\mathbb{ID} \vDash {\varphi \doteq \top}$.
\end{theorem}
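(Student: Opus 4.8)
The plan is to follow exactly the template established in the proofs of \Cref{theorem:soundness,theorem:completeness} and adapted in \Cref{prop:completeness:dal:prop,prop:completeness:heyting:formulas,prop:completeness:heyting:actions}, merging the two modifications that have already been carried out separately: a Heyting algebra of actions (as in $\DAL(\IAL)$) and a Heyting algebra of formulas (as in $\DAL(\IPL)$). Since $\DAL(\INT)$ is, by construction, the ``sum'' of $\DAL(\IAL)$ and $\DAL(\IPL)$, and since the deontic and equality machinery is left untouched, essentially all of the work consists in checking that the earlier arguments combine without interference.

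For soundness, I would argue by induction on the length of a proof that every interpretation $h$ maps each axiom to $\top$ and that modus ponens preserves this property. The axioms for actions are now $\mathrm{L1}$--$\mathrm{L12}$ together with $\mathrm{H1}$--$\mathrm{H3}$ (with no $\mathrm{LEM}$ for actions); their validity in any Heyting algebra is standard \cite{Esakia:2019}. The axioms for the propositional connectives are those of \Cref{axioms:ipl}, whose soundness with respect to Heyting algebras is classical \cite{vanDalen:2008,Troelstra:1988}. The axioms $\mathrm{E1}$--$\mathrm{E2}$ and $\mathrm{D1}$--$\mathrm{D3}$ are retained verbatim, and since $\E$, $\P$, $\F$ are still required to satisfy conditions 1--6 of \Cref{definition:deontic:algebra}, the computations for these axioms from the proof of \Cref{theorem:soundness} carry over unchanged; those computations never used $\mathrm{LEM}$ on either sort. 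Hence $\varphi$ a theorem of $\DAL(\INT)$ implies $\mathbb{ID} \vDash \varphi \doteq \top$.

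For completeness, I would build the Lindenbaum--Tarski algebra. Take the term algebra $\TAlgebra_1$ with $\bact$ as the variables of sort $\sorta$ and $\prop$ as the variables of sort $\sortf$, and define $\alpha \cong_{\sorta} \beta$ iff $\alpha = \beta$ is a theorem of $\DAL(\INT)$, and $\varphi \cong_{\sortf} \psi$ iff $\varphi \liff \psi$ is a theorem of $\DAL(\INT)$. As in \Cref{prop:congruence}, these are congruences, and the quotient $\LTAlgebra = \tup{\Algebra[A], \Algebra[F], \E, \P, \F}$ is an intuitionistic deontic action algebra: the action component is a Heyting algebra because of $\mathrm{L1}$--$\mathrm{L12}$ and $\mathrm{H1}$--$\mathrm{H3}$ (the presentation of \Cref{def:heyting:algebra}), the formula component is a Heyting algebra because of the axioms of \Cref{axioms:ipl} \cite{vanDalen:2008}, and conditions 1--6 of \Cref{definition:deontic:algebra} hold for exactly the reasons given in the proof of \Cref{prop:lindenbaum}, which only invoked $\mathrm{D1}$--$\mathrm{D3}$ and $\mathrm{E1}$--$\mathrm{E2}$. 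Given a non-theorem $\varphi$, we have $[\varphi]_{\cong_{\sortf}} \neq_{\Algebra[F]} \top$; the canonical interpretation $\check{h}\colon \TAlgebra_1 \to \LTAlgebra$ sending each $\mathsf{a}_i$ to $[\mathsf{a}_i]_{\cong_{\sorta}}$ and each $p_i$ to $[p_i]_{\cong_{\sortf}}$ satisfies $\check{h}(\varphi) = [\varphi]_{\cong_{\sortf}} \neq_{\Algebra[F]} \top$, so $\mathbb{ID} \nvDash \varphi \doteq \top$. Contraposition yields the converse direction, completing the proof.

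The only step that demands genuine attention — everything else being a mechanical combination of earlier proofs — is verifying that the \emph{action} component of the quotient is truly a Heyting algebra: one must check that $\mathrm{H1}$--$\mathrm{H3}$ together with $\mathrm{L1}$--$\mathrm{L12}$ axiomatize Heyting algebras in the present signature and that $\cong_{\sorta}$ is compatible with the new operation $\hto$. This follows from the presentation adopted in \Cref{def:heyting:algebra} (taken from \cite{Esakia:2019}), from the inclusion of $\mathrm{H1}$--$\mathrm{H3}$ among the action axioms of $\DAL(\INT)$, and from the substitutivity of provable equalities guaranteed by $\mathrm{E1}$--$\mathrm{E2}$; no other part of the construction differs in substance from the Boolean case treated in \Cref{sec:algebraic-char}.
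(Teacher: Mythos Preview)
Your proposal is correct and follows essentially the same approach as the paper, which simply states that the result is obtained by putting together the intuitionistic parts of the proofs of \Cref{prop:completeness:heyting:formulas,prop:completeness:heyting:actions}. You have spelled out the details of that combination more explicitly than the paper does, but the underlying strategy---soundness by induction on proofs using the Heyting-algebra validity of both action and formula axioms, completeness via the Lindenbaum--Tarski quotient---is identical.
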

\begin{proof}
	We obtain this result by putting together the intuitionistic parts of the proofs of
	\Cref{prop:completeness:heyting:formulas,prop:completeness:heyting:actions}.
%	
%	\begin{description}
%		\item[Soundness.]  The various cases correspond to those in the soundness proofs of \Cref{prop:completeness:heyting:formulas,prop:completeness:heyting:actions}.
%		
%		\item[Completeness.] We need to define a Lindenbaum algebra via suitable congruences $\cong_{\sorta}$ and $\cong_{\sortf}$.
%		The first congruence correspond to that in the completeness part of the proof of \Cref{prop:completeness:heyting:actions}.
%		The second congruence correspond to that in the completeness part of the proof of \Cref{prop:completeness:heyting:formulas}.
%		It is clear that these congruences yield Heyting algebras.
%		The result is obtained by proving the induced operations of $\E$, $\P$, and $\F$ satisfy the conditions 1--6 in \Cref{definition:deontic:algebra}.
%		\qedhere
%	\end{description}
\end{proof}

\paragraph{Intuitionistic Deontic Action Logic in Practice}

The logic $\DAL(\INT)$ may be useful for reasoning in scenarios where there is partial observability about the state of affairs, typical in reinforcement learning and planning.
Consider the following example adapted from \cite{DBLP:conf/iros/CassandraKK96}: a robot is tasked with cleaning an office and needs to reach certain spots.
The robot has sensors to detect doorways, walls, or open spaces, but the sensor information may sometimes be unclear.
Proposition symbols like $\mathsf{north}$, $\mathsf{south}$, $\mathsf{east}$, and $\mathsf{west}$ could represent the robot's orientation, while $\mathsf{doorway}$, $\mathsf{wall}$, and $\mathsf{clear}$ capture the information provided by the sensors.
Uncertainty entails the robot might fail to determine whether there is a doorway ahead or not, i.e., $\mathsf{doorway} \lor \lnot\mathsf{doorway}$ may fail to hold, violating the law of the excluded middle at the level of formulas.
The example can be extended with an intuitionistic model of actions.
For instance, we can consider actions $\mathsf{advance}$ and $\mathsf{rotate}$ for the robot moving forward and rotating, respectively.
As in the example in \Cref{sec:action-int}, the interpretation of these actions is tied to a plan allowing the robot to realize the action.
Once again, a formula $\mathsf{advance} \sqcup \overline{\mathsf{advance}} = \uact$ may fail to hold (violating the law of excluded middle at the level of actions) because the robot lacks a plan due to incomplete sensor information.

Finally, we can spice up this scenario with prescriptions.
For instance, there might be signals in the corridors indicating the robot is not to cross certain doorway, prohibiting such an action.
Again, the robot may lack sufficient information to determine whether $\forb(\mathsf{advance}) \lor \lnot\forb(\mathsf{advance})$ holds or not.

In all the above scenarios, having an intuitionistic deontic action logic like $\DAL(\INT)$ provides a formal framework for analysis in which we can address issues arising from partial observability and prescriptive constraints.

\ifcategories
\section{A Categorical View of DAL}\label{sec:cat}

One of the main benefits of the algebraic view on \DAL is that it paves the way to the use of abstract mathematical  frameworks such as category theory.  Category theory allows one to capture the properties of mathematical, or logical objects, in a very abstract way,  which makes possible to investigate the relations between different formalisms as well as the abstract properties of mathematical objects.  Another interesting feature of category theory is that it enables modular reasoning over logical, or algebraic systems. For instance,  one can use  standard categorical constructions such as limits and colimits  to put together different structures.  In this section we present the category of deontic actions algebras and investigate its basic properties.

For the sake of simplicity, we restrict ourselves to  category corresponding to the non-intuitionistic logic \DAL, but it must be clear that similar results hold for the other algebras, at the end of this section we make some remarks about this.  We introduce the basic notions of category theory needed for this section,  the interested reader is referred to \cite{MacLane98} for a deep introduction to category theory.

\subsection{Preliminaries on Category Theory}
A category is a structure $\mathbf{C} = (\mathcal{O}, \mathcal{A})$, where $\mathcal{O}$ is a collection of \emph{objects} (also denoted $|\mathbf{C}|$) and $\mathcal{A}$ is a collection of \emph{arrows} (also denoted $||\mathbf{C}||$), equipped with: (i) operations $\mathit{dom}$ and $\mathit{cod}$ assigning to any object $a \in |\mathbf{A}|$, an object $\mathit{dom}(a)$ called its domain, and an object $\mathit{cod}(a)$ called its codomain; (ii) the operation $\circ$ that given $f,g \in ||\mathbf{C}||$ such that $\mathit{cod}(f) = \mathit{dom}(g)$,  returns an arrow, denoted $g \circ f$ with $\mathit{dom}(g \circ f) = dom(f)$ and
$\mathit{cod}(g \circ f) = cof(g)$, (iii) for each object $a \in |\mathbf{C}|$ an arrow $id_a \in || \mathbf{C}||$. Furthermore, the following equations hold: $f \circ id_a = f$ and $id_b \circ f = f$; and  $\circ$ is associative.
A very well-known category is given by  the collection of all (small) sets and all the collection of functions, usually named $\mathbf{Set}$. Similarly,  any algebraic structure form a category consisting of the corresponding algebras as object, and the homomorphisms as its arrows.  It is straightforward to define the notion of \emph{subcategory, } a subcategory is \emph{full} is preserves all the arrow of the subcollection of objects in the subcategory.   An \emph{initial object} in a category $\mathbf{C}$ is an object $0 \in |\mathbf{C}|$ such that for any other object $x$ we have a unique arrow $u : 0 \rightarrow x$. For instance, in $\mathbf{Set}$ $\emptyset$ is an initial element. Final elements are the dual concept.  

Functors are mappings between categories, that is, given two categories $\mathbf{C}$ and $\mathbf{D}$ a functor $F$ between $\mathbf{C}$ and $\mathbf{D}$, written $F:\mathbf{C} \rightarrow \mathbf{D}$,  maps (i) every object $a \in |C|$ to an object, written $F(a)$, of $|\mathbf{D}|$, (ii) any arrow $f:a \rightarrow b \in || \mathbf{D} ||$ to an arrow $F(f) : F(a) \rightarrow F(b) \in || D||$, such that it satisfies: $F(id_a) = id_{F(a)}$, for any $a \in | \mathbf{C} |$, and $F(g \circ f) = F(g) \circ F(f)$ for any $f,g \in ||\mathbf{C} ||$.  

Natural transformations are mapping between functors, that is, given two functors $F, G: \mathbf{C} \rightarrow \mathbf{D}$, a natural transformation $\eta$ between $F$ and $G$, noted $\eta : F \xrightarrow{.} G$,  assigns to each object $x \in |\mathbf{C}|$ an arrow $\eta_x : F(x) \rightarrow G(x)$ in $\mathbf{D}$ such that for any arrow $f:a \rightarrow b \in || \mathbf{C}||$ we have that $\eta_b \circ F(f) = G(f) \circ \eta_a$ (this is called \emph{naturally}).  Two functors $F: \mathbf{C} \rightarrow \mathbf{D}$ and $G: \mathbf{D} \rightarrow \mathbf{C}$ are said to be \emph{adjoints} if the sets $\mathit{hom}(F(c),d)$ and $\mathit{hom}(c, G(d))$ are naturally isomorphic, $F$ is called left adjoint of $G$, and $G$ is said to be the right adjoint of $F$.  Adjoints  are common in algebra where the forgetful functor (noted $U$) that sends each algebra to its support set, and the construction of free algebras, which is a functor noted $F$, are adjoints.  A full subcategory is called reflective if the inclusion functor (from the subcategory to the main category) has a left adjoint. 

In any category we can define objects with the so-called universal constructions,  well-know construction in the category $\mathbf{Set}$ are products and coproducts,  which are instances of the more general concepts of  limits and colimits, respectively.    Given an index category $\mathbf{J}$ a diagram in $\mathbf{C}$ is a functor $D: \mathbf{J} \rightarrow \mathbf{C}$, that is, it is a graph on $\mathbf{C}$.  In particular, we can consider, for each object $c$, the constant functor $c : \mathbf{J} \rightarrow \mathbf{C}$ that sends each object $j \in |J|$ to $c$, and each arrow to $id_c$.  A \emph{cone} with tip $c$ is a natural transformation 
$\tau : F \xrightarrow{.} c$. The collection of all cones with tip $c$ form a category,   a colimit (which is characterized up to isomorphisms) is an initial object in this category.  For instance, to obtain coproduct in $\mathbf{Set}$ consider the index category having only two objects, say $x,y$, with only the identity arrows,  any cone maps this object to a tip, and the colimit is one of this cones whose tip, noted $x+y$, has unique arrows to the other possible tips. This formalizes the notion of disjoint union. Limits can be defined similarly and are the dual notion to colimits.

We will use some standard construction of categories, for instance, given two functors $F: \mathbf{C} \rightarrow \mathbf{E}$ and $G: \mathbf{D} \rightarrow \mathbf{E}$ the comma category, denoted $F \downarrow G$, has as objects arrows $f: F(c) \rightarrow G(d)$,  for $c \in |\mathbf{C}|$ and $d \in |\mathbf{D}|$,  and the arrows between objects $f:  F(c) \rightarrow G(d)$ and $g : F(c') \rightarrow G(d')$ are arrows $\alpha : c \rightarrow c'$ and $\beta : d \rightarrow d'$ such that $G(\beta) \circ f = g \circ F(\alpha)$.
 
\subsection{The Category $\mathbf{Dal}$}

Let us  introduce the category of \DAL algebras.
\medskip
\begin{definition} The category $\mathbf{Dal}$ has  the  algebras of \Cref{definition:deontic:algebra} as its objects, and 
the homomorphisms between these algebras as its arrows.
\end{definition}
\medskip
Note that proving that $\mathbf{Dal}$ is already a category is direct, the composition is the composition of homomorphisms, and the identity is the identity homomorphism.
Let us prove some properties of $\mathbf{Dal}$.  First,  note that this is a concrete category since we have a forgetful functor $U: \mathbf{Dal} \rightarrow \mathbf{Set}$ sending each \DAL algebra to its support sets. Formally,  for a \DAL algebra  $\Algebra[D]=\tup{\Algebra[A], \Algebra[F], \E, \P, \F}$ and let $A$ be the support set of $\Algebra[A]$ and $B$ the support set of $\Algebra[B]$,  then we define:
$U(\Algebra[D]) = (A,B)$ for objects, and $U(h)(x) = h(x)$, for homomorphisms.   
It is direct to prove that $U$ is already a functor.
\medskip
\begin{theorem} The mapping $U: \mathbf{Dal} \rightarrow \mathbf{Set}$ is a functor.
\end{theorem}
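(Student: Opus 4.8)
The plan is to verify the three defining clauses of a functor directly; since $U$ is the standard \emph{forgetful} construction, the whole argument is routine bookkeeping. First I would pin down precisely what $U$ does on arrows. Recall from \Cref{section:basics} that a homomorphism $h : \Algebra[D] \to \Algebra[D]'$ between two \DAL algebras is an $S$-indexed family $\{h_{\sorta}, h_{\sortf}\}$ of functions between the corresponding carrier sets; accordingly I set $U(h) = (h_{\sorta}, h_{\sortf})$, i.e.\ $U(h)$ is $h$ with all algebraic structure discarded, regarded as a map between the pairs of sets $U(\Algebra[D])$ and $U(\Algebra[D]')$. The first thing to check is that this assignment is well defined, and this is immediate: a family of functions between sets is an arrow of sets whether or not it happens to be a homomorphism, so $U(h)$ is always a legitimate arrow, with the correct domain and codomain by construction.

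Second I would check preservation of identities. For an object $\Algebra[D] = \tup{\Algebra[A], \Algebra[F], \E, \P, \F}$, the identity arrow $id_{\Algebra[D]}$ of $\mathbf{Dal}$ is the identity homomorphism, which is componentwise the identity function on each carrier; hence $U(id_{\Algebra[D]}) = (id_{|\Algebra[A]|}, id_{|\Algebra[F]|})$, which is exactly $id_{U(\Algebra[D])}$. Third I would check preservation of composition: composition in $\mathbf{Dal}$ is composition of homomorphisms, performed componentwise, so for $f : \Algebra[D] \to \Algebra[D]'$ and $g : \Algebra[D]' \to \Algebra[D]''$ we have $(g \circ f)_s = g_s \circ f_s$ for each $s \in \{\sorta, \sortf\}$, whence $U(g \circ f) = ((g \circ f)_{\sorta}, (g \circ f)_{\sortf}) = (g_{\sorta} \circ f_{\sorta}, g_{\sortf} \circ f_{\sortf}) = U(g) \circ U(f)$, the last step using that composition of arrows of sets is likewise componentwise on pairs. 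This exhausts the verification.

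I expect no genuine obstacle; the only point worth a word of care is the codomain of $U$. Since a \DAL algebra carries two sorts, $U(\Algebra[D])$ is naturally a pair of sets, so strictly $U$ lands in $\mathbf{Set} \times \mathbf{Set}$ (equivalently, one may post-compose with a disjoint-union functor to land in $\mathbf{Set}$ proper); either convention makes the statement and the three checks above go through verbatim, and one should simply fix it once before starting. A secondary, even more trivial, remark is that $\mathbf{Dal}$ really is a category — composites of homomorphisms are homomorphisms, identities are homomorphisms, and associativity together with the unit laws are inherited from $\mathbf{Set}$ — which the paper has already noted just above the statement.
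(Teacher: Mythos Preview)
Your proposal is correct and follows the only reasonable route: the paper itself gives no proof beyond the sentence ``It is direct to prove that $U$ is already a functor,'' so your explicit verification of the three functor axioms is exactly what the paper omits. Your remark about the codomain (a pair of sets, hence really $\mathbf{Set}\times\mathbf{Set}$) is also apt, since the paper defines $U(\Algebra[D]) = (A,B)$ while declaring the target to be $\mathbf{Set}$.
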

\medskip
A more important property of  $\mathbf{Dal}$ is its cocompleteness,  thus it has all colimits: coproducts, pushouts, etc.  Also, it has free objects, that is,  given a set $X$ there is an object $\Algebra[D]$ in $\mathbf{Dal}$ such that   $f: X \rightarrow U(\mathbf{Dal})$, and for any other function $g : X \rightarrow U(\Algebra[D'])$ there is a unique homomorphism $u : \Algebra[D] \rightarrow \Algebra[D']$ such that 
$g = U(u) \circ f$; Lindenbaum algebras, for instance, are free objects. 

The proof mainly follows from a result proven in \cite{Adamek04} for quasivarities.  We adapt that result to our algebras.
\medskip
\begin{theorem}\label{theorem:cocompleteness} The category $\mathbf{Dal}$  is  cocomplete and has free objects
\end{theorem}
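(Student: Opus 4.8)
The plan is to deduce the statement from the general theory of quasivarieties, since by \Cref{pro:dal:qvariety} the objects of $\mathbf{Dal}$ form one. First I would fix the presentation explicitly: over the many-sorted signature $\Sigma$ of \Cref{def:signature}, the class $\DALVariety$ is axiomatized by a set $E$ of equations (the Boolean-algebra laws for each sort together with conditions 1--5 of \Cref{definition:deontic:algebra}) and a set $Q$ of proper quasi-equations (the two implications making up condition 6). The homomorphisms of $\mathbf{Dal}$ are exactly the $\Sigma$-homomorphisms, which automatically preserve every member of $E\cup Q$, so $\mathbf{Dal}$ is the full subcategory of the category $\mathbf{Alg}(\Sigma)$ of all $\Sigma$-algebras and $\Sigma$-homomorphisms cut out by $E\cup Q$. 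This is the only input the rest of the argument needs.

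Next I would invoke the characterization of quasivarieties from \cite{Adamek04}, transported to the finitary many-sorted setting: every quasivariety of $S$-sorted algebras is a locally finitely presentable category. Concretely, $\DALVariety$ is closed in $\mathbf{Alg}(\Sigma)$ under products, subalgebras, and directed colimits; the only point requiring a line of argument is closure under directed colimits, and it holds because an equation or quasi-equation, being a statement about finitely many elements, is transported along the colimit cocone. Local presentability yields completeness and cocompleteness of $\mathbf{Dal}$ at once --- in particular all small coproducts, coequalizers, pushouts, and directed colimits exist.

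For the existence of free objects I would argue that $\DALVariety$ is a reflective subcategory of $\mathbf{Alg}(\Sigma)$: the inclusion $\mathbf{Dal}\hookrightarrow\mathbf{Alg}(\Sigma)$ preserves limits and is accessible (it preserves directed colimits), hence, being a limit-preserving accessible functor between locally presentable categories, it has a left adjoint $R$, the reflector. Composing $R$ with the absolutely free $\Sigma$-algebra functor $\mathbf{Set}\times\mathbf{Set}\to\mathbf{Alg}(\Sigma)$ (the term-algebra construction of \Cref{def:talg}) produces a left adjoint to the forgetful functor $U\colon\mathbf{Dal}\to\mathbf{Set}\times\mathbf{Set}$; its value on a pair of generator sets is by definition the free deontic action algebra. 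More explicitly, $R$ sends an algebra to its quotient by the least congruence validating $E$ and $Q$, so the free algebra on $(\bact,\emptyset)$ is precisely the Lindenbaum--Tarski algebra $\LTAlgebra$ of \Cref{prop:lindenbaum}, which identifies this abstract construction with the one already used for completeness.

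The main obstacle --- and the reason the result is not simply Birkhoff's theorem --- is that $\DALVariety$ is a genuine quasivariety, not a variety: it is not closed under homomorphic images, since the quasi-equation for equality can fail in a quotient. Consequently colimits cannot be computed by equipping the colimit of underlying sets with induced operations, and the HSP machinery does not apply. The real work therefore lies in (i) checking closure under directed colimits so that the locally-presentable theory of \cite{Adamek04} applies in the many-sorted case, and (ii) producing the reflector $R$, which simultaneously supplies coequalizers and free objects; once $R$ is available, pushouts and all other colimits in $\mathbf{Dal}$ are obtained by reflecting the corresponding colimits of $\Sigma$-algebras, exactly as in any reflective subcategory of a cocomplete category.
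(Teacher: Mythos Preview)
Your proposal is correct and follows essentially the same route as the paper: both arguments exploit \Cref{pro:dal:qvariety} and the results of \cite{Adamek04} to exhibit $\mathbf{Dal}$ as a full reflective subcategory of $\mathbf{Alg}(\Sigma)$, then read off cocompleteness and free objects from reflectivity; both also identify the reflector concretely as the quotient by the least congruence forcing membership in $\DALVariety$, with the Lindenbaum--Tarski algebra as the free object on $(\bact,\emptyset)$. The only presentational difference is that you route the existence of the reflector through the locally-presentable/accessible machinery (limit-preserving accessible inclusion between locally presentable categories has a left adjoint), whereas the paper simply asserts the reflector $A\mapsto A/{\cong}$ directly and appeals to \cite{Tarlecki91} for cocompleteness of $\mathbf{Alg}(\Sigma)$ and to \cite{MacLane98} for the inheritance of colimits by reflective subcategories --- your detour is a little longer but has the virtue of not leaving the existence of the ``smallest such congruence'' as an exercise.
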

\begin{proof} The proof follows the ideas of \cite{Adamek04}, we add some comments for our particular case.  For $\mathbf{Dal}$, consider first the category $\mathbf{Alg}(\Sigma)$ the category of all the $\Sigma$-algebras.  This is a category is cocomplete  as proven in \cite{Tarlecki91}.  Now, we  define a functor $F : \mathbf{Alg}(\Sigma) \rightarrow \mathbf{Dal}$ as follows. For every $\Sigma$-algebra 
$A$, $F(A) = A/\cong$ where $\cong$ is the smallest congruence such that $\mathbf{D}/\cong \in |\mathbf{Dal}|$.  For instance,  if $T$ is the term algebra, then $F(T)$ is the Lindenbaum algebra. Furthermore,  if 
$\mathbf{A}$ is already a deontic action algebra then $\cong$ is just the equality.  $F$ is the left adjoint of $I:\mathbf{Dal} \rightarrow \mathbf{Alg}(\Sigma)$, the inclusion function.  Thus $\mathbf{Dal}$ is a reflective subcategory of $\mathbf{Alg}(\Sigma)$, since
$\mathbf{Alg}(\Sigma)$ is cocomplete and has free objects \cite{Tarlecki91},  by properties of reflective subcategories \cite{MacLane98},  we obtain that $\mathbf{Dal}$ is cocomplete and has free objects.
\end{proof}
The cocompleteness of $\mathbf{Dal}$ allows us to put together different algebras,  this is a common procedure when looking at formal systems as objects of categories,  see, for instance,  \cite{Goguen92}.

Now, we provide a categorical view to the duality between \DAL and more set based algebras (as show in \Cref{sec:algebraic-char}).  Particularly, we show that there is a duality between $\mathbf{Dal}$ and a category of topological spaces For doing so, we introduce some basic concepts, the interested reader is referred to \cite{Johnstone82}.
Let $\mathbf{BA}$ be the category of Boolean algebras and $\textbf{Stone}$ the category of Stone spaces, i.e.,  its objects are topological spaces that are totally disconnect, compact and Hausdorff   and its arrows are continuous maps.  Stone duality states that categories $\mathbf{BA}$ and $\textbf{Stone}$ are dually equivalent, i.e.,  there exist functors $S: \mathbf{BA}^{op} \rightarrow \mathbf{Stone}$ and 
$T: \mathbf{Stone}^{op} \rightarrow \mathbf{BA}$ and natural isomorphisms $\eta_b : I_{\mathbf{BA}} \rightarrow TS$,  $\Theta_s : I_{\mathbf{Stone}} \rightarrow ST$.   Where $I_{\mathbf{BA}}: \mathbf{BA} \rightarrow \mathbf{BA}$ and $I_{\mathbf{Stone}}: \mathbf{Stone} \rightarrow \mathbf{Stone}$ are the identity functors.
Intuitively, this states that categories 
$\mathbf{BA}$ and $\mathbf{Stone}$ are,  up to isomorphisms,  dually the same.

For obtaining the same result for the \DAL algebras, first,  we define the corresponding categories based on topological spaces.  Let $\Delta : \mathbf{Stone} \rightarrow \mathbf{Stone}^2$ the diagonal functor, that is, the functor sending each Stone space $s$ to the pair $(s,s)$ and each arrow $f:s \rightarrow s'$ to the pairs of arrows $(f,f)$.  Now, consider the comma category $\Delta \downarrow \Delta$, that is, the category whose objects are pairs of arrows $(f:s\rightarrow s', g: s \rightarrow s')$ where $f,g \in ||\mathbf{Stone}||$ and the arrows are pairs of arrow making the corresponding diagram to commute.  We define the normative Stone Spaces as follows:
\begin{definition} The category $\mathbf{NStone}$ (of Normative Stone spaces) is the full subcategory of $\Delta \downarrow \Delta$ such that for all the objects $(f,g) \in |\mathbf{NStone}|$ the equalizer of $f$ and $g$  is the initial element (in $\mathbf{Stone}$).
\end{definition} 
The condition in the definition above ensures that  every pair of functions composing an object in $\mathbf{NStone}$ are disjoint.  Now, we can prove our extension of Stone duality for \DAL algebras.
\begin{theorem}\label{theorem:duality} Categories $\mathbf{Dal}$ and $\mathbf{NStone}$ are dually equivalent.
\end{theorem}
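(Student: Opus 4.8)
The plan is to lift classical Stone duality $\mathbf{BA}\simeq\mathbf{Stone}^{op}$ — the functors $S$, $T$ and the natural isomorphisms $\eta_b$, $\Theta_s$ recalled above — to the two-sorted, decorated setting of deontic action algebras. Concretely, I would construct a pair of functors
\[
  \mathcal{S}:\mathbf{Dal}^{op}\longrightarrow\mathbf{NStone},
  \qquad
  \mathcal{T}:\mathbf{NStone}^{op}\longrightarrow\mathbf{Dal},
\]
and then exhibit natural isomorphisms $\mathcal{T}\mathcal{S}\cong I_{\mathbf{Dal}}$ and $\mathcal{S}\mathcal{T}\cong I_{\mathbf{NStone}}$, which is exactly what a dual equivalence amounts to.

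To define $\mathcal{S}$ on objects, note that a deontic action algebra $\DAlgebra=\tup{\Algebra[A],\Algebra[F],\E,\P,\F}$ has two Boolean reducts, which $S$ sends to the Stone spaces $S(\Algebra[A])$ and $S(\Algebra[F])$; the real work is to repackage the heterogeneous decoration $(\E,\P,\F)$ as a parallel pair $f,g$ of continuous maps with common source and target, i.e.\ as an object of the comma category $\Delta\downarrow\Delta$. The guiding observation is \Cref{prop:dal:ideal}: $\P$ and $\F$ determine the ideals $P$ and $F$ of $\Algebra[A]$, hence two Boolean quotients of $\Algebra[A]$, hence — after applying $S$ — two closed subspaces of $S(\Algebra[A])$; these, enriched with the information carried by the equality map $\E$ (which by conditions~3--6 of \Cref{definition:deontic:algebra} is intertwined with $\P$ and $\F$), are what one should encode as $f$ and $g$. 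The decisive point is that the disjointness clause $P\cap F=\{\iact\}$ of \Cref{prop:dal:ideal} is precisely the condition that forces the equalizer of $f$ and $g$ to be the initial object of $\mathbf{Stone}$, so that $\mathcal{S}(\DAlgebra)$ indeed lies in $\mathbf{NStone}$. On arrows, a $\mathbf{Dal}$-homomorphism is by definition a pair of Boolean homomorphisms intertwining $\E$, $\P$, $\F$, and $S$ turns such a pair into the two commuting squares that make up a morphism of $\mathbf{NStone}$, so functoriality of $\mathcal{S}$ is inherited from that of $S$.

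Symmetrically, $\mathcal{T}$ sends an object $(f:s\to s',\,g:s\to s')$ of $\mathbf{NStone}$ to the Boolean algebras $T(s)$ and $T(s')$ and reconstructs $\E$, $\P$, $\F$ out of the dualised homomorphisms $T(f)$ and $T(g)$: emptiness of the equalizer of $f$ and $g$ becomes condition~3 of \Cref{definition:deontic:algebra} (equivalently, $P\cap F=\{\iact\}$), while conditions~1--2 and 4--6 fall out of $T(f)$ and $T(g)$ being Boolean homomorphisms. With both functors in hand, the required natural isomorphisms $\mathcal{T}\mathcal{S}\cong I_{\mathbf{Dal}}$ and $\mathcal{S}\mathcal{T}\cong I_{\mathbf{NStone}}$ are assembled componentwise from $\eta_b$ and $\Theta_s$ — on the algebra side using the Stone isomorphism $\mathbf{\varphi}_{\Algebra[B]}$ recalled earlier — after one checks that the componentwise isomorphisms commute with the decoration, which again reduces to the naturality of $\eta_b$ and $\Theta_s$.

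I expect the main obstacle to be exactly this two-way translation of the decoration $(\E,\P,\F)$. Since, unlike $\Algebra[A]$ and $\Algebra[F]$, the maps $\E$, $\P$, $\F$ are not Boolean homomorphisms, the delicate points are (i) encoding them as \emph{genuine} continuous maps of Stone spaces rather than as relations or partial maps, and (ii) proving that \emph{every} object of $\mathbf{NStone}$ — not merely those in the image of $\mathcal{S}$ — yields, via $\mathcal{T}$, operations satisfying \emph{all} of conditions~1--6 of \Cref{definition:deontic:algebra}. Once this bookkeeping is settled, functoriality, naturality, and hence the dual equivalence follow from componentwise Stone duality essentially as in the one-sorted case.
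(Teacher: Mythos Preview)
Your plan diverges from the paper at the crucial step, and the detour through \Cref{prop:dal:ideal} would not give an equivalence. The paper never passes through ideals: for $\DAlgebra=\tup{\Algebra[A],\Algebra[F],\E,\P,\F}$ it simply sets $N(\DAlgebra)=(S(\P),S(\F))$, applying the Stone functor $S$ \emph{directly} to $\P$ and $\F$ as arrows to obtain the parallel pair $S(\Algebra[F])\to S(\Algebra[A])$; on morphisms it takes $(\Delta S(f_a),\Delta S(f_b))$, the reverse functor $M$ applies $\mathit{Clop}$ componentwise, and the natural isomorphisms are lifted verbatim from one-sorted Stone duality. The whole argument is the classical duality run twice, with the equalizer condition in $\mathbf{NStone}$ matched against the disjointness clause of \Cref{definition:deontic:algebra}.

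Your ideal-and-quotient encoding, by contrast, is lossy. The ideal $P=\P^{-1}(\top)$ determines $\P$ only when $\Algebra[F]\cong\mathbf{2}$; for a general $\Algebra[F]$, non-isomorphic decorations $\P:|\Algebra[A]|\to|\Algebra[F]|$ can share the same preimage of $\top$, and the equality map $\E$ cannot supply the missing data, since conditions~3--6 tie $\E$ to $\P$ and $\F$ only through $\top$-values and inequalities. Hence the functor $\mathcal{S}$ you sketch would identify non-isomorphic objects of $\mathbf{Dal}$ and could not be half of an equivalence. There is also a shape mismatch: two closed subspaces of $S(\Algebra[A])$ are not the data of a parallel pair $s\to s'$ in $\mathbf{Stone}$, and the space $S(\Algebra[F])$ --- which is one of the two Stone spaces in every object of $\mathbf{NStone}$ --- never enters your construction of $(f,g)$. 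You are right to flag that $\P$ and $\F$ are not Boolean homomorphisms (they send $\sqcup$ to $\land$), and the paper's proof is admittedly silent on why $S$ may nonetheless be applied to them; but the remedy is to justify that direct dualisation, not to discard everything about $\P$ and $\F$ beyond their top-preimages.
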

\begin{proof}
For proving this first we define the functors $N: \mathbf{Dal}^{op} \rightarrow \mathbf{NStone}$ and $M: \mathbf{NStone}^{op} \rightarrow \mathbf{Dal}$.  $N$ is defined as follows.
Without loss of generality,   we fix a \DAL algebra  $\Algebra[D] = \tup{\Algebra[A], \Algebra[F], \E, \P, \F}$,  consider the Stone spaces $S(A)$ and $S(B)$ given by the Stone function (as explained above).  As $S$ is a 
contravariant functor we have continuous functions $S(\P)$ and $S(\F)$, hence we define $N(\Algebra[D]) = (S(\P),  S(\F))$.  For arrows, let
$f : \Algebra[D] \rightarrow \Algebra[D']$ (where $\Algebra[D'] = \tup{\Algebra[A'], \Algebra[F'], \E', \P', \F'}$),  which is defined by two homomorphisms $f_a$ and $f_b$. 
then $N(f) : N(\Algebra[D]) \rightarrow N(\Algebra[D'])$ is given by $(\Delta(S(f_a)), \Delta(S(f_b)))$.  

On the other hand,  the functor $M : \mathbf{NStone} \rightarrow \mathbf{Dal}$ is defined as follows.   Consider an object $(f,g) \in |\mathbf{NStone}|$ thus $f: s \rightarrow s'$ and $g:s \rightarrow s'$
where $s,s'$ are Stone spaces and $f,g$ continuous functions such that the equalizer of them is the initial object.  Furthermore, by Stone duality,  we have a contravariant functor $\mathit{Clop}: \mathbf{Stone}^{op} \rightarrow \mathbf{BA}$ (that smaps any Stone space to the Boolean algebra of its clopen sets).  Thus,   we can consider the Boolean  algebras $\mathit{Clop}(s)$ and $\mathit{Clop}(s')$, and Boolean homomorphisms $\mathit{Clop}(f): \mathit{Clop}(s') \rightarrow \mathit{Clop}(s)$
and $\mathit{Clop}(g): \mathit{Clop}(s') \rightarrow \mathit{Clop}(s)$ furthermore $f(x) \cap f(y) = \emptyset$ for any $x \in \mathit{Clop}(s')$, hence $(\mathit{Clop}(s), \mathit{Clop}(s'), \E,  \mathit{Clop}(f), \mathit{Clop}(g))$, being $\E(x,y) = \mathcal{U}(s)$ iff $x=y$, otherwise $\E(x,y) = \emptyset$.  Proving this is already a functor is direct since $\mathit{Clop}$ is a functor.

Now,  let us show that there are natural isomorphisms $\varepsilon: I_{\mathbf{NStone}} \rightarrow NM$ and $\eta : I_{\mathbf{Dal}} \rightarrow MN$.   Let $(f:s \rightarrow s',g:s \rightarrow s') \in |\mathbf{NStone}|$,
by stone duality we know that there is a $i :s \rightarrow S(\mathit{Clop}(s))$ that is a homeomorphism between the two topological spaces,  similarly we have an homeomorphism $j: s' \rightarrow S(\mathit{Clop}(s'))$. Thus $(i,j) : (f,g) \rightarrow S(Cl((f,g)))$ gives us the corresponding isomorphism $\eta_{(f,g)}$.  For $\epsilon$ the proof is similar, for a  \DAL algebra  $\Algebra[D] = \tup{\Algebra[A], \Algebra[F], \E, \P, \F}$ we consider the Boolean algebras $\mathit{Clop}(S(\Algebra[A]))$ and $\mathit{Clop}(S(\Algebra[F]))$ which by Stone duality are isomorphic to $\Algebra[A]$ and $\Algebra[F]$, similarly for functions 
$\P$ and $F$ which we obtain functions $\mathit{Clop}(S(\P))$ and  $\mathit{Clop}(S(\F))$ which are the as the original up to isomorphism, putting all this together we obtain the isomomorphism $\varepsilon_{\Algebra[D]}$. Naturality of $\varepsilon$ and $\eta$ follows from the naturality of the corresponding natural isomorphism of Stone duality.
\end{proof}

We end this section we some remarks about the categories corresponding to the other algebras defined in this paper. We have show that category $\mathbf{Dal}$ exhibits some nice properties, one of them, cocompleteness, provides a mechanism to put together different deontic action algebras, thus making possible to modularize the reasoning about normative systems in an algebraic way.  Furthermore,   we have shown that Stone duality can be extended to our algebras allowing us to obtain an equivalence between our algebras and certain topological spaces.  For the other logics described in this paper similar results hold.  Note that the proof of \Cref{theorem:cocompleteness}  uses basic facts of the $\Sigma$-algebras and properties of quasivarieties,  which can be also be applied to the rest of the algebras presented in this paper.  Furthermore,  for the intuitionistic logics presented earlier we can use the Esakia duality \cite{Esakia19} between Heyting algebras and Esakia spaces, which can be termed as an intuitionistic version of Stone duality. Using Esakia duality the same constructions as in \Cref{theorem:duality} can be used to provide duality results for $\DAL(\INT)$,  or any of the other logics.

%The first important property is that these algebras are cocomplete and posees
\fi
\section{Final Remarks}\label{section:conclusion}

We developed an algebraic framework for Deontic Action Logic (\DAL) and its variations using deontic action algebras.
These structures consist of two Boolean algebras connected by operations that capture key aspects of permission and prohibition.
We showed that the algebraic characterization is
adequate by proving soundness and completeness theorems.
We discussed the advantages of our algebraic approach for modelling scenarios through concrete examples.
Our algebraic treatment of \DAL can be thought of as an abstract version of deontic action logics which can
be used to establish connections between deontic action logics and areas such as topology, category theory, probability, etc.
Moreover, the framework is modular.
In \Cref{section:new:dals}, we showed how replacing the underlying algebraic structures allows for the development of new logics, highlighting the flexibility and extensibility of our approach.

We introduced deontic action algebras in~\cite{CCFA:2021}. In this article, we extended our previous work and explored the inclusion of Heyting algebras to obtain intuitionistic behavior.
Our approach accommodates such a formulation in a very simple manner, paving the way for interesting future work.
Several alternative algebraic structures for actions and propositions warrant further investigation.
In particular, we aim to characterize action composition (denoted by $;$), and action iteration (denoted by $*$).
These operations are not foreign in deontic reasoning.
The work in~\cite{Meyer:88} on Dynamic Deontic Logic ($\mathsf{DDL}$) was one of the first in considering a deontic logic containing action composition.
This treatment, however, is not without challenges~\cite{Anglberger:08}.
Regarding action iteration, in~\cite{BroersenThesis}, Broersen pointed out that dynamic deontic logics can be divided into: (i) \emph{goal norms}, where prescriptions over a sequence of actions only take into account the last action performed; or (ii) \emph{process norms}, where a sequence of actions is permitted/forbidden if and only if every action in the sequence is permitted/forbidden.
To the best of our knowledge, no extensions of \DAL incorporating action composition or iteration have been explored.
The framework of deontic action algebras is well-suited for exploring such extensions, as deontic action algebras can be straightforwardly modified to admit action composition and iteration.
More precisely, we may consider deontic action algebras $\langle \Algebra[A], \Algebra[F], \P, \F, \E \rangle$ where the algebra $\Algebra[A] = \tup{A, +, ;, *}$ of actions is a Kleene algebra (see~\cite{Kozen:90}).
Kleene algebras enjoy some nice properties, e.g.,
they are quasi-varieties, and they are complete w.r.t.\ equality of regular expressions (see~\cite{Kozen:91}).
Similarly, one can extend deontic action algebras with other interesting algebras, e.g., relation algebras (see~\cite{MadduxBook}) that most notably provide action converse.
We leave it as further work to investigate the properties of the
operators $\P$ and $\F$ in these new algebraic settings.

Beyond Boolean and Heyting algebras, it is also interesting to explore alternative algebras for propositions.
Some immediate examples include: BDL-algebras, semi-lattices, and metric spaces.
This may lead to the design of deontic logics that are not logics of normative propositions but logics of norms instead---a distinction was already noted by von~Wright in~\cite{vonWright:1999} and Alchourr\'on in~\cite{Alchourron:69,Alchourron:1971}.
Both \SDL and \DAL are logics of normative propositions insofar as they assign truth values to formulas in the logic.
In contrast, logics of norms can express prescriptions that do not carry with them truth values.
To accommodate such logics, we can generalize the interpretation of formulas to other algebraic structures.
For instance, adopting a meet semi-lattice as the algebra of propositions allows for norms to be combined while also accounting for potential contradictions among them without necessarily requiring norms to be true or false.
Of course, several other algebraic frameworks could serve this purpose as well.

The level of flexibility in \Cref{section:new:dals} suggests a possible connection between \DAL and combining logics~\cite{sep-logic-combining}.
Building on the algebraic treatment of \DAL, we can derive a characterization in category-theoretic terms~\cite{MacLane98}, which naturally connects to the framework of institutions introduced by Goguen and Burstall~\cite{GoguenBurstall84,GoguenBurstall92}.
Institutions provide an abstract framework for model theory that captures the essence of logical systems and their combinations, offering a unified perspective on how logics can be integrated through categorial constructions.
Many of these methods are unified within the algebraic fibring approach introduced by Sernadas, Sernadas, and Caleiro~\cite{Sernadas1999}, which significantly enhances the versatility of logic combination through universal categorial constructions.
This approach extends the range of logics that can be combined beyond modal logics, demonstrating a fruitful interplay between the algebraic and categorial perspectives.
Altogether, these frameworks establish a strong link between \DAL and broader methodologies for combining logics, reinforcing the relevance of algebraic and categorial approaches to logical systems.

Finally, the algebraization of DAL in this article provides a rich foundation for studying deontic action logics dynamically, \emph{\'a la} to Public Announcement Logic~\cite{Plaza2007}.
In particular, we note that the algebraic semantics of deontic operators induces a restriction on the algebra of formulas.
Such a restriction resembles the model update operators in dynamic logics, suggesting an interesting parallel and potential applications in modeling evolving normative systems. Exploring these connections, alongside the broader algebraic and categorial perspectives outlined above, offers a promising direction for future research.

\paragraph{Conflict of Interest}

The authors declare that there are no conflicts of interest regarding the publication of this paper.

\paragraph{Data Availability Statement}

No new data were created or analyzed in this paper.

%%% Local Variables:
%%% mode: latex
%%% TeX-master: "article"
%%% End:

% \bmhead{Acknowledgements}

% Acknowledgements are not compulsory. Where included they should be brief. Grant or contribution numbers may be acknowledged.

%%%%%%  THE BIBLIOGRAPHY
%%%%%%  See the examples and format yours according to them

\bibliography{bibliography}

%%%%%%  THE APPENDIX
%\input{preliminaries.tex}

\end{document}